\newtheorem{thm}{Theorem}[section]
\newtheorem*{thm*}{Theorem}
\newtheorem*{prb*}{Problem}
\newtheorem*{ax*}{Axiom}
\newtheorem*{clm*}{Claim}
\newtheorem*{conj*}{Conjecture}
\newtheorem{cor}[thm]{Corollary}
\newtheorem{df}[thm]{Definition}
\newtheorem*{df*}{Definition}
\newtheorem*{ex*}{Example}
\newtheorem{lem}[thm]{Lemma}
\newtheorem*{lem*}{Lemma}
\newtheorem*{pos*}{Postulate}
\newtheorem{pr}[thm]{Proposition}
\newtheorem*{pr*}{Proposition}
\newtheorem*{qu*}{Question}
\newtheorem{rem}[thm]{Remark}
\newtheorem*{rem*}{Remark}
\newcommand{\cE}[0]{\mathcal{E}}
\newcommand{\E}[0]{\mathbb{E}}
\newcommand{\cL}[0]{\mathcal{L}}
\newcommand{\Pj}[0]{\mathbb{P}}
\newcommand{\R}[0]{\mathbb{R}}
\newcommand{\one}[0]{\mathbbm{1}}
\newcommand{\be}[0]{\beta}
\newcommand{\ga}[0]{\gamma}
\newcommand{\de}[0]{\delta}
\newcommand{\ep}[0]{\varepsilon}
\newcommand{\eph}[0]{\frac{\varepsilon}{2}}
\newcommand{\la}[0]{\lambda}
\newcommand{\Te}[0]{\Theta}
\newcommand{\om}[0]{\omega}
\newcommand{\Om}[0]{\Omega}
\newcommand{\si}[0]{\sigma}
\newcommand{\nin}[0]{\not\in}
\newcommand{\subeq}[0]{\subseteq}
\newcommand{\iy}[0]{\infty}
\newcommand{\rc}[1]{\frac{1}{#1}}
\newcommand{\prc}[1]{\pa{\rc{#1}}}
\newcommand{\fc}[2]{\frac{#1}{#2}}
\newcommand{\sfc}[2]{\sqrt{\frac{#1}{#2}}}
\newcommand{\pf}[2]{\pa{\frac{#1}{#2}}}
\newcommand{\dd}[2]{\frac{d #1}{d #2}}
\newcommand{\nb}[0]{\nabla}
\newcommand{\ab}[1]{\left| {#1} \right|}
\newcommand{\an}[1]{\left\langle {#1}\right\rangle}
\newcommand{\ba}[1]{\left[ {#1} \right]}
\newcommand{\bc}[1]{\left\{ {#1} \right\}}
\newcommand{\pa}[1]{\left( {#1} \right)}
\newcommand{\ve}[1]{\left\Vert {#1}\right\Vert}
\newcommand{\set}[2]{\left\{{#1}:{#2}\right\}}
\newcommand{\ol}[1]{\overline{#1}}
\newcommand{\wt}[1]{\widetilde{#1}}
\newcommand{\wh}[1]{\widehat{#1}}
\newcommand{\Ent}{\operatorname{Ent}}
\newcommand{\Gap}{\operatorname{Gap}}
\newcommand{\id}{\mathrm{id}}
\newcommand{\poly}{\operatorname{poly}}
\newcommand{\polylog}{\operatorname{polylog}}
\newcommand{\rank}{\operatorname{rank}}
\newcommand{\Tr}[0]{\operatorname{Tr}}
\newcommand{\TV}[0]{\textup{TV}}
\newcommand{\Var}[0]{\operatorname{Var}}
\providecommand{\cal}[1]{\mathcal{#1}}
\renewcommand{\cal}[1]{\mathcal{#1}}
\newcommand{\pull}[9]{
#1\ar@/_/[ddr]_{#2} \ar@{.>}[rd]^{#3} \ar@/^/[rrd]^{#4} & &\\
& #5\ar[r]^{#6}\ar[d]^{#8} &#7\ar[d]^{#9} \\}
\newcommand{\cmp}[9]{
\xymatrix{
#1 \ar[r]^{#4}{#5} \ar@/_2pc/[rr]^{#8}_{#9} & #2 \ar[r]^{#6}_{#7} & #3
}
}
\newcommand{\ha}[1]{\ar@{^(->}[#1]}
\newcommand{\ls}[1]{\ar@{-}[#1]}
\newcommand{\sj}[1]{\ar@{->>}[#1]}
\newcommand{\aq}[1]{\ar@{=}[#1]}
\newcommand{\acir}[1]{\ar@{}[#1]|-{\textstyle{\circlearrowright}}}
\newcommand{\acil}[1]{\ar@{}[#1]|-{\textstyle{\circlearrowleft}}}
\newcommand{\ard}[1]{\ar@{.>}[#1]}
\newcommand{\mt}[1]{\ar@{|->}[#1]}
\newcommand{\inm}[1]{\ar@{}[#1]|-{\in}}
\newcommand{\inr}{\ar@{}[d]|-{\rotatebox[origin=c]{-90}{$\in$}}}
\newcommand{\inl}{\ar@{}[u]|-{\rotatebox[origin=c]{90}{$\in$}}}
\newcommand{\sumo}[2]{\sum_{#1=1}^{#2}}
\newcommand{\prodo}[2]{\prod_{#1=1}^{#2}}
\newcommand{\cupo}[2]{\bigcup_{#1=1}^{#2}}
\newcommand{\beq}[1]{\begin{equation}\llabel{#1}}
\newcommand{\eeq}[0]{\end{equation}}
\newcommand{\bal}[0]{\begin{align*}}
\newcommand{\eal}[0]{\end{align*}}
\newcommand{\ban}[0]{\begin{align}}
\newcommand{\ean}[0]{\end{align}}
\newcommand{\fixme}[1]{{\color{red}#1}}
\newcommand{\llabel}[1]{\label{#1}\text{\fixme{\tiny#1}}}
\newcommand{\arxiv}[1]{\url{http://www.arxiv.org/abs/#1}}
\newcommand{\vocab}[1]{\textbf{#1}} 
\DeclareFontFamily{U}{wncy}{}
    \DeclareFontShape{U}{wncy}{m}{n}{<->wncyr10}{}
    \DeclareSymbolFont{mcy}{U}{wncy}{m}{n}
    \DeclareMathSymbol{\Sh}{\mathord}{mcy}{"58} 
\newcommand{\sech}[0]{\operatorname{sech}}
\newcommand{\prl}[0]{\parallel}
\definecolor{purple}{rgb}{0.5, 0.0, 0.5}
\newcommand{\hlnote}[1]{\textcolor{red}{\textbf{HL:} #1}}
\newcommand{\fnote}[1]{\textcolor{purple}{\textbf{FK:} #1}}
\newcommand{\CP}[0]{C_{\textup{P}}}
\newcommand{\CMLS}[0]{C_{\textup{MLS}}}
\newcommand{\BC}[0]{\{\pm 1\}^n}
\newcommand{\sBC}[0]{\sum_{\si\in \{\pm 1\}^n}}
\newcommand{\Grid}[0]{\textup{Grid}_{L,\eta}^d}
\newcommand{\ZJh}[0]{Z_{J,h}}
\newcommand{\Zjoint}[0]{Z_{J,h}^{\textup{joint}}}
\newcommand{\Zproj}[0]{Z_{J^\prl,J^\perp,h}}
\newcommand{\pproj}[0]{p_{J^\prl,J^\perp,h}}
\newcommand{\Pproj}[0]{P_{J^\prl,J^\perp,h}}
\newcommand{\isingexp}[2]{\rc 2 \an{\si, #1 \si} + \an{#2, \si}}
\newcommand{\ising}[2]{\exp\pa{\isingexp{#1}{#2}}}
\newcommand{\eJhs}[0]{\exp\pa{\isingexp{J}{h}}}
\newcommand{\Jpall}[0]{J^\perp_{\textup{all}}}
\newcommand{\Pst}[0]{P^{\textup{st}}}
\newcommand{\pst}[0]{p^{\textup{st}}}
\newcommand{\op}[0]{\textup{op}}
\begin{document}

\title{Sampling Approximately Low-Rank Ising Models: \\ MCMC meets Variational Methods}

\author{Frederic Koehler\thanks{Department of Computer Science, Stanford University. Supported in part by NSF award CCF-1704417, NSF Award IIS-1908774, and N. Anari's Sloan Research Fellowship.}, \quad Holden Lee\thanks{Department of Mathematics,
Duke University.
}, \quad Andrej Risteski\thanks{ Machine Learning Department,
Carnegie Mellon University.}}

\date{\today}
\maketitle
\begin{abstract}
    We consider Ising models on the hypercube with a general interaction matrix $J$, and give a polynomial time sampling algorithm when all but $O(1)$ eigenvalues of $J$ lie in an interval of length one, a situation which occurs in many models of interest. This was previously known for the Glauber dynamics when \emph{all} eigenvalues fit in an interval of length one; however, a single outlier can force the Glauber dynamics to mix torpidly. Our general result implies the first polynomial time sampling algorithms for low-rank Ising models such as Hopfield networks with a fixed number of patterns and Bayesian clustering models with low-dimensional contexts, 
    and greatly improves the polynomial time sampling regime for the antiferromagnetic/ferromagnetic Ising model with inconsistent field on expander graphs. It also improves on previous approximation algorithm results based on the naive mean-field approximation in variational methods and statistical physics. 
    
    Our approach is based on a new fusion of ideas from the MCMC and variational inference worlds. As part of our algorithm, we define a new nonconvex variational problem which allows us to sample from an exponential reweighting of a distribution by a negative definite quadratic form, and show how to make this procedure provably efficient using stochastic gradient descent. On top of this, we construct a new simulated tempering chain (on an extended state space arising from the Hubbard-Stratonovich transform) which overcomes the obstacle posed by large positive eigenvalues, and combine it with the SGD-based sampler to solve the full problem. 
\end{abstract}

    


\section{Introduction}
An Ising model is a probability distribution on the hypercube $\{\pm 1\}^n$ of the form
\[ p_{J,h}(\si) = \frac{1}{Z} \exp\left(\rc2 \an{\si, J\si} + \an{h, \si}\right) \]
where the normalizing constant $Z$ is known as the \emph{partition function}. 
The closely related problems of estimating the partition function $Z$ and sampling from the Ising model are
fundamental computational problems, both due to their central theoretical significance as well a plethora of applications---see for example \cite{mezard2009information,talagrand2010mean,wainwright2008graphical,jerrum1996markov,hinton2012practical,murphy2012machine}.
While computing the partition function $Z$ exactly is $\#\mathsf{P}$-hard \citep{jerrum1993polynomial}, and approximating it is $\mathsf{NP}$-hard (see e.g., \cite{sly2012computational,galanis2016inapproximability}), a vast amount of work has been done to understand and characterize situations where this task is computationally tractable. 

One of the dominant approaches in both theory and practice to sample from such models is the \emph{Glauber dynamics} or \emph{Gibbs sampler}. This is a Markov chain that at each step, resamples the spin of one coordinate from its conditional distribution. In general, this chain is expected to mix under appropriate assumptions on the weakness of the interactions in the model (e.g., presence of correlation decay, or uniqueness of the corresponding Gibbs measure on the tree). In certain special cases, the point at which the Glauber dynamics stops mixing rapidly is also exactly where sampling becomes hard: famously, this is the case for the antiferromagnetic Ising model on the worst-case $d$-regular graph (see e.g., \cite{sly2012computational,chen2020rapid}). However, this is not the case in general---there are many examples where Glauber dynamics fails to mix but other methods succeed to approximate the partition function and/or sample; see e.g., \cite{jerrum1993polynomial,borgs2020efficient,risteski2016calculate,guo2017random} for a few examples. 

\emph{Variational methods} are 
the main alternative to MCMC (Markov Chain Monte Carlo) methods in practice. In general, variational methods attempt to reduce to problem of computing the partition function to solving an optimization problem---see e.g., \cite{wainwright2008graphical,mezard2009information} for further background. Importantly, the strengths and limitations of variational methods are complementary to those of Glauber dynamics. Unlike Markov chain methods, variational methods are usually based on solving for an \emph{approximation} of the true distribution, and hence may only achieve a comparatively crude approximation to the true distribution---a successful variational approximation may only output a distribution with KL divergence or Wasserstein distance $o(n)$ as opposed to $o(1)$ for the output of a rapidly mixing Markov chain. On the other hand, variational methods often work in both high and low-temperature settings and are closely related to textbook methods for solving low-temperature models, such as the Ising model on a high-dimensional lattice, the Curie-Weiss model, and the Sherrington-Kirkpatrick model \citep{talagrand2010mean,mezard2009information,parisi1988statistical}.

To give a concrete example with strong theoretical guarantees,  the \emph{naive mean-field approximation}, which corresponds to approximating the Gibbs measure by a (small mixture of) product measure(s), is probably the most well-known variational method. It has been established that this approximation is in various senses accurate whenever the interaction matrix $J$ has quantitatively low rank (more precisely, when $\|J\|_F^2 = \sum_i \lambda_i(J)^2 = o(n)$): see \cite{basak2017universality,eldan2018gaussian,eldan2018decomposition,eldan2020taming,augeri2021transportation} for a few of the works in this area. This condition essentially covers all of the main examples of Ising models where the mean-field approximation is known to be accurate, and for these models it covers both low and high temperature regimes
(i.e., both strong and weak couplings). 
Correspondingly, there are approximation algorithms connected with the naive mean-field approximation \citep{risteski2016calculate,jain2018mean,jain2018vertex,jain2019mean} which approximate $\log Z$ within $o(n)$ additive error in subexponential time under this assumption (with improving runtime as the rank decreases, and with roughly matching computational lower bounds). 

In this work, we seek to achieve the \emph{best of both worlds} and combine the strengths of Glauber dynamics and variational inference. Recently, it was shown \citep{eldan2020spectral,anari2021aentropic} that the Glauber dynamics rapidly mix whenever the eigenvalues of $J$ all lie  within an interval of length $1$, which is tight due to the example of the Curie-Weiss model \citep{levin2017markov}. Our main result shows that by using a more sophisticated algorithm, we can sample in polynomial time from any Ising model with a constant number of eigenvalues outside of this interval, a situation which occurs in many examples of interest. To state our result, first note that without loss of generality, we can recenter the bulk of the eigenvalues to $[0,1]$ by adding a multiple of the identity to $J$. We provide an algorithm that samples from an Ising distribution with $d_+$ eigenvalues bigger than $1-1/c, c \in (1,\infty]$, and $d_-$  negative eigenvalues $-\lambda_1, \dots, -\lambda_{d_-}$ in time $(n\ve{J}_\op)^{O(d_+)}e^{O(c(\lambda_1 + \dots + \lambda_{d-}))}$, as well as (multiplicatively) approximate the partition function.

In the special case of low-rank Ising models where the naive mean-field approximation is accurate, this gives a roughly comparable runtime to the previous approximation algorithms for estimating $\log Z$ (e.g., \cite{jain2019mean}), while allowing us both to approximate $Z$ much more accurately (within an arbitrary multiplicative factor) and also to sample; see Remark~\ref{rem:speed-mf} for further discussion.  Our result also allows us to sample from models which are genuinely high-rank, for example the SK model with ferromagnetic interactions in the regime where the bulk has diameter at most $1$ (see Section~\ref{sec:applications}) in which case the naive mean-field approximation is known to be very inaccurate (see e.g., \cite{thouless1977solution,jain2019mean}). Our general result also continues a long tradition of seeking fixed-parameter tractable algorithms for optimization problems that are ``approximately'' low rank \citep{frieze1996regularity, oveis2013new}.
Our techniques take inspiration from both variational and MCMC approaches. 
We describe them in detail later (see Section~\ref{sec:technical-overview}),
but at a high-level our result is based on two key innovations: (1) for positive outlier eigenvalues, a rigorous version of the popular \emph{simulated annealing}~\citep{lovasz2006simulated} and \emph{tempering} heuristics~\citep{marinari1992simulated}, based in part on a decomposition of the measure into a mixture of high-temperature Ising models using the Hubbard-Stratonovich transform \citep{hubbard1959calculation}, and (2) for negative eigenvalues, a sampling approach based on importance sampling combined with the efficient solution of a related fixed point equation, which is done by constructing an appropriate (nonconvex) variational problem and running stochastic gradient descent.
The key ideas behind both steps are clean and we believe the techniques may be useful for solving other sampling problems of interest.

In addition to this, we provide representative applications of our results to a diverse set of tasks: 
First, we give an algorithm to sample Ising models (antiferromagnetic or ferromagnetic, and potentially with inconsistent external fields) on expander graphs up to inverse temperature $\beta = O(1/\lambda)$ where $\lambda$ is the second largest eigenvalue. This is outside the tree uniqueness regime; note that on general graphs, antiferromagnetic Ising is NP-hard past this threshold \citep{sly2012computational}. Also, even when the model is ferromagnetic, inconsistent external fields make the sampling problem \#BIS-hard in general\footnote{Our results work in an expanded ``high temperature'' regime; in contrast algorithms for different \#BIS-hard problems work in a low temperature regime by expanding around the ground states \citep{jenssen2020algorithms, chen2021fast}, so these approaches should be naturally complementary when they both apply.}.
    Relatedly, we give the first results for sampling high-temperature Sherrington-Kirkpatrick models with strong ferromagnetic interactions.

We also show how to sample from a Hopfield network \citep{hopfield1982neural} with a fixed number of patterns in polynomial time. As an example Bayesian statistics application, we show how to sample from posteriors of mixtures of two Gaussians with symmetric means in fixed dimension. This provides complementary results to \citep{mou2019sampling}, who consider the same setting in an arbitrary dimension, but instead consider an easier task: sampling from the so-called power posterior of such a mixture---which is derived by weighing the prior substantially more in the Bayes formula for the posterior. More generally, we show how to sample from a regime of a more sophisticated clustering model (the Contextual Stochastic Block Model) with low-dimensional contexts.

\subsection{Main results}
Suppose that $J$ is 
a symmetric matrix. 
We are interested in and computing the partition function $Z_{J,h}$ and sampling from the distribution $P_{J,h}$ over $\{\pm 1\}^n$ given by
\begin{align}\label{e:ising}
p_{J,h}(\si) &= \fc{\eJhs}{Z_{J,h}}, & 
\text{where }Z_{J,h} &= \sum_{\si\in \{\pm 1\}^n} \eJhs.
\end{align}
Our main theorem is the following.
\begin{thm}\label{t:main}
Let $c\in (1,\iy]$, $\ep\in (0,1)$. 
Suppose that $J$ is a symmetric matrix such that (1) $J$ has $d_+$ eigenvalues that are greater than $1-\rc c$, and (2) its negative eigenvalues are $-\la_1,\ldots, -\la_{d_-}$.
\begin{enumerate}
     \item There is an algorithm (Algorithm~\ref{a:Z-ising}) that with probability $\ge 1-e^{-n}$, gives a $e^\ep$-multiplicative approximation to $\ZJh$ in time $O\pa{(\ve{J}_\op n)^{O(d_++1)} e^{O(c(\la_1+\cdots + \la_{d_-}))}\big/\ep^2}$.
    \item There is an algorithm (Algorithm~\ref{a:st}) to sample from a distribution $\ep$-close in TV-distance to $P_{J,h}$ 
 in time $\pa{\ve{J}_\op n \log \prc \ep}^{O(1+d_+)} e^{O(c(\la_1+\cdots + \la_{d_-}))}$.
\end{enumerate}
\end{thm}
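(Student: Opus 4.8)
The plan is to put $J$ into a normal form and then deal with the three parts of its spectrum separately. Adding a suitable multiple of the identity rescales $\ZJh$ by an explicit closed-form factor and leaves $P_{J,h}$ unchanged, so we may assume the $n-d_+-d_-$ ``bulk'' eigenvalues lie in $[0,1-\rc c]$; write $J=J^\prl+J^\perp$ with $J^\perp=P_+-P_-$, where $P_+\succeq 0$ has rank $d_+$ and spans the positive outlier directions, $P_-\succeq 0$ has rank $d_-$ with $\ve{P_-}_\op=\max_i\la_i$ and spans the negative directions, and $J^\prl$ satisfies $\one-J^\prl\succeq\rc c\one$. For such $J^\prl$, Glauber dynamics for $p_{J^\prl,h'}$ mixes in $O(n\log n)$ steps for \emph{any} field $h'$ and its linear statistics obey a Poincar\'e inequality with constant $O(c)$; this is the workhorse inner chain, and what remains is to remove the two low-rank obstructions $-P_-$ and $P_+$.

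\textbf{Negative eigenvalues: a tilted importance sampler found by SGD.} Fix a field $h'$, write $P_-=\sum_{i=1}^{d_-}\la_i u_iu_i^\top$, and for a tilt $\te\in\range P_-$ set $q_\te:=p_{J^\prl,h'+\te}$ and $w_\te(\si):=\exp\pa{-\rc2\an{\si,P_-\si}-\an{\te,\si}}$; then $p_{J^\prl-P_-,h'}\propto w_\te\cdot q_\te$ for every $\te$. Linearizing each $\an{u_i,\si}^2$ about $m_i:=\E_{q_\te}\an{u_i,\si}$ shows that the tilt killing the \emph{linear} part of $-\log w_\te$ is the self-consistent fixed point $\te=-P_-\,\E_{q_\te}[\si]$; with this $\te$ the residual is $\rc2\sum_i\la_i(\an{u_i,\si}-m_i)^2$ up to a constant, whose fluctuations under $q_\te$ are $O\pa{c\sum_i\la_i}$ by the Poincar\'e inequality. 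Hence rejection sampling from $q_\te$ (or a bounded-variance importance sampler) succeeds with probability $e^{-O(c\sum_i\la_i)}$, and the mean weight gives the partition-function correction. The fixed point is the stationary point of a free-energy functional $\te\mapsto\rc2\ve{\te}_{P_-^+}^2+\log Z_{J^\prl,h'+\te}$, which is in general nonconvex once the positive-outlier tilts are incorporated; since an unbiased gradient estimate is a single draw from $q_\te$, stochastic gradient descent reaches an approximate stationary point good enough for the bound, and I would check smoothness, coercivity, and that approximate stationarity suffices.

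\textbf{Positive outliers: Hubbard--Stratonovich plus simulated tempering.} Applying the Gaussian integral identity to $P_+$ writes $P_{J,h}$ as the $\si$-marginal of a joint law on $\BC\times\R^{d_+}$ proportional to $\exp\pa{-\rc2\an{x,P_+^+x}+\an{x,\si}+\rc2\an{\si,(J^\prl-P_-)\si}+\an{h,\si}}$: given $x$, the $\si$-law is the inner sampler above with field $h+x$; given $\si$, the variable $x$ is an explicit Gaussian. This joint law can be multimodal in $x$ (already for Curie--Weiss), so I would run a simulated tempering chain along a ladder $0\approx\be_1<\cdots<\be_L=1$ in which $P_+$ is replaced by $\be P_+$: at $\be_1$ the $x$-marginal is log-concave and the within-level chain mixes, and consecutive levels have bounded density ratio once $\be_{\ell+1}-\be_\ell\lesssim(\ve{J}_\op n)^{-1}$, so $L=O(\ve{J}_\op n)$. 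To lower-bound the spectral gap I would use the Madras--Randall / Ge--Lee--Risteski decomposition method: partition $\range P_+$ into $(\ve{J}_\op n)^{O(d_+)}$ cells of bounded diameter, show each restricted chain mixes at every level (within a cell the $x$-marginal is effectively log-concave and the $\si$-updates are the fast inner chain), and bound the drift of the cell masses along the ladder, giving a gap $\ge(\ve{J}_\op n)^{-O(1+d_+)}e^{-O(c\sum_i\la_i)}$. Crucially this must tolerate the $\si$-updates being the \emph{approximate}, SGD-tilted sampler, which I would absorb as a small perturbation of the transition kernel.

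\textbf{Partition function and assembly.} For part (1), I would telescope $\ZJh$ down to a product-measure partition function (of the form $2^n\prod_i\cosh(h_i')$, evaluated in closed form) through a nested family of ratios: a ladder scaling $J^\prl$ from $0$ up (every level Glauber-mixed, since the spectrum stays in an interval of length $\le 1$), the $w_\te$ reweighting removing $-P_-$, the explicit Gaussian normalizer from Hubbard--Stratonovich, and the tempering ladder scaling $P_+$; each ratio is a bounded-variance expectation under the appropriate Glauber or tempering stationary law, needing $O(1/\ep^2)$ samples, and multiplying the $O(L)$ ratios' sample costs by the mixing-time bounds yields $O\pa{(\ve{J}_\op n)^{O(d_++1)}e^{O(c(\la_1+\cdots+\la_{d_-}))}/\ep^2}$. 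Part (2) is then the tempering chain run for its mixing time with the $\si$-coordinate read off at $\be_L=1$, the TV errors from tempering, the inner rejection sampler, and the SGD solve added up, at cost $(\ve{J}_\op n\log(1/\ep))^{O(1+d_+)}e^{O(c(\la_1+\cdots+\la_{d_-}))}$. I expect the hardest step to be the spectral-gap analysis of the tempering chain: building a decomposition of \emph{every} intermediate-temperature law into provably fast-mixing pieces, controlling how their masses move along the ladder, and --- the subtlest point --- making the whole argument robust to the within-level dynamics having access only to the approximate, SGD-tilted inner sampler from the negative-eigenvalue step.
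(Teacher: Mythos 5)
Your negative-eigenvalue step is essentially the paper's own argument (the tilted proposal, the fixed point $\te=-P_-\E_{q_\te}[\si]$ as a stationary point of a coercive free-energy functional, Jensen plus the covariance bound $\ve{\Sigma}_\op\le c$ giving acceptance probability $e^{-O(c\sum_i\la_i)}$, and Ghadimi--Lan SGD to an approximate stationary point), and your overall architecture---spectral split, Glauber for the bulk, Hubbard--Stratonovich over the spike subspace, a grid of cells, and a tempering chain analyzed by a Madras--Randall-type decomposition---matches the paper. The gap is in the positive-eigenvalue step, exactly where you defer the work: you temper the spike strength $\be P_+$ with the \emph{unreweighted} tempered measures as stationary laws and propose to ``bound the drift of the cell masses along the ladder.'' But between the hot and cold ends the mass of a fixed cell of the Hubbard--Stratonovich variable can change by $e^{\Theta(n)}$: already for Curie--Weiss, the cell containing a cold mode carries only $e^{-\Theta(n)}$ mass under the near-hot measure, and the cell near the origin carries $e^{-\Theta(n)}$ mass at the cold end. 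A constant per-level drift over $L=\Theta(\ve{J}_\op n)$ levels is therefore vacuous, and any decomposition whose projected chain must route between cold modes through (level, cell) states of exponentially small stationary mass has exponentially small conductance, so the claimed gap $(\ve{J}_\op n)^{-O(1+d_+)}$ is unsupported as stated.

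The paper's proof hinges on a device your plan omits: it first runs plain simulated annealing of the \emph{bulk} coupling $\be_\ell J^\perp$ inside each grid cell with the Hubbard--Stratonovich variable frozen at the cell center (so every intermediate law is a small-spectral-diameter Ising model, Glauber-fast, and no multimodal distribution is ever sampled during partition-function estimation), producing per-cell, per-level estimates $\wh Z_\ell(y^*)$; the tempering chain is then defined with stationary weights divided by these estimates, so that all (level, cell) pairs have mass within a constant factor of one another and all cells communicate through the hot (product-measure) level, with a final rejection/importance-sampling step restoring the true probabilities and simultaneously absorbing the cell discretization and the negative-definite reweighting. This equalization is also what breaks the circularity present in your $Z$-telescoping: your ladder in the spike strength needs samples from the intermediate spike-tempered (multimodal) laws, i.e., it presupposes the very mixing bound you have not proved. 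To complete your route you would need either to import the equalization-plus-final-rejection design, or to supply a genuinely new conductance argument (and a non-circular estimator of the level weights) for the unreweighted spike-tempering chain; neither is sketched in enough detail to count as a proof of the theorem.
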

Note that we can take $c=\iy$ in the theorem; in this case we assume that $J$ has no negative eigenvalues, i.e., $J$ is positive semi-definite, and we get the simpler bounds $O\pa{(\ve{J}_{\op} n)^{O(d_++1)}\big/\ep^2}$ and $\pa{\ve{J}_\op n \log \prc \ep}^{O(1+d_+)}$. 
Excluding the dependence on $\ve{J}_\op$, for large positive eigenvalues the runtime only depends on the number of eigenvalues, but for negative eigenvalues, the runtime depends on their magnitude. 

When there are $n$ large eigenvalues, our runtime guarantee is similar to brute force\footnote{Note however, that Theorem~\ref{t:main} only gives nontrivial guarantees when $d_+=o\pf{n}{\log n}$; it is an interesting question whether one can remove the $\log n$ factor.}; see \citep{jain2019mean} for discussion of why this should be unavoidable under the Exponential Time Hypothesis (ETH).
In the extreme case where there is just a single very large negative eigenvalue, it turns out the problem is also computationally hard. This arises from the discrete nature of the hypercube $\{ \pm 1\}^n$ and stands in strong contrast to intuition from sampling continuous distributions, where very strong log-concavity is not an obstacle to efficient sampling.
We prove the following negative result; see the full theorem (Theorem~\ref{thm:hardness}) for a stronger runtime lower bound for estimating $\log Z$, conditional on the ETH.
\begin{thm}[Theorem~\ref{thm:hardness}]\label{thm:hardness-intro}
Let $\beta \ge 1$ be arbitrary and fixed.
For any $a = (a_1,\ldots,a_n) \in \mathbb{Z}^n$, define the Ising model with probability mass function $p_a : \{\pm 1\}^n \to [0,1]$ given by
$p_a(\sigma) 
\propto \exp\left(-\beta n \langle a, \sigma \rangle^2\right)$.
If there exists a polynomial time randomized algorithm to approximately sample within TV distance $1/2$ from Ising models of this form for any $a_1,\ldots,a_n$,
then $\mathsf{NP} = \mathsf{RP}$.
\end{thm}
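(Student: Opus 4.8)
The plan is to reduce from \textsc{Partition}: given positive integers $a_1,\dots,a_n$, decide whether there is $\sigma\in\{\pm1\}^n$ with $\langle a,\sigma\rangle = 0$. For positive $a_i$ this is exactly the classical NP‑complete \textsc{Partition} problem (the coordinates with $\sigma_i = +1$ form one side of the partition, those with $\sigma_i=-1$ the other). Given such an instance, we form the Ising model $p_a(\sigma)\propto\exp(-\beta n\langle a,\sigma\rangle^2)$, whose interaction matrix is $J = -2\beta n\,aa^\top$ and whose external field is $h = 0$; the entries of $J$ have bit‑length polynomial in the size of the \textsc{Partition} instance, so a sampler running in time polynomial in the description of $(J,h)$ runs in time polynomial in the instance size.

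The key combinatorial fact is that $\langle a,\sigma\rangle$ is always an integer, so for every $\sigma$ either $\langle a,\sigma\rangle = 0$ or $\langle a,\sigma\rangle^2\ge 1$. Write $\mathcal{S} = \{\sigma\in\{\pm1\}^n : \langle a,\sigma\rangle = 0\}$. On a NO‑instance $\mathcal{S}=\emptyset$, so \emph{any} sample drawn from \emph{any} distribution on $\{\pm1\}^n$ — in particular from the hypothesized sampler — has $\langle a,\hat\sigma\rangle\neq 0$ with certainty. On a YES‑instance $\mathcal{S}\neq\emptyset$; each $\sigma\in\mathcal{S}$ contributes weight $1$ to the partition function while each of the at most $2^n$ remaining configurations contributes at most $e^{-\beta n}$, so
\[
p_a(\mathcal{S}) \;\ge\; \frac{|\mathcal{S}|}{|\mathcal{S}| + 2^n e^{-\beta n}} \;\ge\; \frac{1}{1 + e^{-n(\beta-\ln 2)}} \;\ge\; 1 - e^{-n(\beta-\ln 2)},
\]
which is $1 - e^{-\Omega(n)}$ since $\beta\ge 1 > \ln 2$.

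Now assume a polynomial‑time randomized sampler whose output $\hat\sigma$ satisfies $\mathrm{TV}(\mathrm{law}(\hat\sigma),p_a)\le 1/2$. The following is an $\mathsf{RP}$ algorithm for \textsc{Partition}: run the sampler a constant number of times; for each output, check in polynomial time whether $\langle a,\hat\sigma\rangle = 0$; accept iff some output lies in $\mathcal{S}$. On a NO‑instance we never accept. On a YES‑instance, a single run produces an element of $\mathcal{S}$ with probability at least $p_a(\mathcal{S}) - 1/2 \ge 1/2 - e^{-\Omega(n)} \ge 1/3$ once $n$ exceeds an absolute constant (small $n$ are handled by brute force), so a constant number of independent repetitions pushes the acceptance probability above $1/2$. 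Hence \textsc{Partition} belongs to $\mathsf{RP}$, and since \textsc{Partition} is NP‑complete and $\mathsf{RP}\subseteq\mathsf{NP}$, this forces $\mathsf{NP} = \mathsf{RP}$.

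The only genuinely delicate point is the bookkeeping about input encodings: \textsc{Partition} is hard precisely in the regime where the $a_i$ may be exponentially large in $n$, so one must be careful that ``polynomial time'' for the sampler is read as polynomial in $n + \sum_i\log(|a_i|+2)$ (equivalently, in the bit‑length of $J$), not merely polynomial in $n$. Granting that, the remaining ingredients — exponential concentration of $p_a$ on $\mathcal{S}$ swamping the constant $1/2$ slack in total variation, and standard success‑probability amplification — are routine. (The full Theorem~\ref{thm:hardness} presumably strengthens this to a quantitative ETH‑based lower bound on estimating $\log Z$ by instead using \textsc{Subset Sum}/\textsc{Partition} instances with a controlled number of solutions and tracking how $Z_{J,0}$ reflects $|\mathcal{S}|$; the same integrality gap and concentration estimates drive that argument.)
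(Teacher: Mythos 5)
Your proposal is correct and follows essentially the same route as the paper's proof of Theorem~\ref{thm:hardness}: a reduction from \textsc{Partition}/subset sum using the integrality fact that $\langle a,\sigma\rangle\neq 0$ forces $\langle a,\sigma\rangle^2\ge 1$, so the measure concentrates on exact solutions up to an $2^n e^{-\beta n}$ error, which swamps the $1/2$ total-variation slack and yields $\mathsf{NP}=\mathsf{RP}$. Your added remarks on bit-length of the encoding and amplification are fine but not points of divergence.
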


\section{Overview of techniques}\label{sec:technical-overview}
This section has two parts: in the first, we recall some basic tools which we will use in our analysis. In the second, we give a full overview of our algorithm and the proof of our main result. 
\subsection{Technical toolkit}
\paragraph{Sampling from Ising models with bounded spectral diameter.}
As a basic ingredient, we use the following guarantee for Glauber dynamics on Ising models (see also \cite{bauerschmidt2019very,eldan2020spectral}):
\begin{thm}[{\cite[Theorem 12]{anari2021aentropic}}]\label{thm:glauber}
Let $J\in \R^{n\times n}$ be a symmetric matrix satisfying $0 \preceq J \prec I_n$, $h \in \mathbb{R}^n$ arbitrary. 
Then we have that:
\begin{enumerate}
    \item The Poincar\'e and modified Log-Sobolev constants 
    of $P_{J,h}$ are at most $n(1-\ve{J}_\op)^{-1}$. 
    \item For any $\epsilon > 0$, the discrete-time Glauber dynamics mixes to $\epsilon$ total variation distance of $P_{J,h}$ in $O(n\log(n/\epsilon)/(1 - \|J\|_\op ))$ steps.
\end{enumerate}
\end{thm}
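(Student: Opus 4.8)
The plan is to deduce all three claims from a single \emph{approximate tensorization of entropy} inequality for $P_{J,h}$, and then run it through standard comparison machinery. Concretely, I would aim to prove that there is a constant $C = O\pa{(1-\ve{J}_\op)^{-1}}$ such that for every $f:\BC\to\R_{\ge 0}$,
\[
\Ent_{P_{J,h}}(f) \;\le\; C\sum_{i=1}^n \EE_{P_{J,h}}\ba{\Ent_{P_{J,h}}(f\mid \si_{-i})},
\]
i.e.\ that $P_{J,h}$ is $O\pa{(1-\ve{J}_\op)^{-1}}$-entropically independent. Granting this, the general down-up-walk / entropic-independence machinery implies the single-site Glauber dynamics has modified log-Sobolev constant $\Omega\pa{1/(Cn)}$ (and Poincar\'e constant at least as large), giving part~(1); and the generic conversion $t_{\mathrm{mix}}(\ep) = O\pa{Cn\,(\log\log(1/\mu_{\min}) + \log(1/\ep))}$, combined with the crude bound $\log(1/\mu_{\min}) = O(n + \ve{J}_\op n + \ve{h}_1)$ so that $\log\log(1/\mu_{\min}) = O(\log n)$, gives part~(2). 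I would stress that it is the \emph{entropy} inequality, not merely a spectral gap, that is needed here: bounding only the Poincar\'e constant would cost an extra factor of $n$ in the mixing time.

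To establish the tensorization inequality I would use the Hubbard--Stratonovich representation, the one structural feature special to Ising models. Since $0\preceq J \prec I_n$, we may write $\exp\pa{\rc2\an{\si,J\si}}\propto \EE_{y\sim N(0,J)}\exp\pa{\an{y,\si}}$, so that $P_{J,h}$ is a reweighted mixture over $y$ of the product measures $\si\mapsto \prod_i e^{(h_i+y_i)\si_i}$; the reweighting density for $y$ is proportional to $N(0,J)$ times $\prod_i 2\cosh(h_i+y_i)$, which---precisely because $J\prec I_n$---is strongly log-concave. This mixture structure is preserved under conditioning on coordinates and under shifting the external field, so it is inherited by every measure appearing in a pinning or localization scheme. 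Using the law of total covariance together with a Brascamp--Lieb variance bound for the (strongly log-concave) field variable, I would then prove the uniform second-moment estimate
\[
\Cov_{P_{J,h'}}(\si) \;\preceq\; \frac{1}{1-\ve{J}_\op}\, I_n \qquad\text{for every } h'\in\R^n
\]
(which in fact refines to $\Cov_{P_{J,h'}}(\si)\preceq (I_n-J)^{-1}$).

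With the uniform covariance bound in hand I would run a stochastic localization / Gaussian-tilt (Polchinski renormalization) process started at $P_{J,h}$: along the process the conditional law remains an Ising model with the \emph{same} interaction matrix $J$ but a drifting external field, and the displayed covariance bound is exactly the ingredient that makes the integrated contraction estimate converge, yielding entropic independence with parameter $O\pa{(1-\ve{J}_\op)^{-1}}$. An alternative I would keep in reserve is to verify $O\pa{(1-\ve{J}_\op)^{-1}}$-entropic independence directly, by comparison with the associated Gaussian (``fractionally log-concave'') model in the spirit of exchange/pinning arguments.

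The main obstacle I anticipate is twofold. First, obtaining the uniform operator-norm covariance bound with the sharp dependence $1/(1-\ve{J}_\op)$, rather than a weaker polynomial dependence, which requires handling the reweighting of the Gaussian field carefully (this is where strong log-concavity of the reweighted law, hence $\ve{J}_\op<1$, enters). Second, propagating this covariance control through the discrete-cube localization---a field-tilt/jump process rather than a diffusion---into an entropy-contraction statement with matching constants, so that the final modified log-Sobolev bound is $O(n/(1-\ve{J}_\op))$ and not merely $O\pa{n\cdot\mathrm{poly}(1/(1-\ve{J}_\op))}$. Everything downstream of the tensorization inequality---the passage to the Glauber MLS/Poincar\'e constants and to the mixing time---is routine.
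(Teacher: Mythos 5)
The paper does not prove this statement at all: Theorem~\ref{thm:glauber} is imported verbatim as \cite[Theorem 12]{anari2021aentropic} and used as a black box (the only in-paper derivative is Lemma~\ref{l:ls-ising}, which combines it with a Holley--Stroock perturbation). So there is no internal proof to compare against; the relevant comparison is with the cited literature. Your sketch does track the route taken there and in the closely related works \cite{eldan2020spectral,bauerschmidt2019very}: a uniform covariance estimate $\Cov_{P_{J,h'}}(\si)\preceq (I_n-J)^{-1}\preceq (1-\ve{J}_\op)^{-1}I_n$ valid for every external field, fed into a Hubbard--Stratonovich/Polchinski or stochastic-localization scheme to obtain approximate tensorization of entropy (entropic independence), and then the standard passage from a modified log-Sobolev inequality to the Poincar\'e constant and to mixing. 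The approach is therefore the right one.

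However, as written your text is a plan rather than a proof: the single genuinely hard step---the entropy-contraction/entropic-independence inequality with constant $O\pa{(1-\ve{J}_\op)^{-1}}$---is precisely the step you defer (``I would then prove\ldots''), and you yourself flag it as the main obstacle. The covariance bound is comparatively routine (the paper itself invokes it via the Poincar\'e inequality in the proof of Corollary~\ref{thm:negdef-ising}), but promoting it to approximate tensorization on the hypercube with the sharp constant is the actual content of the cited theorem, not a corollary. A smaller caveat: your mixing-time conversion carries a $\log\log(1/\mu_{\min})$ term, and your claim that this is $O(\log n)$ uses $\ve{h}_1$ (and $\ve{J}_\op$) being at most polynomial in $n$; for truly arbitrary $h$ one has to quote the more careful statement from the source. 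In the context of this paper the intended ``proof'' is simply the citation, so your reconstruction is best viewed as an accurate summary of the external argument rather than a substitute for it.
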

See Appendix~\ref{a:def} for the definition of the Poincar\'e and modified log-Sobolev constant. 


\paragraph{Hubbard-Stratonovich transform.}
The component of our algorithm which handles positive spike eigenvalues makes use of the multivariate version of the classical Hubbard-Stratonovich transform \citep{hubbard1959calculation}. This transform is commonly used in the analysis of quantum and statistical physics systems and in large deviation theory; for a few examples see \citep{talagrand2010mean,bovier2012mathematical,bauerschmidt2019very,hsu2012tail}. 
The statement is given by Lemma~\ref{l:hs} below; it is very useful despite its simplicity. 
\begin{lem}\label{l:hs}
Let $X\in \R^{m\times n}$ be a matrix with $d$-dimensional column space $V$. Let $\si\in \R^n$. Then for any $\gamma > 0$,
\begin{align*}
    \exp\pa{\frac{\gamma^2}{2} \ve{X\si}^2}
    & = \pf{1}{2\pi\gamma^2}^{d/2} \int_V \exp\pa{\an{X^\top \mu, \si} - \frac{1}{2\gamma^2} \ve{\mu}^2}\,d\mu.
\end{align*}
\end{lem}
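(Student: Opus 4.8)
The plan is to reduce the identity to the standard Gaussian integral by restricting attention to the subspace $V$. First I would note that for $\mu \in V \subseteq \R^m$ one has $\an{X^\top\mu,\si} = \an{\mu, X\si}$, and crucially that $v := X\si$ lies in $V$ (the column space of $X$); so the integrand on the right-hand side depends on $\mu$ only through $\an{\mu,v}$ and $\ve{\mu}$, with $v$ a fixed vector of $V$. Then I would fix an orthonormal basis of $V$ and use it to identify $(V, d\mu)$ isometrically with $(\R^d, \text{Lebesgue measure})$; this is legitimate because the $d$-dimensional Lebesgue (Hausdorff) measure on $V$ does not depend on the choice of orthonormal basis, and inner products and norms are preserved under the identification. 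Under this identification the right-hand side becomes $\pf{1}{2\pi\gamma^2}^{d/2}\int_{\R^d}\exp\pa{\an{\mu,v} - \rc{2\gamma^2}\ve{\mu}^2}\,d\mu$.

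Next I would complete the square in the exponent: $\an{\mu,v} - \rc{2\gamma^2}\ve{\mu}^2 = -\rc{2\gamma^2}\ve{\mu - \gamma^2 v}^2 + \fc{\gamma^2}{2}\ve{v}^2$. Pulling the $\mu$-independent factor $\exp\pa{\fc{\gamma^2}{2}\ve{v}^2}$ out of the integral and invoking translation invariance of Lebesgue measure together with the standard normalization $\int_{\R^d}\exp\pa{-\rc{2\gamma^2}\ve{\mu}^2}\,d\mu = (2\pi\gamma^2)^{d/2}$, the prefactor $\pf{1}{2\pi\gamma^2}^{d/2}$ cancels exactly, leaving $\exp\pa{\fc{\gamma^2}{2}\ve{v}^2} = \exp\pa{\fc{\gamma^2}{2}\ve{X\si}^2}$, which is the left-hand side.

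I do not expect any real obstacle: this is a routine Gaussian computation, and the absolute convergence needed to justify the manipulations is immediate from the $-\ve{\mu}^2$ term. The only two points deserving a word of care are (a) that the integral over $V$ must be taken against the $d$-dimensional Lebesgue measure, so that the Gaussian normalization is $(2\pi\gamma^2)^{d/2}$ (matching the stated prefactor) rather than something involving the ambient dimension $m$; and (b) that $X\si \in V$, so that completing the square keeps the shifted variable $\mu - \gamma^2 v$ inside the subspace over which we are integrating.
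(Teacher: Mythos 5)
Your proof is correct and follows essentially the same route as the paper: complete the square in $\mu$ and cancel the prefactor against the $d$-dimensional Gaussian normalizing constant on $V$. Your added care about $X\si$ lying in $V$ and about the Lebesgue measure on the subspace is exactly the (implicit) justification behind the paper's one-line computation, so there is nothing further to add.
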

\begin{proof}
We complete the square to find that
\begin{align*}
    &\pf{1}{2\pi \gamma^2}^{d/2} \int_V \exp\pa{\an{X^\top \mu, \si} - \fc{1}{2\gamma^2} \ve{\mu}^2}\,d\mu\\
    &= \pf{1}{2\pi\gamma^2}^{d/2} \exp\pa{\frac{\gamma^2}{2} \ve{X\si}^2}
    \int_V \exp\pa{-\frac{1}{2\gamma^2} \ve{\mu - X\si}^2}\,d\mu =  \exp\pa{\frac{\gamma^2}{2} \ve{X\si}^2}
\end{align*}
using the formula for the normalizing constant of a Gaussian distribution.
\end{proof}

\subsection{Proof overview}
The proof of our main result, Theorem~\ref{t:main}, combines two modular algorithmic ideas: a grid partitioning and simulated annealing/tempering strategy which handles the large positive eigenvalues, and an optimization and rejection sampling based strategy which handles the negative ones. 

We briefly comment on the relation between our techniques and those used in the aforementioned literature on naive mean-field approximation, which do not seem as useful for sampling. In all of those works (algorithmic or non-algorithmic), the primary goal is to estimate $\log Z$ within an additive error which is small compared to $n$, but essentially always $\omega(1)$ as $n \to \infty$. The main reason for this is that the naive mean-field approximation is simply not accurate to $O(1)$ additive error even in relatively basic examples (see e.g., \cite{eldan2020taming}). On the other hand, in almost all of those works (and also for Dense Max-CSP, e.g. \cite{frieze1996regularity}) the techniques used are general as far as the form of the distribution concerned: e.g., they can handle a log-likelihood which is not a quadratic function but a higher-order polynomial. Our analysis is based on decomposing the spectrum of the interaction matrix, which only seems to makes sense in the Ising case. 

\subsubsection{Large positive eigenvalues: decomposition and simulated tempering} 
Here we describe our method for sampling from Ising models with large positive eigenvalues. For simplicity, we describe the algorithm when the interaction matrix $J$ is positive semidefinite and return to the general case later. 

\paragraph{Warmup: Curie-Weiss model and generalizations.} To motivate our approach, we start with a special case: sampling from a rank-one Ising model of the form
$p_{ww^\top,0}(\sigma) \propto e^{\langle w, \sigma \rangle^2/2}$.
This means the interaction matrix is simply $ww^\top $. A classical example of such a distribution is the \emph{Curie-Weiss} model, in which case $w = \beta \vec{1}/\sqrt{n}$ where $\beta \ge 0$ is referred to as the inverse temperature. It is well known \citep{ellis2006entropy,talagrand2010mean} that the Curie-Weiss model exhibits symmetry breaking in its low temperature phase $\beta > 1$:  the distribution becomes close to supported on two clusters of points, one with $\frac{1}{n} \sum_i \sigma_i \approx y$ and an opposite one with $\frac{1}{n} \sum_i \sigma_i \approx -y$ where $y$ is a nontrivial (i.e., nonzero) solution of the fixed point equation $y = \tanh(\beta y)$. Because Glauber dynamics becomes trapped in one of the clusters, it will not mix \citep{levin2017markov}. 

There are many alternative algorithms to sample from the Curie-Weiss model. For example, the random variable $\sum_i \sigma_i$ is an integer between $-n$ and $n$ and it is straightforward to write down its distribution under the Curie-Weiss model explicitly, letting us sample it; this can also be used with a Markov chain decomposition theorem to show mixing up to phase~\citep{madras2003swapping}.
\emph{However, } this approach which works well for the Curie-Weiss model does not generalize nicely --- for a typical vector $w$, $\langle w, \sigma \rangle$ will take on $2^n$ many different values!
There are multiple ways to provably sample from ferromagnetic Ising models which apply to Curie-Weiss \citep{jerrum1993polynomial,guo2017random}, but we need to also sample from non-ferromagnetic ones.

We now explain an approach that \emph{will} generalize nicely to rank-one models and beyond. We first describe this as a method to compute the partition function $Z$, and explain sampling at the end of this section. By applying the Hubbard-Stratonovich transform (Lemma~\ref{l:hs}), we have
\begin{align*} 
Z = \sum_{\sigma \in \{\pm 1\}^n} e^{\langle w, \sigma \rangle^2/2}
&= \int_{\R^n} e^{-y^2/2} \sum_{\sigma \in \{\pm 1\}^n} e^{y\langle w, \sigma \rangle} dy
= 2^n \int_{\R^n} e^{-y^2/2} \prod_{i = 1}^n \cosh(y w_i)\ dy.
\end{align*}
This is a one-dimensional integral: it's over an infinite domain, but the term $e^{-y^2/2}$ ensures that larger values of $y$ contribute only a negligible amount to the integral. Hence, we only need to perform an integral over a bounded region which can be done using Riemann summation.

\paragraph{The general case: decomposition and integration.} We now consider the much more general case of a positive semidefinite matrix $J$. We do not want to restrict ourselves to low-rank $J$, but rather $J$ which have a smaller large number of eigenvalues greater than $1$. For this reason, we only apply the Hubbard-Stratonovich transform over the large eigenspaces of $J$.

To do this, let $c > 0$ be an arbitrary small constant. Using the spectral decomposition of $J$, we can decompose $J = J^{\perp} + J^{\prl}$ so that $J^{\perp}$ and $J^{\prl}$ are both positive semidefinite, $\|J^{\perp}\|_{\op} \le 1 - c$, and $J^{\prl}$ spans the eigenspaces of $J$ above $1 - c$, which we denote as $V^{\prl}$ with dimension $d$. Let $J^{\prl} = X^\top  X$ be an arbitrary factorization; then by an analogous application of the Hubbard-Stratonovich transform (Lemma~\ref{l:hs}) we have
\begin{align} 
Z 
&= \sum_{\si\in \BC} \exp\pa{\isingexp{J^\perp}h} \exp\pa{\rc{2} \ve{X \si}^2} \nonumber \\
&= \left(\frac{1}{2\pi}\right)^{d/2} \sum_{\si\in \BC} \exp\pa{\isingexp{J^\perp}h} \int_{V^\prl} \exp\pa{\an{X^\top \mu^\prl , \si} - \frac{1}{2} \ve{\mu^\prl}^2}\,d\mu^\prl \nonumber \\
&= \left(\frac{1}{2\pi}\right)^{d/2} \int_{V^\prl} \exp\left(-\frac{1}{2} \ve{\mu^\prl}^2\right)  \sum_{\si\in \BC} \exp\pa{\isingexp{J^\perp}{h + X^\top \mu^\prl}}\,d\mu^\prl. \label{eqn:hs-apply-intro}
\end{align}
We see the resulting integral is now over a $d$-dimensional subspace; just like the example, the integrand has a damping term $\exp\left(-\frac{1}{2} \ve{\mu^\prl}^2\right)$ which allows us to truncate it to a bounded domain while changing the integral by only a small amount. Each of the integrands involves a sum over exponentially many $\sigma \in \{\pm 1\}^n$, but we can recognize this sum as the partition function of an Ising model with interaction matrix $J^{\perp}$. Since $J^{\perp}$ has no large eigenvalues, and we can sample from this class of models using Glauber dynamics (Theorem~\ref{thm:glauber}), we can approximate the corresponding partition function using a relatively standard reduction from sampling to integration (see e.g., \cite{bezakova2008accelerating}; this reduction is via a form of simulated annealing, not to be confused with the related but different concept of simulated tempering described later). Finally, using Riemann summation to actually compute the integral gives the estimate of $Z$.

\paragraph{The simulated tempering chain: sampling with exponentially small error.} In principle, given the previous result for approximating the partition function, we could apply standard reductions from approximate counting to sampling in order to approximately sample from the Ising model. This would be quite suboptimal, because the running time of such an algorithm would depend polynomially on the error parameter $\epsilon$ (desired total variation distance to the true distribution). 
In comparison, MCMC methods, when they work, 
generally depend logarithmically on the error parameter $\epsilon$ and we would like our algorithm to have this property too.

To achieve the desired logarithmic dependence on $1/\ep$, we construct a new Markov chain. The first step is to observe that the formula \eqref{eqn:hs-apply-intro} we derived comes with a simple probabilistic interpretation: it can be understood as a decomposition of the original Ising model into a 
mixture of high-temperature Ising models with additional external field $X^\top  \mu^{\prl}$. The associated joint distribution over the pair $(\sigma, \mu^{\prl})$ is
\begin{align}\label{e:psimuprl}
p(\si,\mu^\prl) \propto 
\exp\pa{\isingexp{J^\perp}{h+X^\top \mu^\prl} - \fc 12 \ve{\mu^\prl}^2}.
\end{align}
With this understanding, all we need to do is construct a Markov chain which can sample quickly on the joint $(\sigma, \mu^{\prl})$ space. However, a standard Metropolis-Hastings sampler has the same issue as the original Glauber dynamics: the joint distribution in $(\sigma, \mu^{\prl})$ space is multimodal just like the original distribution. 

The key to solving this problem is to use a faster chain based on \emph{simulated tempering} \citep{marinari1992simulated}. 
We actually define the Markov chain on a further expanded state space of $(\ell, \sigma, \mu^{\prl})$ where $\ell$ is an additional \emph{temperature} variable, so that the chain mixes to a distribution which conditional on the temperature $\ell$ being at its ``coldest'' setting is the desired distribution. 
The point is that the chain mixes rapidly at the ``hottest" temperature, which combined with 
a choice of temperature schedule where distributions at adjacent distributions have constant overlap, 
provides a bridge between the different modes at the colder temperature. 
We actually consider a variant of simulated tempering where we approximately equalize the probability for each grid cell so that they will all be visited ---this can be thought of as a Markov chain analogue of grid search 
--- with a final step of importance sampling to attain the right probabilities.

Simulated tempering is a beautiful idea, but it isn't always guaranteed to  work: indeed, \cite{marinari1992simulated} proposed their original simulated tempering chain exactly for the purpose of sampling from Ising models, but it does not come with a mixing time guarantee (and obviously, no sampling method will work for Ising models which are computationally hard to sample \citep{sly2012computational}). In our setting, 
we can establish a Poincar\'e inequality and prove rapid mixing by using a \emph{Markov chain decomposition theorem}~\citep{madras2002markov,ge2018simulated}. Such a decomposition theorem allows us to conclude fast mixing once we show mixing within each grid cell as well as a ``coarse-grained" chain where each grid cell is considered as a single state. 
Mixing within each grid cell is immediate from the fact that for fixed $\mu^\prl$, Glauber dynamics for $p(\si,\mu^\prl)$ mixes rapidly, and mixing of the coarse-grained chain follows from equalization of the probabilities of grid cells and overlap of distributions at adjacent temperatures.

\subsubsection{Large negative eigenvalues: nonconvex variational problem and importance sampling.} 

\paragraph{Warmup example.} To explain our method of handling large negative eigenvalues, it helps to start with a much easier special case of the argument. Consider
$p_{-ww^\top,0}(\sigma) \propto e^{-\langle w, \sigma \rangle^2/2}$
for $\sigma \in \{\pm 1\}^n$, i.e., a rank one Ising model with interaction matrix $-ww^\top $. We claim that we can sample from $P_{-ww^\top,0}$ using \emph{rejection sampling}: 
(1) first, sample $\sigma_0 \sim \mathsf{Uniform}(\{\pm 1\}^n)$, and then (2) with probability $e^{-\langle w, \sigma_0 \rangle^2/2}$ output $\sigma = \sigma_0$, and otherwise restart with step (1). From the definition, it's clear that this process draws a sample from $P$; the only concern is how long it takes. The runtime is a geometric random variable with parameter $p = \E_{\sigma_0} e^{-\langle w, \sigma_0 \rangle^2/2}$ and using Jensen's inequality we have
$p \ge e^{-\E_{\sigma_0} \langle w, \sigma_0 \rangle^2/2} = e^{-\|w\|^2/2}$.
Hence, the expected runtime is $1/p = \exp(\|w\|^2/2)$ (constant time provided $\|w\| = O(1)$).

This is an artificially simple example because: (1) the Ising model we considered had no positive eigenvalues, and (2) there was no external field. In all of the cases of serious interest, 
rejection sampling from the uniform distribution 
has extremely bad runtime (exponential in dimension $n$).
However, generalizing this example leads us naturally to a 
more sophisticated algorithm which works more generally.

\paragraph{The general importance sampling argument and fixed point equation.} The actual problem we need to solve is this: sample from an Ising model with external field $h$ and interaction matrix $J$ with the following structure: $J = J_+ - J_-$ with $0 \preceq J_+ \preceq 1 - c$ and $0 \preceq J_-$ with small trace.
(We use the previous annealing argument to eliminate any larger positive eigenvalues.) We will let
$ Q(\sigma) \propto e^{\rc 2\langle \sigma, J \sigma \rangle + \langle h, \sigma \rangle} $
denote the Ising model we ultimately want to sample from. 

To have any hope of succeeding with the rejection sampling approach, we need a smart proposal distribution. Since we have a sampler for the Ising model $P_{J_+,h}(\sigma) \propto e^{\rc 2\langle \sigma, J_+ \sigma \rangle + \langle h, \sigma \rangle}$, this would be an obvious choice of proposal distribution. However, this is a bad idea: the distribution $p_{J_+,h}$ and the target distribution many be concentrated around different regions\footnote{For a concrete example, suppose $J_+ = 0$, $J_- = \vec{1} \vec{1}^\top /n$ and $h = \vec{1}$. Then by explicit calculation, it can be shown that mean without the $J_-$ term is much further from zero than with the $J_-$ term included.}, in which case rejection sampling will perform poorly. A smarter choice is to consider a tilted proposal distribution with additional external field $\lambda \in \mathbb{R}^n$, i.e., an Ising model of the form 
$P_{J_+, h + \lambda}(\sigma) \propto e^{\rc 2\langle \sigma, J_+ \sigma \rangle + \langle h + \lambda, \sigma \rangle}$.
Then the relative density satisfies
$\frac{dQ}{dP_{J_+, h + \lambda}}(\sigma) \propto e^{-\rc 2\langle \sigma, J_- \sigma \rangle - \langle \lambda, \sigma \rangle}$
and if we specifically consider tilts of the form $\lambda = -J_- \mu$, we can complete the square to write
\[ \frac{dQ}{dP_{J_+, h - J_-\mu}}(\sigma) = \frac{1}{Z(\mu)} e^{-\rc 2\langle \sigma - \mu, J_- (\sigma - \mu) \rangle} \]
where $Z(\mu) := \mathbb{E}_{P_{J_+, h - J_-\mu}}[e^{-\rc 2\langle \sigma - \mu, J_- (\sigma - \mu) \rangle}]$ is the normalizing constant. Note that $Z(\mu) \le 1$ since $J_-$ is positive semidefinite. To lower bound $Z(\mu)$, analogous to the ``warmup example,'' we can apply Jensen's inequality, which gives
\begin{equation}\label{eqn:jensen-application}
\log Z(\mu) \ge -\mathbb{E}_{P_{J_+, h - J_-\mu}}[\langle \sigma - \mu, J_- (\sigma - \mu) \rangle/2] = -\langle J_-, \mathbb{E}_{P_{J_+, h - J_-\mu}}[(\sigma - \mu)(\sigma - \mu)^\top ] \rangle. 
\end{equation}
For arbitrary $\mu$, the right hand side of this inequality does not seem particularly tractable. \emph{However}, if were fortunate enough to choose $\mu$ which is a solution of the fixed point equation
\begin{equation}\label{eqn:fixedpoint}
\mu = \mathbb{E}_{P_{J_+, h - J_-\mu}}[\sigma]    
\end{equation}
then on the right hand side of \eqref{eqn:jensen-application}, the term $\mathbb{E}_{P_{J_+, h - J_-\mu}}[(\sigma - \mu)(\sigma - \mu)^\top ]$ \eqref{eqn:jensen-application} is simply a covariance matrix. Because $P_{J_+, h - J_-\mu}$ is an Ising model with all eigenvalues lying in an interval of length $1 - c$, its covariance matrix is bounded in operator norm by $1/c$ \citep{eldan2020spectral}. Hence by the matrix H\"older inequality, we have
$\log Z(\mu) \ge -\langle J_-, \mathbb{E}_{P_{J_+, h - J_-\mu}}[(\sigma - \mu)(\sigma - \mu)^\top ] \rangle \ge -\frac{1}{c}\Tr(J_-)$.
Provided such a $\mu$ exists, this lets us perform importance sampling with expected running time $e^{(1/c) \Tr(J_-)}$, by using $P_{J_+, h - J_-\mu}$ as the proposal distribution, which we can sample from using Glauber dynamics by Theorem~\ref{thm:glauber}.

\paragraph{Solving the fixed point equation: variational argument and nonconvex SGD.} There is only one problem remaining: how do we find a solution of the fixed point equation \eqref{eqn:fixedpoint}, or even know that one exists? To show existence, we use what is known as a \emph{variational argument}: we construct a functional $G(\mu)$ and prove that (1) any critical point of $G$ solves our desired equation \eqref{eqn:fixedpoint}, and (2) $G$ has at least one global minima, hence at least one critical point. This strategy is quite familiar in the context of variational inference (e.g., constructing BP fixed points \citep{mezard2009information}),  as well as in other fields in mathematics like classical mechanics and PDEs \citep{evans2010partial}.

In our case, we can first assume $J_-$ is strictly positive definite without loss of generality (by adding a small copy of the identity to $J_-$, which preserves the distribution and only slightly increases the trace). Then we consider the functional
\begin{equation}
    G(\mu) := \log \mathbb{E}_{P_{J_+,h}}[e^{\langle \mu, -J_- \sigma \rangle}] +\rc2 \langle \mu, J_- \mu \rangle.
\end{equation}
Differentiating, we obtain
\begin{equation}\label{eqn:gradient-intro}
\nabla G(\mu) = - J_-\mathbb{E}_{P_{J_+,h - J_- \mu}}[\sigma] + J_-\mu
\end{equation}
and because $J_-$ is invertible, this means that $\nabla G(\mu) = 0$ iff $\mu$ solves the fixed point equation \eqref{eqn:fixedpoint}.  

To show there exists a global minimizer of $G(\mu)$, we observe that $G(0) = 0$ and by H\"older's inequality that
$G(\mu) \ge - \|J_- \mu\|_1 + \langle \mu, J_- \mu \rangle/2 \ge -\sqrt{n} \|J_-\|_{\op} \|\mu\| + \langle \mu, J_- \mu \rangle/2$.
The first negative term grows at most linearly in $\|\mu\|$, whereas the second positive term grows quadratically in $\|\mu\|$ because $J_-$ is positive definite. Thus, for all $\mu$ with $\|\mu\|$ sufficiently large, we must have that $G(\mu) > 0$. Hence the infimum of $G$ must be achieved within a compact ball around $0$, and so $G$ has at least one global minima and at least one critical point.

Now that we have shown that a fixed point exists, there is a clear way to make this argument constructive: run stochastic gradient descent to try to minimize $G(\mu)$, starting from zero. Based on \eqref{eqn:gradient-intro}, we can indeed compute a stochastic gradient of $G$ provided we can sample from $P_{J_+,h - J_- \mu}$, which we do via Glauber dynamics (Theorem~\ref{thm:glauber}). While SGD is not guaranteed to find the global minimum, we can use the result of \citet{ghadimi2013stochastic} to guarantee that SGD at least finds an approximate critical point, which is sufficient.

\paragraph{The general case: Positive and negative eigenvalues.} We now describe how to combine the techniques to deal with general case when $J = J^+-J^-$ can have both positive and negative eigenvalues. In the PSD case, we computed the partition function for~\eqref{e:psimuprl} over a grid of $\mu^\prl$'s. We cannot include the negative definite part in $J^\perp$, but we know from our variational argument that we can approximate $p_{J^\perp - J_-, h+X^\top \mu^\prl}$ with $p_{J^\perp , h+X^\top \mu^\prl + f(\mu^\prl)}$ for some $f(\mu^\prl)$ we can compute; hence we run the annealing and tempering argument on these distributions instead, with a final step of importance/rejection sampling to bring us back to $p_{J^\perp - J_-, h+X^\top \mu^\prl}$.

\section{Applications}\label{sec:applications}
Our results specialize to give new sampling guarantees for a many models of interest.  All of these are Ising models, so in each application we will describe the particular interaction matrix which arises and the resulting runtime guarantee. In all of the applications, the behavior in the presence of an external field $h \in \mathbb{R}^n$ is of interest (for example, in the Hopfield network to preferentially weight the distribution towards a particular memory) and 
we automatically handle this case. 

\paragraph{Hopfield Network with a fixed number of patterns.} The Hopfield network is a neural model of associative memory 
(\cite{hopfield1982neural}, see also \cite{pastur1977exactly,pastur1978theory,little1974existence}) which has been hugely influential and extensively studied. In particular, for rigorous mathematical results see the textbooks by \cite{bovier2012mathematical,talagrand2010mean}.
Formally, given patterns $\eta_1,\ldots,\eta_m \in \{\pm 1\}^n$ the Hopfield network at inverse temperature $\beta$ is the Ising model with interaction matrix
$J = \frac{\beta}{2n} \sum_{v = 1}^m \eta_v \eta_v^\top$.
This is thought of as a ``Hebbian'' learning rule because for each memory $\eta_v$ and neurons (coordinates) $i$ and $j$, the term $(\eta_v)_i (\eta_v)_j$ is positive if $(\eta_v)_i = (\eta_v)_j$ and negative otherwise. Therefore if $J$ is thought of as the ``wiring'' of the neurons, then for each pattern all of the neurons which ``fire together," i.e., have the same spin, are ``wired together''.

Most of the interest in this model has been in the case of low/zero-temperature, which means the parameter $\beta$ is large. 
Glauber dynamics (Gibbs sampling) has long been considered as a natural dynamics for the Hopfield network. Informally, the patterns stored in the network serve as ``attractors'' which trap the dynamics. This is interesting as 
in a sense it means the network exhibits memory; however, from the sampling perspective this means that the vanilla Glauber dynamics are not expected to mix in the most interesting regime of this model.

When the number of patterns $m$ is fixed (a regime which has been rigorously studied in e.g., \cite{gentz1999fluctuations,bovier2012mathematical,talagrand2010mean}), we obtain the first polynomial time sampling algorithm for the Gibbs measure of this model that works for any fixed $\beta > 0$. Based on the rigorous results in this model (see \cite{bovier2012mathematical,talagrand2010mean}), when each pattern is independently sampled $\eta_i \sim \mathsf{Uniform} (\{\pm 1\}^n)$ and $\beta > 1$ the distribution will be almost entirely supported on $2m$ clusters corresponding to each of the patterns $\{\pm \eta_i\}_{i = 1}^m$ and so ordinary Glauber dynamics will not mix rapidly. (This should not be too difficult to formally prove given their results, though we did not do this.) Note that our sampling results apply to arbitrary patterns $\eta_i$, not just the commonly studied case where the patterns are uniformly random from the hypercube.

\paragraph{Antiferromagnetic and Ferromagnetic Ising Model on expanders and random graphs.} Suppose that $A$ is the adjacency matrix of a graph; then the \emph{antiferromagnetic Ising model} at inverse temperature $\beta$ has interaction matrix
$J = -\beta A$.
It is known that for worst-case graphs of maximum degree $d$, that polynomial time sampling is only possible for $\beta = O(1/d)$ (\cite{sly2012computational}, in fact the precise threshold is known as a function of $d$). However, this should be far from tight in other cases of interest, such as on a uniformly random $d$-regular graph: in this model, it is known that the symmetry breaking phase transition is at scaling $\beta = \Theta(1/\sqrt{d})$ (see \cite{coja2020ising} and references within) and we would expect the sampling regime of the model to be similar. 

Based on our main result, we can indeed recover the correct scaling in the random $d$-regular graph setting, as a special case of a much more generic result about spectral expanders.
Let $\lambda = \max\{|\lambda_2(A)|, |\lambda_n(A)|\}$; then our results give a polynomial time sampler whenever $\beta d = O(\log n)$ (so that our algorithm is polynomial time) 
and provided
$\beta \lambda < 1$.
For example, in the case of a \emph{Ramanujan graph} of degree $d$ we have $\lambda \le 2\sqrt{d - 1}$ and so we can sample in polynomial time whenever
$\beta < \frac{1}{2\sqrt{d - 1}}$,
which is a dramatic improvement over $O(1/d)$. Because of Friedman's Theorem, we know the same result holds for the a uniformly random $d$-regular graph since it will be almost-Ramanujan \citep{friedman2008proof}. Note that it is the presence of the ``trivial'' eigenvalue $\lambda_1$ which prevents the result from being deduced from the pre-existing works (e.g., \cite{eldan2020spectral}) which can handle related models (diluted $d$-regular SK model) without outlier eigenvalues.  Our result also applies analogously if there are a couple of outlier eigenvalues, e.g., on bipartite expanders.  

A completely analogous consequence of our theory is for the case of \emph{ferromagnetic} Ising models on expanders, where we have $J = \beta A$. In this case, the famous result of Jerrum and Sinclair \citep{jerrum1993polynomial} proves that sampling is possible when the external field $h$ is \emph{consistent} i.e., $h_i \ge 0$ for all $i$. However, when the signs of the external fields $h_i$ are allowed to disagree, sampling from the ferromagnetic Ising model is \#BIS-Hard \citep{goldberg2007complexity}. So our result also implies sampling algorithms for the ferromagnetic Ising model with inconsistent external field on expanders up to larger inverse temperatures than were previously known. 



\paragraph{Sherrington-Kirkpatrick Model with Ferromagnetic Interaction.} The Sherrington-Kirkpatrick model is one of the most famous spin glass models, and the SK model with ferromagnetic interactions is a natural variant which exhibits a combination of ferromagnetic and spin glass behaviors---see e.g., \cite{chen2014mixed,comets1999sherrington,talagrand2010mean} for rigorous probabilistic analysis of this model. The interaction matrix $J$ is given by
$J_{ij} = \frac{\beta_1}{n}  + \beta_2 W_{ij}$
where $W$ is a matrix sampled from the Gaussian Orthogonal Ensemble (so $W_{ij} \sim N(0,1/n)$). Since $\|W\|_{\op} \le 2(1 + o(1))$ with high probability by classical results in random matrix theory \citep{anderson2010introduction}, we are able to sample in polynomial time from this model for any fixed $\beta_1$, as long as $\beta_2 < 1/4$.

\paragraph{Posterior in Low-Dimensional Gaussian Mixture Model.} A basic clustering problem in Bayes\-ian statistics is posterior inference in the two-component (symmetric) Gaussian mixture model. More specifically, we will consider that we have data points $b_1,\ldots,b_n \in \mathbb{R}^p$ and we want to sample from the posterior under the following Bayesian model: $u \sim N(0,I_p/p)$, $v \sim \mathsf{Uniform} (\{\pm 1\}^n)$ are the latent cluster assignments and independently 
$b_i \sim N(v_i \sqrt{\mu/n}\ u, I_p/p)$.
In other words, we posit that the data points were generated by a balanced mixture of two spherical Gaussians with means $\pm \sqrt{\mu/n}\ u$ and $u$ itself is sampled from a Gaussian distribution. (For simplicity, we assumed that the data is scaled and centered so that the variance of the components is $I_p/p$; the scalings here are chosen in part to maintain consistency with the next example.)  In this case, the posterior on the cluster assignments $v$ is given by $p(v \mid b) \propto \exp\left(\frac{p \mu}{2n(1 + \mu)} \langle vv^\top , BB^\top  \rangle\right)$ 
where $B \in \R^{n \times p}$ is the matrix with rows $b_i$. (See Appendix~\ref{a:example} for the derivation.) Note that this is an Ising model with
$ J = \frac{p \mu}{2n(1 + \mu)} BB^\top  $
and the rank of $J$ is at most $p$. Hence, our main result lets us sample from this distribution (posterior in the Gaussian Mixture Model) in polynomial time in fixed dimension $p$. In the case of a balanced mixture, the posterior will always be bimodal due to the symmetry of swapping the two cluster assignments, and so Glauber dynamics would not be expected to mix. (Also, our algorithms works for general data points $b_1,\ldots,b_n$ in which case the posterior can be an arbitrary positive semidefinite Ising model of rank $p$ --- in particular, it could be a Hopfield network and have even more than two modes.) In fact, the Hubbard-Stratonovich transform and our algorithm as a whole has a natural interpretation in terms of searching over the latent vector $u$ in this case (see Appendix~\ref{app:equiv}).
Finally, we note that this example can be easily generalized to assymetric mixture (mixing weights not 50/50); this just changes the prior, which results in an external field in the (Ising model) posterior. 

\begin{rem}
Importantly, the posterior sampling result we establish does not rely on the data being a typical sample from the posited Bayesian model. This is useful because in many machine learning and statistics applications the data is not exactly generated from the posited model, and nevertheless sampling from the posterior is very useful. On the other hand, if the data is indeed generated from the model (i.e., well-specified) then posterior sampling lets us compute the Bayes-optimal estimator of quantities of interest, e.g., compute $\Pr(v_i = v_j \mid B)$ in the GMM example which is the Bayes-optimal estimate of $1(v_i = v_j)$, the indicator that $i$ and $j$ are from the same component.  
\end{rem}

\paragraph{Posterior in Low-Dimensional Contextual SBM.}
The contextual stochastic block model \citep{deshpande2018contextual} is a more complex version of the previous GMM model in which the cluster structure is also reflected in the community structure of a graph. We consider the low-dimensional version of this model where the dimension of the contexts $p$ is small---this is morally related to, but different from, the spiked Wishart model with side information, see e.g., \cite{montanari2021estimation}. 
For simplicity, we describe the Gaussianized version of this model below, though our results also apply analogously to the original SBM version. 

The generative model is $v \sim \mathsf{Uniform} (\{\pm 1\}^n)$, $u \sim N(0,I_p/p)$, $W$ is a GOE matrix, i.e., a symmetric matrix where independently $W_{ij} \sim N(0,1/n)$ for $i < j$ and $W_{ii} \sim N(0,2/n)$, and $Z \in \R^{ n \times p}$ is a matrix with iid $N(0,1/p)$ entries. Then we observe
\[ A = \frac{\lambda}{n} vv^\top  + W, \qquad B = \sqrt{\frac{\mu}{n}} v u^\top  + Z, \qquad u \sim N(0,I_p/p). \]
Informally, words $A_{ij}$ is some indication of whether $v_i$ and $v_j$ are likely to agree, and rows of $B$ are context/feature vectors in $\mathbb{R}^p$
from a mixture of two spherical gaussians with means $\pm \sqrt{\mu/n}\ u$, where each gaussian corresponds to one community assignment.
In this model, the posterior (see Appendix~\ref{a:example} for the derivation) is
$p(v \mid A,B) \propto \exp\left(\frac{\lambda}{2} \langle vv^\top , A \rangle + \frac{p\mu}{2n(1 + \mu)}\langle vv^\top , BB^\top  \rangle \right)$,
so it is an Ising model where the interaction matrix is the weighted sum of $A$ and $BB^\top $. We can sample from this using our result as long as the dimension $p$ is fixed (since $BB^\top $ is rank at most $p$) and provided $\lambda\|A\|_{\op} < 1/2$. Note that if $A$ is actually generated from the model, then $\|A\|_{\op} \le 2(1 + o_{n \to \infty}(1))$ due to well-known results on spiked Wigner matrices (see \cite{perry2018optimality} and references within)
in which case we would have mixing for $\lambda < 1/4$. Like our previous application, the sampler works fine with any context matrix $B$. 



\section*{Acknowledgements}

This work was done in part while the authors were visiting the Simons Institute for the Theory of Computing.

\bibliography{bib}

\appendix


\newpage

\section*{Overview of Appendix}
The Appendix includes complete proofs of all of the main results. We set out notations and definitions in Appendix~\ref{a:not-df}. 
Appendix~\ref{sec:negdef} formalizes the argument for handling negative outlier eigenvalues. Appendix~\ref{sec:Z} gives the proof of the part of Theorem~\ref{t:main} for estimating the partition function, and Appendix~\ref{s:sampling} gives the proof for sampling. 
Appendix~\ref{app:equiv} provides a re-interpretation of the Hubbard-Stratonovich transform in terms of Gaussian mixture posteriors, and 
Appendix~\ref{app:technical} contains supporting technical lemmas for the previous sections. Appendix~\ref{a:example} contains additional calculations related to the examples. Finally, we prove the computational hardness results in Appendix~\ref{s:hardness}.  
\section{Notation and definitions}
\label{a:not-df}
\subsection{Notation}
For a set $I\subeq A$, we let $I\cdot c := \set{cx}{x\in I}$; for instance, $\wh Z\in Z \cdot [\rc 2,2]$ means $\rc 2 Z\le \wh Z\le 2Z$.

We will often omit subscripts and superscripts for probability distributions; when we need to be precise, we will indicate the variables as superscripts (for example, $p^{\si,\mu}$, $p^{\si|\mu}$). We use a lowercase letter $p$ to denote the probability density functions and an uppercase letter $P$ to denote the corresponding probability measure.
All probability densities are with respect to the uniform measure on the hypercube and Lebesgue measure on $\R^n$. When we write $\propto$, the constants of proportionality do not depend on the variables to the left of the conditioning. 

We collect here some notation used in the paper for easy reference.
\paragraph{Probability distributions and partition functions.}
\begin{align*}
    p_{J,h} (\si) &= \rc{Z_{J,h}} \ising{J}{h}\\
    Z_{J,h} &= \sum_{\si\in \BC} \ising{J}{h}\\
    \pproj^{\si,\mu^\prl} &\propto p_{J^\perp, h+X^\top \mu^\prl}(\si) \exp\pa{-\fc n2 \ve{\mu^\prl}^2}\\
    &= \exp \pa{\isingexp{J^\perp}{h+X^\top \mu^\prl} - \fc n2 \ve{\mu^\prl}^2}\\
    \pproj^{\si,y}(\si,y) &= \pproj^{\si,\mu^\prl}(\si, Qy)\\
    \Zproj(\mu^\prl) &= Z_{J^\perp, h+X^\top \mu^\prl} \exp\pa{-\fc n2 \ve{\mu^\prl}^2}\\
    \Zproj &= \int_{V^\prl} \Zproj(\mu^\prl)\,d\mu^\prl 
\end{align*}
\paragraph{Decomposing $J$.}
\begin{align*}
    J &= J_+-J_-\\
    J_+ &= \rc n XX^\top\\
    J^\prl &= \rc n X^\top P^\prl X\\
    J^\perp &= \rc n X^\top P^\perp X = J_+-J^\prl\\
    \Jpall &= J^\perp - J_-\\
    V &= \text{ subspace of $\R^n$ spanned by eigenvectors of $J_+$ with eigenvalues $>1-\rc c$}\\
    Q &= \text{ $n\times d$ matrix whose columns are an orthogonal basis for $V$}
\end{align*}
\paragraph{Probability distributions, partition functions, and partition function estimates from annealing/tempering.}
\begin{align*}
\Grid &= \bc{-L + \rc 2 \eta, -L + \fc 32 \eta, \ldots, L-\rc 2\eta}^d\\
\mu(y^*) &= \text{ approximate critical point of }G(u)=\log  \E_{\si\sim P_{J^\perp, X^\top  Qy^*}}[e^{-\an{u, J_-\si}}] + \rc 2 \an{u, J_-u}\\
h(y^*) &= \mu(y^*) + X^\top Qy^* + h\\
B(y^*)& =  \text{ hypercube with sides parallel to the standard axes, centered at $y^*$ with side length $\eta$}\\
    p_{\ell,y^*} &= p_{\be_\ell J^\perp, h(y^*)} \text{ where }\be_\ell = \fc{\ell-1}n\\
    p_{M+1}(\si,y^*) &=\fc{\int_{B(y^*)} \exp\pa{\rc 2 \an{\si, \Jpall \si} + \an{X^\top Qy + h, \si} - \fc n 2\ve{y}^2}\,dy}{\int_{[-L,L]^d}\sum_{\si\in \{\pm 1\}^d} \exp\pa{\rc 2 \an{\si, \Jpall \si} + \an{X^\top Qy + h, \si} - \fc n2\ve{y}^2}\,dy} \\
    g_\ell(\si) &= \exp\pa{\rc2\pa{\be_{\ell+1}-\be_\ell}\an{\si,J^\perp \si}} = \exp\pa{\rc{2n}\an{\si, J^\perp \si}}, \quad 1\le \ell \le M-1\\
    g_M(\si) = g_{M,y^*}(\si) &=\fc{\exp\pa{-\rc 2 \an{\si, J_-\si}} 
        }{\exp\pa{\an{\mu(y^*),\si}}}\int_{B(y^*)} \exp\pa{\an{X^\top Q(y-y^*), \si} 
        -\fc n2 \ve{y}^2} \,dy\\
    \wh Z_\ell(y^*) &= \text{ estimate for }Z_\ell(y^*)\\
    \wh Z(y^*) &= \wh Z_{M+1}(y^*)\\
    Z_\ell(y^*) &= Z_{\be_\ell J^\perp, h(y^*)}\\
    R_\ell(y^*) &= \fc{Z_\ell(y^*)}{\wh Z_\ell(y^*)}\\
    q_{\ell,y^*} &= \fc{Z_\ell(y^*)}{\wh Z_\ell (y^*)} p_{\ell,y^*}\\
    \pst_\ell(\si,y) &= \pa{\wh Z_\ell(y^*) \sum_{y\in \Grid} R_\ell(y)}^{-1} \ising{J^\perp}{h(y^*)}.
\end{align*}

\subsection{Background on Markov chains}

\label{a:def}
Let $P$ be a measure on some space $\Om$ and $T$ be the transition kernel of the ``natural" Markov chain associated with $P$, e.g., Glauber dynamics (Algorithm~\ref{a:mc}) when $P$ is defined on the hypercube $\Om = \{\pm 1\}^n$. 
The Poincar\'e and modified log-Sobolev constants of  $P$ are defined as
\begin{align*}
    \CP(P) &=  \sup\set{\fc{\Var_P(f)}{\cE_P(f,f)}}{f:\BC\to \R, \Var_P(f)\ne 0}\\
    \CMLS(P) &= \sup\set{\fc{2\Ent_P(f)}{\cE_P(f,\log f)}}{f:\BC\to \R_{\ge 0}, \Ent_P(f)\ne 0}
\end{align*}
where $\Ent_P(f) = \E_P[f\log f] - \E_P[f]\log \E_P[f]$, and
\begin{align*}
    \cE_P(f,g) &= \E_P[f \cdot \cL_Pg]\\
    \text{where}\quad \cL_Pf &= (\id - T)f.
\end{align*}
In particular, for Glauber dynamics on $\BC$,
\begin{align*}
\quad (\cL_Pf)(\si) &=\rc n \sumo in \pa{\E_P[f(x)|x_{-i}=\si_{-i}] - f(\si)}.
\end{align*}
Here, for $\si\in \BC$, $\si_{-i}\in \{\pm1\}^{n-1}$ denotes all coordinates except the $i$th one.
Note that some texts use instead the reciprocal of $\CP,\CMLS$, or do not include the $\rc n$.

We also define the Cheeger constant of the Markov chain by 
\begin{align*}
\Phi &= \min_{A\subeq \Om,P(A)\le \rc 2} \fc{Q(A,A^c)}{P(A)}\\
\text{where}\quad Q(A,B) &= \int_{A} T(x,B) \,P(dx).
\end{align*}

\section{Sampling with negative definite spikes using a variational argument}
\label{sec:negdef}
The proof of the following result gives a generic algorithm which, given sampling access to a distribution $P$ and its tilts, samples from any distribution $Q$ which is reweighted by a negative definite quadratic form with small trace. As stated, the result applies to any distribution supported on a $\sqrt{n}$-radius sphere, not just discrete distributions on the hypercube.
In fact, when $-J$ is strictly negative definite, the exact same argument applies not just to distributions on the sphere, but supported on any compact set. 

\begin{thm}\label{thm:negdef}
Suppose we are given a sampling oracle for a distribution $P$ 
supported on the sphere $\{ x : \|x\| = \sqrt{n} \}$
and all of its tilts 
\[ \frac{dP_{\lambda}}{dP}(x) \propto e^{\langle \lambda, x \rangle}. \]
Also, suppose that for any $\lambda$ the covariance matrix of $P_{\lambda}$ is upper bounded in spectral norm by $M$.
Then for any $J \succeq 0$ and $\ep > 0$, if we define the reweighted measure
\[ \frac{dQ}{dP}(x) \propto e^{-\langle x, J x \rangle/2}, \]
then there exists an algorithm which with probability at least $1 - \delta$, outputs $\lambda \in \mathbb{R}^n$ such that
\[ \log \frac{dQ}{dP_{\lambda}}(x) \le M\Tr(J) + \ep \]
with runtime and oracle complexity polynomial in $n$, $1/\ep$, $M$, $\log(1/\delta)$, and $\|J\|_{\op}$. 
\end{thm}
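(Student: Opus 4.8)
The plan is to turn the variational argument of Section~\ref{sec:technical-overview} into an explicit algorithm: introduce a functional $G$ on $\R^n$ whose critical points $\mu$ produce, via the tilt $\la := -J\mu$, a proposal $P_\la$ whose density ratio against $Q$ is uniformly bounded by (roughly) $e^{M\Tr J}$; show $G$ is smooth and coercive so a critical point provably lies in a ball of polynomial radius; and then locate an approximate critical point by stochastic gradient descent built from the tilt oracle. First I would reduce to the case $J\succ0$: replacing $J$ by $\tilde J := J + \de I_n$ with $\de := \ep/(CnM)$ for a suitable absolute constant $C$ leaves $Q$ and every tilt $P_\la$ unchanged \emph{as measures on the sphere} $\{\ve x = \sqrt n\}$, since the extra factor $e^{-\de\ve x^2/2} = e^{-\de n/2}$ is constant there, while $\Tr\tilde J = \Tr J + \de n$ inflates the target bound by only $O(\ep)$ and $\la_{\min}(\tilde J)\ge\de$; so after relabeling I may assume $\la_{\min}(J) =: \de > 0$. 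The functional is
\[ G(\mu) := \log\E_P\ba{e^{-\an{\mu,J\si}}} + \rc 2\an{\mu,J\mu}, \]
and differentiating under the expectation, using that $P_{-J\mu}$ has density $\propto e^{-\an{\mu,J\si}}$ with respect to $P$, yields $\nabla G(\mu) = J\pa{\mu - \E_{P_{-J\mu}}[\si]}$ and $\nabla^2 G(\mu) = J\,\Cov_{P_{-J\mu}}(\si)\,J + J$. Since $J$ is invertible, critical points of $G$ are exactly the solutions of the fixed point equation $\mu = \E_{P_{-J\mu}}[\si]$, and since $\Cov_{P_{-J\mu}}(\si)\preceq M I_n$, the function $G$ is $\be$-smooth with $\be := M\ve J_\op^2 + \ve J_\op$.

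Next I would establish coercivity and the polynomial bounds that feed into SGD. Jensen gives $\log\E_P[e^{-\an{\mu,J\si}}]\ge -\an{\mu,J\E_P[\si]}\ge -\sqrt n\,\ve J_\op\ve\mu$ (using $\ve{\E_P[\si]}\le\sqrt n$, as $P$ is supported on the sphere), so $G(\mu)\ge -\sqrt n\,\ve J_\op\ve\mu + \tfrac\de2\ve\mu^2$; since $G(0)=0$, every minimizer lies in $\cB := \{\ve\mu\le R\}$ with $R := 2\sqrt n\,\ve J_\op/\de$, which is polynomial in $n,\ve J_\op,M,1/\ep$. Hence a critical point exists in $\cB$, $-\inf G$ is polynomially bounded, and on $\cB$ the one-sample estimator $g_\mu := J(\mu-\hat\si)$ with $\hat\si\sim P_{-J\mu}$ (one oracle call, unbiased for $\nabla G(\mu)$) obeys $\ve{g_\mu}\le\ve J_\op(R+\sqrt n)$ deterministically, hence has polynomially bounded second moment. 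I would then run projected SGD on $\cB$ with these stochastic gradients; since $\be$, $-\inf G$, $\diam\cB$, and the gradient second moment are all polynomial, \citet{ghadimi2013stochastic} yields, in a number of steps polynomial in the above and in $1/\ep_1$, a point $\mu\in\cB$ with $\E\ve{\nabla G(\mu)}^2\le\ep_1^2/4$, so $\ve{\nabla G(\mu)}\le\ep_1$ with probability $\ge\tfrac34$ by Markov. To reach confidence $1-\de$ with only a $\log(1/\de)$ overhead, I would repeat $O(\log(1/\de))$ times and in each run verify by estimating $\ve{\nabla G(\mu)} = \ve{J(\mu-\E_{P_{-J\mu}}[\si])}$ to accuracy $\ep_1/4$ from $\poly(n,\ve J_\op,1/\ep_1,\log(1/\de))$ extra oracle samples (coordinatewise Hoeffding and a union bound), keeping a run that passes the test; the choice $\ep_1 := \de\sqrt{\ep/\ve J_\op}$ is polynomial in the allowed quantities.

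It then remains to convert an approximate critical point into the advertised density bound. Setting $\la := -J\mu$ and completing the square on the sphere,
\[ \frac{dQ}{dP_\la}(x) = \frac{1}{Z(\mu)}\exp\pa{-\tfrac12\an{x-\mu,J(x-\mu)}},\qquad Z(\mu) := \E_{P_\la}\ba{\exp\pa{-\tfrac12\an{\si-\mu,J(\si-\mu)}}}, \]
so $\tfrac{dQ}{dP_\la}(x)\le 1/Z(\mu)$ for every $x$ in the support, because $J\succeq0$. Jensen's inequality gives $-\log Z(\mu)\le\tfrac12\Tr(J\,\E_{P_\la}[(\si-\mu)(\si-\mu)^\top])$, and splitting the second moment about $\nu := \E_{P_\la}[\si]$ bounds the right-hand side by $\tfrac12\Tr(J\,\Cov_{P_\la}(\si)) + \tfrac12\an{\nu-\mu,J(\nu-\mu)}\le \tfrac12 M\Tr J + \tfrac12\ve J_\op\ve{\nu-\mu}^2$, using the covariance hypothesis and $J\succeq0$ for the first term. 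Since $\nu-\mu = -J^{-1}\nabla G(\mu)$ we have $\ve{\nu-\mu}\le\ep_1/\de$, so the last term is $\le\tfrac\ep2$; hence $\log\tfrac{dQ}{dP_\la}(x)\le\tfrac12 M\Tr J+\tfrac\ep2$ uniformly on the support, which after undoing the reduction gives $\le M\Tr J+\ep$ for the original $J$.

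I expect the main obstacle to be making the SGD step simultaneously efficient and sufficient. Because SGD certifies only an \emph{approximate} stationary point, the whole argument rests on the structural identity $\nabla G(\mu) = J(\mu - \E_{P_{-J\mu}}[\si])$ together with invertibility of $J$ (secured by the $\de I_n$ perturbation), which is what turns a small gradient into a small fixed-point residual and thence into an $O(\ep)$ loss in the final display; and it requires verifying that \emph{every} quantity controlling the SGD complexity---the smoothness $\be$, the lower bound $-\inf G$, the radius $R$, and the gradient variance---is genuinely polynomial, which is exactly where the uniform covariance bound $M$ on all tilts $P_\la$ is used in an essential way (downstream, for Ising models, this is what Theorem~\ref{thm:glauber} / spectral independence supplies). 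The sphere hypothesis itself is needed only for the ``constant on the sphere'' perturbation used to force $J\succ0$; when $-J$ is already strictly negative definite that step disappears and $P$ may be supported on any compact set.
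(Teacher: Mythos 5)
Your proposal is correct and follows essentially the same route as the paper: the same strictly-PD reduction by adding a small multiple of the identity (constant on the sphere), the same functional $G(\mu)=\log\E_P[e^{-\langle\mu,J\sigma\rangle}]+\tfrac12\langle\mu,J\mu\rangle$ with its gradient/fixed-point identity, the same Jensen-plus-covariance-decomposition bound on $Z(\mu)$, the same coercivity argument, and the same appeal to \citet{ghadimi2013stochastic} for an approximate critical point. Your explicit tracking of how small the stationarity error $\ep_1$ must be (via $\|\nu-\mu\|\le\ep_1/\lambda_{\min}$) and the extra verification/amplification step are slightly more careful than, but not materially different from, the paper's argument.
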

Specializing this result to the case of Ising models gives the following algorithmic result.
\begin{cor}\label{thm:negdef-ising}
Suppose that $J$ is an arbitrary symmetric matrix and decompose $J = J_+ - J_-$ where both $J_+,J_-$ are positive semidefinite and suppose that $\ve{J_+}_\op \le 1 - \rc c$ for $c > 0$. Let $h \in \mathbb{R}^n$ be arbitrary, and define $Q(\sigma) \propto \exp(\rc2 \langle \sigma, J \sigma \rangle + \langle h, \sigma \rangle)$ and $P_{\lambda}(\sigma) \propto \exp(\rc2\langle x, J_+ x \rangle + \langle h + \lambda, x \rangle)$. 
There exists an algorithm which with probability at least $1 - \delta$, outputs $\lambda \in \mathbb{R}^n$ such that
\[ \log \frac{dQ}{dP_{\lambda}}(\sigma) \le c\Tr(J_-) + \ep \]
with runtime and oracle complexity polynomial in $n$, $1/\ep$, $M$, $\log(1/\delta)$, and $\|J_-\|_{\op}$. 
\end{cor}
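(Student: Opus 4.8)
The plan is to derive Corollary~\ref{thm:negdef-ising} directly from Theorem~\ref{thm:negdef} by instantiating it with $P := P_{J_+,h}$, i.e.\ $p(\sigma)\propto\exp(\tfrac12\langle\sigma,J_+\sigma\rangle+\langle h,\sigma\rangle)$ on $\{\pm1\}^n$. The first step is to verify the structural hypotheses. Since every $\sigma\in\{\pm1\}^n$ satisfies $\|\sigma\|=\sqrt n$, the measure $P$ — and, as tilting by $e^{\langle\lambda,\sigma\rangle}$ leaves the support unchanged, each tilt $P_\lambda$ — is supported on the sphere $\{x:\|x\|=\sqrt n\}$; moreover the tilt $\frac{dP_\lambda}{dP}(\sigma)\propto e^{\langle\lambda,\sigma\rangle}$ is exactly $P_\lambda(\sigma)\propto\exp(\tfrac12\langle\sigma,J_+\sigma\rangle+\langle h+\lambda,\sigma\rangle)$, which matches the corollary's definition of $P_\lambda$. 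Because $0\preceq J_+\preceq(1-\tfrac1c)I\prec I_n$, Theorem~\ref{thm:glauber} lets us (approximately) sample from each $P_\lambda$ by Glauber dynamics; I address the exact-versus-approximate oracle issue at the end.

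Next I would supply the covariance bound required by Theorem~\ref{thm:negdef}. Every tilt $P_\lambda$ has interaction matrix $J_+$ with $\|J_+\|_{\op}\le 1-\tfrac1c$, so by the spectral-independence estimate for bounded-spectral-diameter Ising models (\cite{eldan2020spectral}; equivalently, apply the Poincar\'e inequality of Theorem~\ref{thm:glauber} to linear test functions $\sigma\mapsto\langle v,\sigma\rangle$, whose Glauber Dirichlet form is $O(\|v\|^2/n)$) the covariance matrix of $P_\lambda$ is bounded in operator norm by $\frac{1}{1-\|J_+\|_{\op}}\le c$. Thus the hypothesis of Theorem~\ref{thm:negdef} holds with $M=\frac{1}{1-\|J_+\|_{\op}}\le c$.

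It then remains only to identify the reweighted measure. Taking $J:=J_-\succeq 0$ in Theorem~\ref{thm:negdef}, the reweighting $\frac{dQ}{dP}(\sigma)\propto e^{-\langle\sigma,J_-\sigma\rangle/2}$ yields
\[ Q(\sigma)\propto e^{-\langle\sigma,J_-\sigma\rangle/2}\exp\!\pa{\tfrac12\langle\sigma,J_+\sigma\rangle+\langle h,\sigma\rangle} = \exp\!\pa{\tfrac12\langle\sigma,J\sigma\rangle+\langle h,\sigma\rangle}, \]
which is precisely the target Ising model of the corollary. Theorem~\ref{thm:negdef} therefore outputs, with probability $\ge 1-\delta$, a vector $\lambda$ with $\log\frac{dQ}{dP_\lambda}(\sigma)\le M\Tr(J_-)+\ep\le c\Tr(J_-)+\ep$, in time polynomial in $n$, $1/\ep$, $M$, $\log(1/\delta)$ and $\|J_-\|_{\op}$, as claimed.

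The one point that is not pure bookkeeping is that Theorem~\ref{thm:negdef} is phrased with an \emph{exact} sampling oracle for $P$ and its tilts, whereas Glauber dynamics only delivers samples within TV distance $\ep'$ in $O(n\log(n/\ep')/(1-\|J_+\|_{\op}))$ steps. I would resolve this by observing that the algorithm underlying Theorem~\ref{thm:negdef} makes only polynomially many oracle calls and its guarantee degrades continuously when each exact sample is replaced by one from a distribution that is $\ep'$-close in TV distance; choosing $\ep'$ inverse-polynomially small then contributes only polynomial factors and preserves the stated runtime and oracle complexity. Getting this stability argument right is the main (though routine) obstacle; everything else follows mechanically once $P$, its tilts, the bound $M$, and $J=J_-$ are plugged into Theorem~\ref{thm:negdef}.
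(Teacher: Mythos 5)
Your proposal is correct and follows essentially the same route as the paper: instantiate Theorem~\ref{thm:negdef} with $P = P_{J_+,h}$, take $M = \frac{1}{1-\|J_+\|_{\op}} \le c$ from the covariance bound of \cite{eldan2020spectral}, set $J := J_-$, and implement the sampling oracle with Glauber dynamics via Theorem~\ref{thm:glauber}. The only (cosmetic) difference is at the exact-versus-approximate oracle step: where you invoke a ``continuous degradation'' stability argument, the paper makes this precise by running each Glauber chain to TV error $\delta/2q$ (with $q$ the number of oracle calls) and coupling all outputs to exact samples, so that with probability $1-\delta/2$ the algorithm behaves identically to the idealized one---a cleaner route that needs no continuity of the SGD guarantee in the samples.
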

\begin{proof}
This follows by applying Theorem~\ref{thm:negdef} with $\delta' = \delta/2$.
First, we recall from \cite{eldan2020spectral} (as a consequence of the Poincar\'e inequality) that we can take 
$M = \frac{1}{1 - \|J_+\|_{\op}} \le c$ where $M$ is the upper bound on the spectral norm of the covariance matrix of $P_{\mu}$ as defined in Theorem~\ref{thm:negdef}. If we supposed we had access to an exact sampler from each of the distributions $P_{\mu}$, this would imply the result. Since we instead will implement each sampling call with a Markov chain (the Glauber dynamics) which can draw samples extremely close to the distribution $P_{\mu}$, the actual result follows by coupling these outputs to a hypothetical process which has exact samples.

More precisely, from Theorem~\ref{thm:glauber} we can draw a sample from any of the distributions $P_{\lambda}$ in polynomial time in the sense that for any $\ep > 0$, with $\poly(n,\log(1/\ep))$ time we can generate a sample with total variation distance at most $\ep$. If $q$ is the maximum number of queries made by the algorithm from Theorem~\ref{thm:glauber}, then by taking $\ep = \delta/2q$ and using the union bound, we can with probability at least $1 - \delta/2$ couple all of the outputs of the Markov chains invoked at every oracle call with samples from the true distribution $P_{\lambda}$. Therefore, with total probability at least $1 - \delta$, the algorithm which uses Markov chain samplers will output $\lambda$ satisfying the guarantee of Theorem~\ref{thm:negdef}. This proves the result. 
\end{proof}

We now proceed to the proof of Theorem~\ref{thm:negdef}. 
In the algorithm and analysis, we will use the fact that stochastic gradient descent with an appropriate step size schedule is able to find approximate critical points of smooth functions (a stronger and more explicit result is given in the original statement in \cite{ghadimi2013stochastic}, see also \cite{allen2018make}).
\begin{thm}[Corollary 2.5 of \cite{ghadimi2013stochastic}]\label{thm:sgd}
Suppose that $f$ is a differentiable function which is $L$-smooth with respect to the Euclidean norm $\|\cdot\|$ in the sense that for all $x,y$
\[ \|\nabla f(x) - \nabla f(y)\| \le L\|x - y\|. \]
Let $f^* := \inf_x f(x)$ and define 
\[ D_f := \sqrt{\frac{2(f(x_1) - f^*)}{L}}. \]
Then there exists a polynomial time algorithm (2-RSG, the two-phase randomized stochastic gradient algorithm) which given oracle access to (identical, independent copies of) a stochastic gradient oracle $g$ such that $\mathbb{E}[g(x_t) \mid x_t] = \nabla f$ and $ \mathbb{E}[\exp(\|g(x_t)\|^2/\sigma^2) \mid x_t] \le 1$ and $\ep > 0$, with probability at least $1 - \delta$ outputs $x$ such that $\|\nabla f\| \le \ep$ using $\poly(D_f,\log(1/\delta),\sigma,L,1/\ep)$ runtime and oracle calls.
\end{thm}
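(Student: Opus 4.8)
The plan is to reconstruct the Ghadimi--Lan argument, which proceeds in two phases: an in-expectation analysis of the \emph{randomized stochastic gradient} (RSG) method, followed by a boosting step (2-RSG) to obtain the high-probability guarantee. For RSG, run the iteration $x_{t+1} = x_t - \gamma_t g(x_t)$ for $t = 1,\dots,N$ with a constant step size $\gamma_t \equiv \gamma \le 1/L$, and output $x_R$ where $R$ is drawn uniformly from $\{1,\dots,N\}$. The first step is the one-step descent inequality coming from $L$-smoothness: $f(x_{t+1}) \le f(x_t) - \gamma \an{\nabla f(x_t), g(x_t)} + \tfrac{L\gamma^2}{2}\ve{g(x_t)}^2$. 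Writing $g(x_t) = \nabla f(x_t) + \delta_t$ with $\E[\delta_t \mid x_t] = 0$ and expanding both inner products, the cross terms vanish upon taking conditional expectation, while the exponential-moment hypothesis on $\ve{g(x_t)}$ (equivalently on $\ve{\delta_t}$, up to harmless rescaling of $\sigma$, since the stated bound $\le 1$ should read $\le e$) yields $\E[\ve{\delta_t}^2 \mid x_t] = O(\sigma^2)$. Hence $\E[f(x_{t+1})] \le \E[f(x_t)] - (\gamma - \tfrac{L\gamma^2}{2})\E\ve{\nabla f(x_t)}^2 + O(L\sigma^2\gamma^2)$.

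Telescoping over $t$ and using $f(x_{N+1}) \ge f^*$ together with $\gamma - \tfrac{L\gamma^2}{2} \ge \gamma/2$ gives $\tfrac{\gamma}{2}\sum_{t=1}^N \E\ve{\nabla f(x_t)}^2 \le f(x_1) - f^* + O(L\sigma^2\gamma^2 N)$; dividing by $N\gamma/2$ and recognizing the average on the left as $\E\ve{\nabla f(x_R)}^2$ produces $\E\ve{\nabla f(x_R)}^2 \le \tfrac{2(f(x_1)-f^*)}{N\gamma} + O(L\sigma^2\gamma)$. Using $f(x_1) - f^* = \tfrac12 L D_f^2$ and optimizing the feasible choice $\gamma = \min\{1/L,\ D_f/(\sigma\sqrt N)\}$ yields a bound whose terms decay like $1/N$ and $1/\sqrt N$ with coefficients polynomial in $L,D_f,\sigma$. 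Consequently, taking $N = \poly(L,D_f,\sigma,1/\ep)$ forces $\E\ve{\nabla f(x_R)}^2 \le \ep^2/4$, and Markov's inequality shows that one run of RSG returns a point with $\ve{\nabla f} \le \ep/2$ with probability at least $3/4$.

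The main work is the 2-RSG boosting step. Run $S = O(\log(1/\delta))$ independent copies of RSG, producing candidates $\bar x_1,\dots,\bar x_S$; except with probability $(1/4)^S \le \delta/2$ at least one candidate $\bar x_{k^*}$ satisfies $\ve{\nabla f(\bar x_{k^*})} \le \ep/2$. Since $\nabla f$ cannot be evaluated directly, add a post-optimization phase: for each $k$, form $\hat g_k = \tfrac1T\sum_{i=1}^T g_i(\bar x_k)$ from $T$ fresh independent oracle calls and output the candidate minimizing $\ve{\hat g_k}$. The key estimate is a concentration bound for $\ve{\hat g_k - \nabla f(\bar x_k)}$: the summands are mean-zero with the stated sub-Gaussian control, so $\E\ve{\hat g_k - \nabla f(\bar x_k)}^2 = O(\sigma^2/T)$ and, by a vector-valued Bernstein/Pinelis inequality, $\Pr[\ve{\hat g_k - \nabla f(\bar x_k)} > \ep/4] \le \delta/(2S)$ once $T = \poly(\sigma,1/\ep,\log(S/\delta))$. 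A union bound over the $S$ candidates makes all estimates accurate to within $\ep/4$ with probability $\ge 1-\delta/2$. On the joint good event (probability $\ge 1-\delta$), the selected candidate $\bar x$ obeys $\ve{\nabla f(\bar x)} \le \ve{\hat g(\bar x)} + \ep/4 \le \ve{\hat g_{k^*}} + \ep/4 \le \ve{\nabla f(\bar x_{k^*})} + \ep/2 \le \ep$. The total oracle and time cost is $S(N + T) = \poly(D_f,\log(1/\delta),\sigma,L,1/\ep)$.

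I expect the genuinely delicate points to be: (i) converting the exponential-moment hypothesis on the oracle into both the second-moment bound used in the telescoping argument and the sub-Gaussian tail bound for $\ve{\hat g_k - \nabla f}$ in the post-optimization phase --- this is precisely where sub-Gaussianity (rather than bounded variance alone) is needed, and it is what makes the failure probability scale as $\log(1/\delta)$ rather than $\poly(1/\delta)$; and (ii) the bookkeeping that pins down the step size and iteration count so that $N$ and $T$ are simultaneously polynomial in all five parameters. Everything else is routine smooth-nonconvex-optimization calculus.
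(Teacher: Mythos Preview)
The paper does not prove this theorem at all: it is stated purely as a citation of Corollary~2.5 of \cite{ghadimi2013stochastic} (with a pointer also to \cite{allen2018make}), and the paper immediately moves on to use it as a black box inside the proof of Theorem~\ref{thm:negdef}. So there is no ``paper's own proof'' to compare against.

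That said, your reconstruction is faithful to the Ghadimi--Lan argument. The RSG phase (descent lemma from $L$-smoothness, telescoping against $f^*$, random-iterate output, step-size optimization yielding a $\poly$ bound on $N$) is exactly their Theorem~2.1, and the 2-RSG boosting step (independent restarts plus a post-optimization gradient-norm estimate built from $T$ fresh oracle calls, selected by smallest $\|\hat g_k\|$) is exactly their Corollary~2.5. You are also right to flag that the moment hypothesis as literally written, $\mathbb{E}[\exp(\|g(x_t)\|^2/\sigma^2)\mid x_t]\le 1$, cannot hold for a nonzero oracle and is a transcription slip; the intended condition is the sub-Gaussian bound (e.g., $\le e$, or stated for the centered noise $g(x_t)-\nabla f(x_t)$), and indeed it is precisely this sub-Gaussianity that is needed in the post-optimization concentration step to get the $\log(1/\delta)$ dependence you identify in point~(i). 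Your sketch is correct.
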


\begin{proof}[Proof of Theorem~\ref{thm:negdef}]
First, we can assume $J \succeq \ep/Mn$ without loss of generality by adding $(\ep/Mn)I$ to $J$, which does not change the measure $Q$ and increases the trace by just $\ep/M$. (This only changes the final guarantee by an additional additive $\ep$, which can be trivially corrected by dividing $\ep$ by $2$.) 

The key idea of the proof is a variational argument. Define the functional
\[ G(\mu) := \log \mathbb{E}_{P}[e^{\langle \mu, -J X \rangle}] + \rc2\langle \mu, J \mu \rangle \]
and observe that its derivative can be expressed in terms of the tilted measure $P_{-J \mu}$:
\[ \nabla G(\mu) = -J \mathbb{E}_{P_{-J \mu}}[X] + J \mu. \]
Now observe that for any $\mu$,
\[ \frac{dQ}{dP_{-J \mu}}(x) \propto e^{-\langle x, J x \rangle/2 + \langle J \mu, x \rangle} \propto e^{-\rc2\langle x - \mu, J (x - \mu) \rangle} \]
and so
\[ \frac{dQ}{dP_{-J \mu}}(x) = \frac{1}{Z} e^{-\rc 2\langle x - \mu, J (x - \mu) \rangle} \]
where
\[ Z := \mathbb{E}_{P_{J \mu}}[e^{-\rc2\langle X - \mu, J (X - \mu) \rangle}]. \]
From the definition and the fact that $J$ is psd, we have $Z \le 1$. Also, by Jensen's inequality
\[ Z = \mathbb{E}_{P_{J \mu}}[e^{-\rc2\langle X - \mu, J (X - \mu) \rangle}] \ge \exp\left(-\mathbb{E}_{P_{J \mu}}\ba{\rc2\langle X - \mu, J (X - \mu) \rangle}\right). \]
Observe that if $\Sigma := \mathbb{E}_{P_{J \mu}}[XX^\top ] - \mathbb{E}_{P_{J \mu}}[X]\mathbb{E}_{P_{J \mu}}[X]^\top $ then
\begin{align*}
\mathbb{E}_{P_{J \mu}}[(X - \mu)(X - \mu)^\top ]
&= \mathbb{E}_{P_{J \mu}}[XX^\top ] - \mathbb{E}_{P_{J \mu}}[X]\mu^\top  - \mu\mathbb{E}_{P_{J \mu}}[X]^\top  + \mu \mu^\top  \\
&= \Sigma + \mathbb{E}_{P_{J \mu}}[X]\mathbb{E}_{P_{J \mu}}[X]^\top  - \mathbb{E}_{P_{J \mu}}[X]\mu^\top  - \mu\mathbb{E}_{P_{J \mu}}[X]^\top  + \mu \mu^\top  \\
&= \Sigma + (\mathbb{E}_{P_{J \mu}}[X] - \mu)(\mathbb{E}_{P_{J \mu}}[X] - \mu)^\top 
\end{align*}
so 
\begin{align} 
\nonumber
\log Z 
&\ge -\rc2\langle \mathbb{E}_{P_{J \mu}}[(X - \mu)(X - \mu)^\top ], J\rangle \\
\nonumber
&= -\rc2\langle \Sigma + (\mathbb{E}_{P_{J \mu}}[X] - \mu)(\mathbb{E}_{P_{J \mu}}[X] - \mu)^\top , J \rangle \\
\nonumber
&\ge -\rc2\|\Sigma\|_{\op} \Tr(J) - \rc2\|J\mathbb{E}_{P_{J \mu}}[X] - J\mu\|_2 \|\mathbb{E}_{P_{J \mu}}[X] - \mu\|_2 \\
\nonumber
&= -\rc2\|\Sigma\|_{\op} \Tr(J) - \rc2\|\nabla G(\mu)\|_2  \|\mathbb{E}_{P_{J \mu}}[X] - \mu\|_2.
\end{align}
Note that the final lower bound can be maximized if we can find a critical point of $G$. We next argue that such a critical point exists.

Note that $G(0) = 0$ by definition and because we reduced to the case $J \succeq (\ep/Mn)I$, 
\begin{align} \label{eqn:Gmu-bound}
G(\mu) 
&\ge \log \mathbb{E}_P[e^{\langle \mu, -J X \rangle}] + (\ep/2Mn)\|\mu\|_2^2 \nonumber \\
&\ge -\|J \mu\|_1 + (\ep/2Mn)\|\mu\|_2^2 \ge -\|J\|_{\op} \|\mu\|_2\sqrt{n} + (\ep/2Mn)\|\mu\|_2^2 ,
\end{align}
which is positive provided $\|\mu\|_2 > 2Mn^{3/2}\|J\|_{\op}/\ep$. Hence the global minimum of $G$ must be attained somewhere on the compact set $\mathcal{K} = \{\mu : \|\mu\|_2 \le 2Mn^{3/2}\|J\|_{\op}/\ep \}$. At this point, we have proved the existence of a critical point. We next show that one can be approximately found with stochastic gradient descent initialized at zero, by checking the assumptions of Theorem~\ref{thm:sgd}. 

By the invertibility of $J$, any solution of the equation $0 = \nabla G(\mu) = -J \mathbb{E}_{P_{J \mu}}[X] + J \mu$ satisfies $\mu = \mathbb{E}_{P_{J \mu}}[X]$ and hence $\mu \in [-1,1]^n$ and $\|\mu\|_2 \le \sqrt{n}$. In particular the global minimum satisfies this, so combined with \eqref{eqn:Gmu-bound} we have
\[ \inf_{\mu} G(\mu) \ge \inf_{r \le \sqrt{n}}[-r\sqrt{n} \|J\|_{\op} + (\ep/2Mn)r^2] > -
\fc{Mn^2}{2\ep}\ve{J}_{\op}^2 
\]
Since
\[ \nabla^2 G(\mu) = -J \mathbb{E}_{P_{J \mu}}[XX^\top ] J + J \]
we have that $\|\nabla^2 G(\mu)\|_{\op} \le M\|J\|_{\op}^2 + \|J\|_{\op} =: L$ 
which means that $G(\mu)$ is $L$-smooth with respect to the Euclidean norm. 
Recalling that $\nabla G(\mu) = -J \mathbb{E}_{P_{J \mu}}[X] + J \mu$, we see that if $x \sim P_{J \mu}$, which we have a sampling oracle for by assumption, then $g(\mu) := -J (x - \mu)$ is a stochastic gradient oracle for $G(\mu)$ satisfying $\|g(\mu)\| \le \|J\|_{\op} \|x - \mu\| \le 2\|J\|_{\op} \sqrt{n}$. This means that all of the assumptions of Theorem~\ref{thm:sgd} are satisfied and we can find an $\ep$-approximate critical point of $G$ using $\poly(n,\|J\|_{\op},M,1/\ep)$ runtime and calls to the sampling oracle. Outputting $\lambda := -J \mu$ gives the result. 

\end{proof}


\begin{rem}\label{rem:variational-comparison}
The variational argument in the proof is partially inspired by, thought different from, some previous arguments in the variational methods literature; for example, the construction of Belief Propagation fixed points using the Bethe free energy, and variants of this argument which arise from the Thouless-Anderson-Palmer and naive mean-field free energy (see e.g., \cite{mezard2009information,wainwright2008graphical}). 
As with all such variational arguments, the key idea is to construct a solution to a fixed point equation by writing it as the gradient of a well-behaved functional.
To make a more explicit connection with that literature, consider the special case where $P$ is a product measure on the hypercube $\{ \pm 1\}^n$, so $P(\sigma) \propto e^{\langle h_0, \sigma \rangle}$ for some $h_0 \in \mathbb{R}^n$ encoding the bias of each coordinate. Then the equation $\nabla G(\mu) = 0$ is equivalent to $-J \mu = -J \tanh(h_0 - J \mu)$
and because $J$ is invertible, it simplifies to the fixed-point equation
\[ \mu = \tanh(h_0 - J \mu). \]
This is almost the same as the naive mean-field fixed point equation, except that in that case, the diagonal of $J$ must be zeroed out whereas in our case they are not. Relatedly, $G(\mu)$ is not the same as the naive mean-field free energy corresponding to $Q$, and the positive definiteness of $J$ is not needed to solve the naive mean-field equations but plays a key role in our variational argument.
\end{rem}


\section{Estimating the partition function}\label{sec:Z}
In this section, we develop and analyze an algorithm for computing the partition function $Z$. 

\paragraph{Application of the Hubbard-Stratonovich transform.}
Based on the Hubbard-Stratonovich transform, we can easily prove the following Theorem. (We warn the reader that the notation has a couple minor cosmetic differences from the Technical Overview, with the goal of minimizing ambiguity.)

\begin{thm}\label{t:hs}
Let $J\in \R^{n\times n}$ be a symmetric matrix, and write $J = \rc n X^\top X - J_-$ for $X\in \R^{m\times n}$ and $J_-$ negative semi-definite. 

Let $V\subeq \R^m$ be a subspace. Let $P^\prl$ and $P^\perp$ be the projections onto $V$ and $V^\perp$. Let $J^\prl = \rc n X^\top P^\prl X$ and $J^\perp = J - J^\prl$. 
Then
\begin{align*}
Z_{J,h} &= \pf{n}{2\pi}^{d/2} \Zproj &
\text{where}\quad 
\Zproj &= \int_{V^\prl} Z_{J^\perp, h + X^\top  \mu^\prl} \exp\pa{-\fc n2 \ve{\mu^\prl}^2}\,d\mu^\prl.
\end{align*}
\end{thm}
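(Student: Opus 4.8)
The plan is to prove this as a direct application of the Hubbard--Stratonovich transform (Lemma~\ref{l:hs}) to the ``parallel'' part of the quadratic form. Starting from the definition $\ZJh = \sum_{\si\in\BC}\exp\pa{\rc2\an{\si,J\si} + \an{h,\si}}$, I would first split the quadratic form using the tautology $J = J^\perp + J^\prl$, so that $\rc2\an{\si,J\si} = \rc2\an{\si,J^\perp\si} + \rc2\an{\si,J^\prl\si}$. Since $P^\prl$ is an \emph{orthogonal} projection ($(P^\prl)^\top = P^\prl = (P^\prl)^2$), the parallel term is genuinely a scaled squared Euclidean norm: $\rc2\an{\si,J^\prl\si} = \fc1{2n}\an{P^\prl X\si, P^\prl X\si} = \fc1{2n}\ve{P^\prl X\si}^2$, so in particular $J^\prl\succeq 0$ and Lemma~\ref{l:hs} is applicable. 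Applying it with the matrix $P^\prl X$ (whose column space lies in $V$, of dimension $d$) and $\gamma = 1/\sqrt n$ gives
\[
\exp\pa{\rc2\an{\si,J^\prl\si}} = \pf{n}{2\pi}^{d/2}\int_V \exp\pa{\an{(P^\prl X)^\top\mu^\prl,\si} - \fc n2\ve{\mu^\prl}^2}\,d\mu^\prl,
\]
and for $\mu^\prl\in V$ one has $(P^\prl X)^\top\mu^\prl = X^\top P^\prl\mu^\prl = X^\top\mu^\prl$, which is precisely the external-field shift appearing in the statement.

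Substituting this into the defining sum and interchanging the finite sum over $\si\in\BC$ with the integral over $V$ yields
\begin{align*}
\ZJh &= \sum_{\si\in\BC}\exp\pa{\rc2\an{\si,J^\perp\si} + \an{h,\si}}\exp\pa{\rc2\an{\si,J^\prl\si}}\\
&= \pf{n}{2\pi}^{d/2}\sum_{\si\in\BC}\exp\pa{\rc2\an{\si,J^\perp\si} + \an{h,\si}}\int_V \exp\pa{\an{X^\top\mu^\prl,\si} - \fc n2\ve{\mu^\prl}^2}\,d\mu^\prl\\
&= \pf{n}{2\pi}^{d/2}\int_V \exp\pa{-\fc n2\ve{\mu^\prl}^2}\sum_{\si\in\BC}\exp\pa{\rc2\an{\si,J^\perp\si} + \an{h + X^\top\mu^\prl,\si}}\,d\mu^\prl\\
&= \pf{n}{2\pi}^{d/2}\int_V Z_{J^\perp,\,h+X^\top\mu^\prl}\exp\pa{-\fc n2\ve{\mu^\prl}^2}\,d\mu^\prl = \pf{n}{2\pi}^{d/2}\Zproj,
\end{align*}
where the last line recognizes the inner sum as $Z_{J^\perp, h+X^\top\mu^\prl}$ via \eqref{e:ising} and the definition of $\Zproj$.

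Honestly there is no serious obstacle here; the statement is essentially bookkeeping on top of Lemma~\ref{l:hs}. The only points requiring a moment of care are: (i) the reduction of $\rc2\an{\si,J^\prl\si}$ to a squared norm, which genuinely uses that $P^\prl$ is an orthogonal projection; (ii) a minor dimension technicality in invoking Lemma~\ref{l:hs} --- if the column space of $P^\prl X$ is a proper subspace of $V$, the lemma as stated produces an integral over that smaller subspace, but extending the integral to all of $V$ merely multiplies both sides by the same power of $2\pi/n$ (the integral of a product Gaussian over the complementary directions), so the displayed identity is unaffected; alternatively one may replace $V$ by that column space from the start and note $d$ is unchanged; and (iii) the interchange of sum and integral, which is immediate since the sum over $\BC$ is finite (or by Tonelli, the integrand being nonnegative). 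Note that the negative semidefinite part $J_-$ plays no special role --- it is simply carried along inside $J^\perp = J - J^\prl$.
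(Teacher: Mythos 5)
Your proof is correct and follows essentially the same route as the paper: decompose $J = J^\perp + J^\prl$, write $\rc2\an{\si,J^\prl\si}=\fc{1}{2n}\ve{P^\prl X\si}^2$, apply Lemma~\ref{l:hs} with $X\leftarrow P^\prl X$ and $\gamma^2=1/n$, and interchange the finite sum with the integral. Your extra remarks (orthogonality of $P^\prl$, the column-space dimension technicality, Tonelli) are just careful bookkeeping on top of the paper's identical argument.
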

Note that in the special case that $V=\R^m$ and $J_-=O$, this gives a decomposition of the probability measure in terms of product distributions in a similar manner to \citep{bovier2012mathematical,bauerschmidt2019very}. 

\begin{proof}[Proof of Theorem~\ref{t:hs}]
We decompose $J=J^\perp + J^\prl$ and apply Lemma~\ref{l:hs} to $X\leftarrow P^\prl X$ with $\gamma^2 = 1/n$:
\begin{align*}
    \ZJh &= \sum_{\si\in \{\pm 1\}^n} \eJhs \\
    &= \sum_{\si\in \BC} \exp\pa{\isingexp{J^\perp}h} \exp\pa{\rc{2n} \ve{P^\prl X \si}^2}\\
    &= \pf{n}{2\pi}^{d/2} \sBC \ising{J^\perp}h \int_{V^\prl} \exp\pa{\an{X^\top P^\prl \mu^\prl , \si} - \fc n2 \ve{\mu^\prl}^2}\,d\mu^\prl \\
    &= \pf{n}{2\pi}^{d/2} \int_{V^\prl} Z_{J^\perp, h+X^\top \mu^\prl}  \exp\pa{-\fc n2 \ve{\mu^\prl}^2} \,d\mu^\prl,
\end{align*}
as desired.
\end{proof}
We can define an associated probability distribution on $\BC\times V$ with $\Zproj$ as its partition function:
\[
\pproj ^{\si,\mu^\prl}(\si,\mu^\prl) \propto 
\exp\pa{\isingexp{J^\perp}{h+X^\top \mu^\prl} - \fc n2 \ve{\mu^\prl}^2}.
\]
Choosing an orthogonal linear transformation $Q:\R^d \to V$, we will also define the distribution $\pproj ^{\si,y}(\si,y) = \pproj ^{\si,\mu^\prl}(\si,Qy)$. 
In Appendix~\ref{app:equiv}, we will interpret 
$\pproj ^{\si,y}$ as the posterior of a Gaussian mixture model after seeing samples given by the columns of $X$. 

\paragraph{Estimating the partition function.}
For a PSD matrix $A$, let $\rank_\tau(A)$ denote the number of eigenvalues of $A$ that are $\ge \tau$. Note that $\rank_1(A)\le \ve{A}_F^2$. For ease of exposition, we first prove the theorem when in the case where $J$ has no negative eigenvalues.
\begin{thm}\label{t:Z}
Let $\ep,\de\in (0,1)$. 
Suppose $J$ is PSD. 
With probability $\ge 1-\de$, Algorithm~\ref{a:Z-ising} outputs an $e^\ep$-multiplicative approximation to $Z_{J,h}$, 
\begin{align*}
e^{-\ep} Z_{J,h} \le \wh Z_{J,h} &\le e^\ep  Z_{J,h},
\end{align*}
in time $(\ve{J}_\op n)^{O(\rank_1(J)+1)} O\pa{\log \prc \de /\ep^2}$.
\end{thm}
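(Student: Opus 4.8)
The plan is to use the Hubbard--Stratonovich decomposition (Theorem~\ref{t:hs}) with $V = V^\prl$ equal to the span of the eigenvectors of $J$ with eigenvalue $> 1 - \tfrac1c$ for a suitable fixed constant $c$ (say $c = 2$, so $J^\perp$ has operator norm $\le \tfrac12$), and $J_- = 0$ since $J$ is PSD. This reduces estimating $Z_{J,h}$ to estimating the $d$-dimensional integral $\Zproj = \int_{V^\prl} Z_{J^\perp, h + X^\top \mu^\prl} \exp(-\tfrac n2 \ve{\mu^\prl}^2)\,d\mu^\prl$, where $d = \rank_{1-1/c}(J) = O(\rank_1(J))$ (the two differ by at most the number of eigenvalues in a fixed-length interval, which for the purposes of the exponent $O(\rank_1(J)+1)$ does not matter). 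The three things to control are: (i) truncating the integral to a box $[-L,L]^d$ of radius $L = \poly(\ve J_\op, n)$ (actually $L = O(\sqrt{\log(\cdot)/n})$ times something, but a crude bound suffices) at the cost of a multiplicative $e^{\pm\ep/3}$, using the Gaussian tail together with the crude bound $Z_{J^\perp, h+X^\top \mu^\prl} \le 2^n e^{\|X^\top\mu^\prl\|_1 + \|h\|_1}$ on how fast the integrand can grow; (ii) replacing the integral over the box by a Riemann sum over a grid $\Grid$ with spacing $\eta = 1/\poly(\ve J_\op, n)$, again at multiplicative cost $e^{\pm\ep/3}$, which requires a Lipschitz/smoothness estimate on $\mu^\prl \mapsto \log Z_{J^\perp,h+X^\top\mu^\prl}$ — this follows because the gradient is $X\,\E_{P_{J^\perp, h+X^\top\mu^\prl}}[\sigma]$, which is bounded; and (iii) approximating each of the $|\Grid| = (L/\eta)^d = (\ve J_\op n)^{O(d)}$ values $Z_{J^\perp, h + X^\top\mu^\prl}$ to within $e^{\pm\ep/(3|\Grid|)}$ multiplicatively.

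For step (iii), since $0 \preceq J^\perp \prec I_n$ with $\ve{J^\perp}_\op \le 1 - \tfrac1c$, Theorem~\ref{thm:glauber} gives a rapidly mixing Glauber dynamics for $P_{J^\perp, h+X^\top\mu^\prl}$, mixing in $O(n\log(n/\ep')/(1-\ve{J^\perp}_\op)) = O(cn\log(n/\ep'))$ steps. I would then invoke a standard sampling-to-counting reduction via simulated annealing along the temperature path $\beta_\ell J^\perp$, $\beta_\ell = \ell/M$ for $M = O(\ve{J^\perp}_F^2 + 1) = O(n)$ interpolating steps (the telescoping product $Z_{J^\perp,h+X^\top\mu^\prl}/Z_{0,h+X^\top\mu^\prl}$ with $Z_{0,\cdot} = \prod_i 2\cosh(\cdot)$ computed exactly), as in \cite{bezakova2008accelerating}: each ratio $\E_{P_{\beta_\ell J^\perp, \cdot}}[e^{\frac12(\beta_{\ell+1}-\beta_\ell)\langle\sigma, J^\perp\sigma\rangle}]$ has bounded relative variance because consecutive temperatures have $O(1)$ overlap, so $O(M/\ep'^2 \cdot \log(1/\delta'))$ samples per ratio suffice by a median-of-means / Chebyshev argument. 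Setting $\ep' = \ep/(3|\Grid|)$ and $\delta' = \delta/(2|\Grid|)$ and taking a union bound over the $|\Grid|$ grid points gives the claimed success probability $1-\delta$; the per-point cost is $\poly(n, M, |\Grid|/\ep, \log(|\Grid|/\delta))$ and multiplying by $|\Grid| = (\ve J_\op n)^{O(d)}$ yields total runtime $(\ve J_\op n)^{O(\rank_1(J)+1)} \cdot O(\log(1/\delta)/\ep^2)$ as stated.

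The main obstacle — and the step requiring real care — is step (iii)'s error bookkeeping: each of exponentially-in-$d$ many grid cells needs its partition-function estimate accurate to within a $1 \pm \ep/|\Grid|$ \emph{relative} factor, so that the errors accumulate to only $e^{\pm\ep}$ over the Riemann sum; naively this would blow up the runtime by $|\Grid|^2$, and one has to be slightly clever (aggregate relative error control, and noting $\log(|\Grid|) = O(d\log(\ve J_\op n))$ only contributes polynomially inside $\poly(\cdot)$) to keep the final exponent at $O(d+1)$ rather than $O(d)$ with a worse base. A secondary technical point is making the truncation bound in step (i) quantitative without a bad dependence on $\ve{X^\top}_\op = \sqrt{n\,\ve{J^\prl}_\op} \le \sqrt{n \ve J_\op}$: one uses that on $V^\prl$ the quadratic $-\tfrac n2\|\mu^\prl\|^2$ dominates the at-most-linear growth $\|X^\top\mu^\prl\|_1 \le \sqrt n \|X^\top\mu^\prl\| \le n\sqrt{\ve J_\op}\|\mu^\prl\|$ of the log-integrand once $\|\mu^\prl\| \gtrsim \ve J_\op \cdot \mathrm{poly}(n)$, giving the polynomial radius $L$. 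Everything else (completing the square, Gaussian normalization, the explicit base case $Z_{0,h}$, smoothness of $\log Z$) is routine given Lemma~\ref{l:hs} and Theorem~\ref{thm:glauber}.
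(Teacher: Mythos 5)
Your overall architecture matches the paper's (Hubbard--Stratonovich over the top eigenspace, truncation to a box, a grid, and annealed Glauber sampling per grid point), but two steps as you have written them do not deliver the stated theorem. First, the dimension bookkeeping: with $c=2$ your parallel space has dimension $d=\rank_{1/2}(J)$, and the claim that this is $O(\rank_1(J))$ is false --- $J$ could have $\Theta(n)$ eigenvalues in $[1/2,1)$ and none at or above $1$, in which case your grid has $(\cdot)^{\Theta(n)}$ cells while the theorem promises $(\ve{J}_\op n)^{O(1)}$ time. To keep $d\le \rank_1(J)$ you must put only the eigenvalues exceeding $1$ into $V^\prl$, but then $\ve{J^\perp}_\op$ can equal $1$ and Theorem~\ref{thm:glauber} no longer applies as stated (its bound degrades like $1/(1-\ve{J}_\op)$). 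The paper handles exactly this boundary case with Lemma~\ref{l:ls-ising}: truncate the spectrum of $J^\perp$ at $1-\rc n$, compare the two measures via Holley--Stroock (the log-densities differ by at most $\rc 2$), and conclude an $O(n^2\log n)$ mixing time. Your proposal is missing this ingredient, and without it the choice of threshold cannot be pushed to $1$.

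Second, the Riemann-sum step is not consistent with the claimed runtime. With spacing $\eta = 1/\poly(\ve J_\op, n)$ independent of $\ep$, a Lipschitz bound on $\mu\mapsto \log Z_{J^\perp,h+X^\top\mu}$ only shows the integrand varies by an absolute constant over each cell, so replacing the integral by $\eta^d$ times the center value costs a constant multiplicative factor, not $e^{\pm\ep/3}$; forcing $e^{\pm\ep/3}$ requires $\eta\propto\ep$, which inflates $|\Grid|=(L/\eta)^d$ by $(1/\ep)^{d}$ and destroys the advertised $1/\ep^2$ dependence. The paper avoids discretization bias entirely: the final annealing ratio $g_{M,y^*}(\si)$ is itself the integral of the Gaussian-tilted weight over the cell $B(y^*)$ (an explicitly computable product of one-dimensional Gaussian integrals), so $Z_1\prod_\ell \E_{P_\ell}g_\ell = \int_{B(y^*)}\Zproj(Qy)\,dy$ exactly, the grid sum equals the truncated integral exactly, and the bound \eqref{e:vary-on-box} on the variation over a cell is used only to control the relative variance of $g_M$ --- which is why $\eta$ need not depend on $\ep$. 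Relatedly, your ``main obstacle'' is not one: since the per-cell quantities are positive and summed, a uniform $e^{\pm\ep/2}$ relative error per cell passes directly to the sum; only the failure probability needs the union bound over $|\Grid|$, costing a $\log$ factor in the number of trials, not a $1\pm\ep/|\Grid|$ accuracy requirement. Finally, for the truncation radius the paper gets $L=O(\sqrt{\ve J_\op}+1)$ cleanly from the conditional law $p(y\mid\si)$ being Gaussian with mean of norm at most $\sqrt{\ve J_\op}$ and covariance $I/n$ (Lemma~\ref{l:joint-prl}), which is tighter and simpler than bounding $Z_{J^\perp,h+X^\top\mu}$ crudely, though your cruder route would also suffice for a polynomial $L$.
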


Given a probability distribution $p$ on $\{\pm 1\}^n$, we can define the Markov chain in Algorithm~\ref{a:mc}. For $\si\in \{\pm 1\}^n$, we let $\si^{(i)}=(\si_1,\ldots, -\si_i, \ldots, \si_n)$ denote $\si$ but with the $i$th coordinate flipped.

 \begin{algorithm2e}[h!]
 \DontPrintSemicolon
 \caption{Glauber dynamics on $\{\pm 1\}^n$} 
 \KwIn{Query access to probability distribution $p(\si)$ on $\{\pm 1\}^n$, up to constant of proportionality; number of steps $T$.}
 \For{$1\le t\le T$} {
     Choose a random coordinate $i$, and set $\si \leftarrow \si^{(i)}$ with probability $\fc{p(\si^{(i)})}{p(\si^{(i)})+p(\si)}$.
     }

  \label{a:mc}
 \end{algorithm2e}

The following lemma gives 
fast mixing of Glauber dynamics for the Ising model, when the spectral norm of the interaction matrix is at most 1.

\begin{lem}\label{l:ls-ising}
Suppose $J\in \R^{n\times n}$ is symmetric and PSD with $\ve{J}_\op\le 1$. Then the modified log-Sobolev constant $\CMLS$ for $P_{J,h}$ is at most $e^{1/2}n^2$, and the mixing time is bounded by $O(n^2 \log  n)$.
\end{lem}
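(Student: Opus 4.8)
The plan is to reduce to the strictly subcritical regime $\ve{J}_\op<1$, where Theorem~\ref{thm:glauber} applies directly, and then transfer the bound back by a short perturbation argument.

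Set $\tilde J:=(1-\rc n)J$. Then $\tilde J\succeq 0$ and $\ve{\tilde J}_\op\le 1-\rc n<1$, so Theorem~\ref{thm:glauber} gives $\CMLS(P_{\tilde J,h})\le n(1-\ve{\tilde J}_\op)^{-1}\le n\cdot n=n^2$. So it suffices to show $\CMLS(P_{J,h})\le e^{1/2}\CMLS(P_{\tilde J,h})$. Since $0\preceq J\preceq I_n$ and $\ve{\si}^2=n$ for $\si\in\{\pm1\}^n$, we have $0\le\an{\si,J\si}\le n$, hence $\tfrac12\an{\si,(J-\tilde J)\si}=\tfrac1{2n}\an{\si,J\si}\in[0,\tfrac12]$, and therefore
\[ p_{J,h}(\si)=\fc{Z_{\tilde J,h}}{Z_{J,h}}\,e^{\frac1{2n}\an{\si,J\si}}\,p_{\tilde J,h}(\si),\qquad \fc1{p_{J,h}(\si)}=\fc{Z_{J,h}}{Z_{\tilde J,h}}\,e^{-\frac1{2n}\an{\si,J\si}}\,\fc1{p_{\tilde J,h}(\si)}, \]
with $e^{\frac1{2n}\an{\si,J\si}}\in[1,e^{1/2}]$.

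I would then compare the two ingredients of $\CMLS$ separately. For the entropy, use the variational identity $\Ent_P(f)=\inf_{t>0}\E_P[f\log(f/t)-f+t]$, whose integrand is pointwise nonnegative; since $p_{J,h}\le e^{1/2}(Z_{\tilde J,h}/Z_{J,h})\,p_{\tilde J,h}$ pointwise, this yields $\Ent_{P_{J,h}}(f)\le e^{1/2}(Z_{\tilde J,h}/Z_{J,h})\,\Ent_{P_{\tilde J,h}}(f)$. For the Dirichlet form, recall that reversibility of the Glauber (heat-bath) chain lets one write
\[ \cE_P(f,\log f)=\rc n\sui\sum_{\{\si,\si^{(i)}\}}\pa{\fc1{P(\si)}+\fc1{P(\si^{(i)})}}^{-1}(f(\si)-f(\si^{(i)}))(\log f(\si)-\log f(\si^{(i)})), \]
a sum of nonnegative terms (the inner sum over unordered pairs differing in coordinate $i$). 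Since $\fc1{p_{J,h}(\si)}\le\fc{Z_{J,h}}{Z_{\tilde J,h}}\fc1{p_{\tilde J,h}(\si)}$, each ``conductance'' for $P_{J,h}$ is at least $Z_{\tilde J,h}/Z_{J,h}$ times the corresponding one for $P_{\tilde J,h}$, so $\cE_{P_{J,h}}(f,\log f)\ge(Z_{\tilde J,h}/Z_{J,h})\,\cE_{P_{\tilde J,h}}(f,\log f)$. Taking the supremum of the ratio over $f$, the partition-function factors cancel and $\CMLS(P_{J,h})\le e^{1/2}\CMLS(P_{\tilde J,h})\le e^{1/2}n^2$.

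The mixing time bound then follows by a standard argument: a modified log-Sobolev constant of order $n^2$ forces the entropy of the chain relative to $P_{J,h}$ to contract at rate $\Omega(1/n^2)$, so the chain reaches any fixed total variation distance within $O(n^2(\log\log(1/p_{\min})+1))$ steps, where $p_{\min}:=\min_\si p_{J,h}(\si)$; for the Ising model on the hypercube $\log(1/p_{\min})=O(n+\ve{h}_1)$, which is polynomial in the relevant setting, so $\log\log(1/p_{\min})=O(\log n)$ and the mixing time is $O(n^2\log n)$ (one may equivalently route through $\CP\le\CMLS$). The only step requiring care is the cancellation in the preceding paragraph: the ratio $Z_{\tilde J,h}/Z_{J,h}$ enters the entropy comparison and the Dirichlet-form comparison with opposite exponents, and this is exactly what produces the clean constant $e^{1/2}$ rather than an unspecified $e^{O(1)}$.
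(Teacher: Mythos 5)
Your bound on $\CMLS$ is correct and is essentially the paper's argument. The paper compares $P_{J,h}$ to the model with eigenvalues of $J$ truncated at $1-\rc n$ and then cites the Holley--Stroock perturbation lemma; you compare to $P_{(1-1/n)J,h}$ instead and prove the perturbation step by hand, via the variational formula for entropy and the edge (conductance) form of the Glauber Dirichlet form. Either way the perturbation of the log-density lies in $[0,\rc 2]$, so one picks up exactly the factor $e^{1/2}$ on top of the $n^2$ coming from Theorem~\ref{thm:glauber}; the cancellation of the partition-function ratios you point out is precisely the content of Holley--Stroock, so this part is sound (and making the constant explicit is a nice touch).

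The mixing-time step is where you diverge from the paper, and as written it does not quite deliver the lemma as stated. The generic implication from a modified log-Sobolev inequality gives $t_{\mathrm{mix}}=O\pa{\CMLS\pa{\log\log (1/p_{\min})+1}}$, and since $\log(1/p_{\min})=O(n+\ve{h}_1)$ this is $O(n^2\log n)$ only under the side condition $\ve{h}_1\le \poly(n)$; but the lemma claims $O(n^2\log n)$ with no dependence on $h$, and this uniformity is used downstream (the chain is run with tilted fields $h(y^*)$, and the runtime in Theorem~\ref{t:main} has no $h$-dependence). The paper avoids $p_{\min}$ altogether by invoking the approximate exchange property of $p_{J,h}$ (Lemmas 36--37 of \cite{anari2021aentropic}), which yields mixing in $O((n+\CMLS)\log n)$ steps uniformly in $h$. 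Two further points of care: since $\CMLS$ here is defined through the discrete generator $\id-T$, the per-step entropy contraction of the discrete-time Glauber chain that your ``standard argument'' appeals to is not automatic from the MLSI alone --- supplying it is exactly what the cited exchange-property lemmas do; and your parenthetical ``equivalently route through $\CP\le\CMLS$'' is not equivalent, since a Poincar\'e inequality only gives $t_{\mathrm{mix}}=O\pa{\CP\pa{\log(1/p_{\min})+\log(1/\ep)}}$, i.e.\ an $O(n^2(n+\ve{h}_1))$-type bound rather than $O(n^2\log n)$. To fully recover the statement, either cite the exchange-property route as the paper does, or explicitly restrict to polynomially bounded $\ve{h}_1$ and justify discrete-time entropy decay.
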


\begin{proof}
For a symmetric matrix with diagonalization $A=UDU^\top$, let $D_{\le \tau}$ denote $D$ with the entries $\ge \tau$ replaced by $\tau$, and $A_{\le \tau} := UD_{\le \tau} U^\top$.
By Theorem~\ref{thm:glauber}, the modified log-Sobolev constant for 
\[
p_{J_{\le 1-\rc n}, h}(\si)\propto 
\exp\pa{\rc 2\an{ \si, J_{\le 1-\rc n} \si} +\an{h,\si}}
\]
is bounded by $n\pa{1-\ve{J_{\le 1-\rc n}}_\op}^{-1} = n^2$. Since 
\begin{align}\nonumber
\log  \pf {p_{J,h}(\si)}{p_{J_{\le 1-\rc n}, h}(\si)}
- \log  \pf{Z_{J_{\le 1-\rc n},h}}{Z_{J,h}} 
&= \rc 2\an{\si, (J-J_{\le 1-\rc n})\si} \\
&\in 
\rc 2 \ve{J - J_{\le 1-\rc n}}_2 n \cdot [0,1] \subeq  \ba{0, \rc 2},\label{e:gibbs-r1}
\end{align}
by the Holley-Stroock perturbation lemma, the modified log-Sobolev constant for $p_{J,h}$ is bounded by $e^{1/2}n^2$. 

Finally, the exchange property holds for $p_{J,h}$ by \cite[Lemma 37]{anari2021aentropic}, so by \cite[Lemma 36]{anari2021aentropic}, the mixing time is bounded by $O((n+\CMLS)\log  n) = O(n^2\log  n)$.
\end{proof}
 
Lemma~\ref{l:ls-ising} implies that Glauber dynamics gives an efficient algorithm for sampling in our setting. To obtain an algorithm for partition function estimation, we use simulated annealing. 
Simulated annealing is a generic method to obtain an algorithm for estimating a partition function $\int_{\Om} q\,d\om$, given access to sampling oracles for a sequence of distributions $p_\ell\propto q_\ell$ such that (a) $q_1$ is known, (b) for each $\ell$, $p_\ell$ and $p_{\ell+1}$ are ``close," and (c) $p_{M+1}\propto q$. 

 \begin{algorithm2e}[h!]
 \DontPrintSemicolon
 \caption{Simulated annealing for partition function estimation} 
 \KwIn{Sampling oracles for $\wt p_\ell$ (approximations to  $p_\ell\propto q_\ell$) for $1\le \ell\le M$ (distributions on $\Om$), for example, Glauber dynamics (Algorithm~\ref{a:mc}); $Z_1 = \int_{\Om} q_1\,d\om$; number of samples $N$; number of trials $R$.}
 \KwOut{Estimate of $\int_{\Om} q_{\ell}\,d\om$ for each $1\le \ell\le M+1$.}
 Let $g_\ell(x) := \fc{q_{\ell+1}(x)}{q_\ell(x)}$.\;
 \For{$1\le r\le R$}{
     Let $\wh Z^r_1 = Z_1$.\;
    \For{$1\le \ell\le M$}{
         Obtain samples $x_1,\ldots, x_N\sim \wt p_\ell$.\;
         Let $\wh Y_{\ell} = \rc N \sumo kN g_\ell(x_k)$.\;
         Let $\wh Z^r_{\ell+1} = \wh Z^r_\ell \wh Y_\ell$.\;
    }
 }
 \For{$2\le \ell\le M+1$}{ 
     Let $\wh Z_\ell$ be the median of $\set{\wh Z_\ell^r}{1\le r\le R}$.
 }
  \label{a:sa}
 \end{algorithm2e}


\begin{lem}\label{l:sa}
Let $0<\ep<1$. Suppose that $p_\ell, 1\le\ell\le M+1$ are distributions on $\Om$, and that in Algorithm~\ref{a:sa} we are given sampling oracles for $\wt p_\ell$, $1\le \ell\le M$ such that the following hold for each $1\le \ell \le M$.
\begin{enumerate}
    \item (Variance bound) $\fc{\Var_{P_\ell}(g_\ell(x))}{(\E_{P_\ell} g_\ell(x))^2} \le \si^2$.
    \item (Bias bound) $\ab{\E_{P_\ell}g_\ell(x) - \E_{\wt P_\ell}g_\ell(x)}\le \fc{\ep}{4M}$.
\end{enumerate}
Then taking $N\ge \fc{320\si^2 M}{\ep^2}$ and $R\ge 32 \log  \prc\de$, with probability $1-\de$, the output $\wh Z$ satisfies $\wh Z\in [e^{-\ep}, e^{\ep}] \cdot Z$.
\end{lem}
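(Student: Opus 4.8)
The plan is the standard telescoping‑product (``product estimator'') analysis of simulated annealing, adapted to account for the biased samplers $\wt p_\ell$, followed by median amplification. Write $Z_\ell := \int_\Om q_\ell\,d\om$, so that $Z_1$ is the given constant, $Z_{M+1} = Z$, and $\E_{P_\ell}[g_\ell] = \frac1{Z_\ell}\int_\Om q_{\ell+1}\,d\om = Z_{\ell+1}/Z_\ell$; telescoping gives $Z_{M+1} = Z_1\prod_{\ell=1}^{M} \E_{P_\ell}[g_\ell]$. Within one trial $r$ of Algorithm~\ref{a:sa}, the quantities $\wh Y_1,\dots,\wh Y_M$ are independent (fresh samples are drawn at each level $\ell$), with $\E[\wh Y_\ell] = \E_{\wt P_\ell}[g_\ell]$ and $\Var(\wh Y_\ell) = \frac1N \Var_{\wt P_\ell}(g_\ell)$, and $\wh Z^{\,r}_{M+1} = Z_1 \prod_{\ell} \wh Y_\ell$. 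So it suffices to show that for each $r$ the event $\prod_\ell \wh Y_\ell \in [e^{-\ep}, e^\ep]\prod_\ell \E_{P_\ell}[g_\ell]$ holds with probability at least $9/10$.

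For a single trial I would split this into a variance part and a bias part. By independence across levels, the relative variance of $\prod_\ell\wh Y_\ell$ about its own mean is $\prod_\ell \pa{1 + \frac1N\,\frac{\Var_{\wt P_\ell}(g_\ell)}{(\E_{\wt P_\ell}g_\ell)^2}} - 1 \le (1 + \si^2/N)^M - 1 \le e^{M\si^2/N} - 1$, which is below $\ep^2/160$ once $N \ge 320 M\si^2/\ep^2$; Chebyshev's inequality then gives, with probability $\ge 9/10$, that $\prod_\ell\wh Y_\ell$ lies within a $(1\pm\ep/4)$ factor of $\prod_\ell \E_{\wt P_\ell}[g_\ell]$. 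For the bias, condition (2) gives $|\E_{\wt P_\ell}[g_\ell] - \E_{P_\ell}[g_\ell]| \le \ep/(4M)$; since each $\E_{P_\ell}[g_\ell] = Z_{\ell+1}/Z_\ell$ is bounded below by an absolute constant (this is the quantitative content of ``$p_\ell$ and $p_{\ell+1}$ are close,'' and is immediate in all our applications, where the relevant $g_\ell$ are bounded between positive constants), this additive error is a $(1 \pm O(\ep/M))$ multiplicative perturbation per level, so $\prod_\ell \E_{\wt P_\ell}[g_\ell] \in [e^{-\ep/4}, e^{\ep/4}]\,\prod_\ell \E_{P_\ell}[g_\ell]$. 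Multiplying the variance and bias bounds and using $\ep<1$ yields $\wh Z^{\,r}_{M+1} \in [e^{-\ep}, e^\ep]Z_{M+1}$ with probability $\ge 9/10$.

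Finally I would amplify by the median: the $R$ trials are independent, each landing in the target interval with probability $\ge 9/10 > 1/2$, so a Hoeffding bound on the number of successful trials shows that for $R \ge 32\log\prc{\de}$, with probability $\ge 1-\de$ strictly more than half of the $\wh Z^{\,r}_{M+1}$ lie in $[e^{-\ep},e^\ep]Z_{M+1}$, which forces their median $\wh Z$ into the same interval. The step requiring the most care --- and the only real obstacle --- is the interface between the idealized hypotheses and the sampler actually run: condition (1) is phrased for $P_\ell$ while the samples come from $\wt P_\ell$, and condition (2) is an \emph{additive} bound that must be turned into a multiplicative one. Both are harmless because the $g_\ell$ appearing in our applications are bounded and $\wt P_\ell$ can be (and is) taken close enough to $P_\ell$ in total variation that $\Var_{\wt P_\ell}(g_\ell)/(\E_{\wt P_\ell}g_\ell)^2 \le 2\si^2$ (absorbed into the constant $320$) and $\E_{P_\ell}[g_\ell] = \Theta(1)$; the remaining estimates are routine.
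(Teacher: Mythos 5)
Your proposal is correct and follows essentially the same route as the paper's proof: concentration of the telescoping product of per-level estimates via a relative-variance/Chebyshev argument (the paper cites Lemma~\ref{l:prod} where you re-derive it by hand), conversion of the additive bias bound into a per-level multiplicative factor, and median amplification via Chernoff--Hoeffding, with only immaterial differences in constants. The two interface issues you flag---that hypothesis (1) controls the variance under $P_\ell$ while the samples come from $\wt P_\ell$, and that the additive bias bound becomes the claimed multiplicative one only when $\E_{P_\ell} g_\ell$ is bounded below (by $1$ for the stated constants)---are handled just as silently in the paper's own proof, so they do not distinguish your argument from theirs.
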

The proof is standard and given in the appendix.

We can now give the algorithm and proof of Theorem~\ref{t:Z}. We show that a non-adaptive temperature schedule of length $O(n)$ is sufficient for partition function estimation. Note that a shorter schedule of length $O(\sqrt n \log  n \log  \log  n)$ is possible, and can be found in $n\polylog(n)$ total queries to approximate sampling oracles 
at the different temperatures~\citep{vstefankovivc2009adaptive}, but we use a non-adaptive schedule for simplicity. Coordinate-wise sampling is also possible, but we will need a sequence of distributions at different temperatures for our sampling algorithm.

\begin{algorithm2e}[!htbp]
\DontPrintSemicolon
 \caption{Approximating partition function of Ising model. (Steps in italics are only needed in presence of a negative definite spike.)} 
 \KwIn{Ising model $(J,h)$, cutoff $L$, discretization $\eta$ dividing evenly into $L$, desired accuracy $\ep$, failure probability $\de$, number of samples $N$, number of trials $R$, steps to run Markov chains $T$, threshold $c\in (1,\iy]$.}
 \KwOut{Approximation of partition function $Z_{J,h}$.}
 Suggested parameters:  
 $L = \Te(\sqrt{\ve{J}_\op}+1)$, 
 $\eta\le \rc{ndL + 2n\sqrt{\ve{J}_\op d}}$, 
$N = \Te \pf{M}{\ep^2}$ where $M=n+1$, $R=\Te\pa{\log  \pf{(L/\eta)^d}{\de}}$, and $T=\Te\pa{n^2\log  \pf n\ep}$. Take $c=\iy$ if $J$ is PSD.\;
  If $\ep\le 2^{-n}$, calculate $\ZJh$ by brute force.\;
  \emph{Let $J=J_+-J_-$ where $J_+$ and $J_-$ are positive semi-definite and negative semi-definite, respectively, with column spaces intersecting only in 0.}\;
  Factor $J_+=\rc n XX^\top$ for $X\in \R^{n\times n}$.\;
  Let $V$ denote the subspace of $\R^n$ spanned by the eigenvectors of $J_+$ with eigenvalues $>1-\rc c$. Let $P^\prl$ and $P^\perp$ the projections to $V$ and $V^\perp$. Let $Q\in \R^{n\times d}$ be the matrix with columns that are an orthonormal basis for $V$.\;
  Let 
  $J^\prl  = \fc{X^\top P^\prl X}n$ and $J^\perp = \fc{X^\top P^\perp X}n$.\;
 \For{$y^*  \in \Grid:= \bc{-L + \rc 2 \eta, -L + \fc 32 \eta, \ldots, L-\rc 2\eta}^d$} {
     \emph{Let $\mu(y^*)$ be an approximate critical point of $G(u)=\log  \E_{\si \sim P_{J^\perp, X^\top  Qy^*}}[e^{-\an{u, J_-\si}}] + \rc 2 \an{u, J_-u}$, found using stochastic gradient descent (Theorem~\ref{thm:negdef-ising}/\ref{thm:sgd}) with sampling oracle given by Glauber dynamics for $P_{J^\perp, X^\top  Qy^*+h}$.} (If $J_-=O$, let $\mu(y^*)=0$.)\;
     Let $B(y^*)$ denote the hypercube with sides parallel to the standard axes, centered at $y^*$ with side length $\eta$. \;
     Apply Algorithm~\ref{a:sa} to the Ising model, with sampling algorithm given by running Glauber dynamics for $T$ steps, for the following sequence of distributions ($1\le \ell\le M= n+1$):
    \begin{align*}
        p_\ell  &= p_{\fc{\ell-1}{n} J^\perp ,h(y^*)}\\
        g_\ell (\si) &= \exp\pa{\fc{1}{2n} \an{\si, J^\perp \si}}, \quad 1\le \ell \le n\\
        g_{M,y^*}(\si) &=  
        \fc{\exp\pa{-\rc 2 \an{\si, J_-\si}} 
        }{\exp\pa{\an{\mu(y^*),\si}}}\int_{B(y^*)} \exp\pa{\an{X^\top Q(y-y^*), \si} 
        -\fc n2 \ve{y}^2} \,dy
        \\ 
        \text{where } h(y) &= 
        \mu(y) + X^\top Q y + h,
    \end{align*}
    and initial partition function
    \begin{align*}
        Z_1 = Z_{O,h(y^*)} &= 2^n \prodo in \cosh\pa{\an{x_i, Qy^*}+h_i}
    \end{align*}
    to get estimates $\wh Z_\ell(y^*)$ for $1<\ell\le M+1$. Let $\wh Z(y^*):= \wh Z_{M+1}(y^*)$. 
 } 
 Return $\wh Z = \pf{n}{2\pi}^{\fc d2} 
 \sum_{y^*  \in \Grid} \wh Z(y^*)$.
  \label{a:Z-ising}
 \end{algorithm2e}

\begin{proof}[Proof of Theorem~\ref{t:Z}]
We may assume $\ep\ge 2^{-n}$. 
Set the temperature schedule as 
$\be_\ell = \fc{\ell-1}{n}$ for $1\le \ell \le  n+1$. 
Let $M=n+1$ be the length of the temperature schedule. 
We set parameters as suggested in Algorithm~\ref{a:Z-ising}.
Then the total time complexity of the algorithm is $O\pa{\pf{2L}\eta^d MNRT}$ times the complexity of each Markov chain step, which gives complexity $O\pa{(\ve{J}_\op + 1) nd 
}^d \cdot O\pf{\poly(n) \log \prc{\ep\de}}{\ep^2} = (\ve{J}_\op n)^{O(\rank_1(J)+1)} O\pa{\log \prc \de /\ep^2}$.

Recall that we define the distribution $\pproj^{\si,y}(\si,y) = \pproj^{\si,\mu^\prl}(\si, Qy)$.
We now fix a particular $y^*$, and write for short $g_{M}=g_{M,y^*}$.

\paragraph{Choice of ratios $g_\ell$.} 
Define $\Zproj(\mu^\prl):=Z_{J^\perp, h+X^\top P^\prl\mu^\prl}  \exp\pa{-\fc n2 \ve{\mu^\prl}^2}$. We first compute
\begin{align*}
    \E_{p_{M}} g_{M}
    &= \rc{Z_{J^\perp, h(y^*)}} \sBC 
    \ising{J^\perp}{X^\top Q y^*+h}\\
    &\quad \cdot 
    \int_{B(y^*)} \exp\pa{\an{X^\top Q (y-y^*), \si} - \fc n2 \ve{y}^2}\\
    &= \rc{Z_{J^\perp, h(y^*)}} \sBC \int_{B(y^*)} \exp\pa{\isingexp{J^\perp}{X^\top Q y+h}-\fc n2 \ve{y}^2}\,dy \\
    &= \fc{\int_{B(y^*)} \Zproj(Qy)\,dy}{Z_{J^\perp, h(y^*)}}.
\end{align*}
Hence
\begin{align*}
    Z_1 \prodo \ell{M} \E_{P_\ell} g_\ell 
    &= Z_{O,h(y^*)} \prodo \ell {M-1} \fc{Z_{\be_{\ell+1}J^\perp , h(y^*)}}{Z_{\be_\ell J^\perp , h(y^*)}}\cdot 
    \fc{\int_{B(y^*)} \Zproj(Qy)\,dy}{Z_{J^\perp, h(y^*)}} = \int_{B(y^*)} \Zproj(Qy)\,dy.
\end{align*}

\paragraph{Variance of $g_\ell$.} With $g_\ell(\si) = \exp(\rc 2(\be_{\ell+1}-\be_\ell) \si^\top J^\perp  \si) = \exp\pa{\rc{2n}\si^\top J^\perp \si}$, we bound
\begin{align}
g_\ell(\si) &\le 
\exp\pa{\rc{2n} \cdot n} = e^{1/2},& 
\fc{\E_{P_\ell} g_\ell^2}{(\E_{P_\ell}g_\ell)^2} 
    &\le
    \E_{P_\ell} g_\ell^2 \le e. 
    \label{e:g2}
\end{align}
We also need to check the variance of 
\begin{align*}
g_{M}(\si) &= \exp\pa{-\fc n2 \ve{y^*}^2}
        \int_{B(y^*)} \exp\pa{\an{X^\top Q (y-y^*), \si} + \fc n2 \pa{\ve{y^*}^2-\ve{y}^2}}\,dy
\end{align*}
Note that $\ve{\fc{X^\top P^\prl X}{n}}_\op\le \ve{J}_\op$, so $\ve{P^\prl X}_\op\le \sqrt {n\ve{J}_\op}$.
We check how much the exponent can vary on $B(y^*)$:
\begin{align}
\nonumber
\ab{\an{X^\top Q (y-y^*), \si} + \fc n2 \pa{\ve{y^*}^2-\ve{y}^2}}
&\le \ve{y-y^*} \pa{\ve{P^\prl X\si} + \fc{n}{2}\ve{y^*+y}}\\
&\le \fc{\eta}2 \sqrt d\pa{\sqrt {n\ve{J}_\op} \sqrt n + nL\sqrt d} \le \rc 2
\label{e:vary-on-box}
\end{align}
when $\eta\le \rc{ndL + 2n\sqrt{\ve{J}_\op d}}$. This makes $\fc{\E_{p_{M}} [g_{M}(\si)^2]}{\E_{p_{M}} g_{M}(\si)^2}\le e$ as well. We note $g_{M}$ can be easily evaluated since it can be written as a product of integrals of a Gaussian on an interval.

\paragraph{Bias of $\E g_\ell$.}
For the approximate sampling oracle, we let $\wt p_\ell$ be the distribution after running Glauber dynamics for $\Te\pa{n^2 \log  \pf{n}{\ep}}$ steps (for an appropriate choice of constant). Then by Theorem~\ref{l:ls-ising} and~\eqref{e:g2}, $\ab{\E_{P_\ell}g_\ell(\si) - \E_{\wt P_\ell}g_\ell(\si)}\le
d_{\TV} (P_\ell, \wt P_\ell) \cdot e^{1/2} \le \fc{\ep}{4M}$.

\paragraph{Using Lemma~\ref{l:sa}.}
By Lemma~\ref{l:sa} with $\de$ replaced by $\fc{\de}{(L/\eta)^d}$, using a union bound, we obtain that with probability $\ge 1-\de$, for all $y^*\in \Grid$, 
$\wh Z(y^*) \in [e^{-\eph}, e^{\eph}] \cdot \int_{B(y^*)} \Zproj(Qy)\,dy$ and so 
\begin{align*}
     \sum_{y^*  \in \Grid} \wh Z(y^*)
     \in  [e^{-\eph}, e^{\eph}] \cdot \sum_{y^*\in \Grid} \int_{B(y^*)} \Zproj(Qy)\,dy
     = [e^{-\eph}, e^{\eph}] \cdot\int_{\ve{y}_\iy\le L} \Zproj(Qy)\,dy.
\end{align*}

\paragraph{Error from cutoff.} We would like to estimate
$\Zproj  = \int_{\R^d}\Zproj(Qy)\,dy$, so it remains to show that at least $e^{-\eph}$ of the probability mass of $p(\si,y)$ is contained in $\{\pm 1\}^n \times [-L,L]^d$. For this, it suffices to fix $\si$, and show that $P(y \nin [-L,L]^d|\si) \le \eph $. We have by Lemma~\ref{l:joint-prl}(3) that
\[
\pproj(y |\si) = \pf{n}{2\pi}^{d/2} \exp\pa{-\fc n2 \ve{Qy - \fc{\sumo in \si_i P^\prl x_i}n}^2}.
\]
Using $\ve{X^\top P^\prl}_2\le \sqrt {n\ve{J}_\op}$, we get
\[
\ve{\fc{\sumo in \si_i P^\prl x_i}n} \le \ve{\fc{X^\top P^\prl}{n}}_2 \sqrt {n} \le \sqrt{\ve{J}_\op}.
\]
Hence taking 
$L = \Om \pa{\sqrt{\ve{J}_\op} + 1} = \Om\pa{\sqrt{\ve{J}_\op} + \sqrt{ \log  \pf n\ep/n}}$, 
we have
\begin{align}\label{e:cutoff}
P(y \nin [-L,L]^d)& \le \sumo in P(y_i\nin [-L,L]) = n\cdot \fc{\ep}{2n} = \fc{\ep}2.
\end{align}
Putting everything together and using 
Theorem~\ref{t:hs}, 
we have with probability $\ge 1-\de$ that 
\[
\pf{n}{2\pi}^{\fc d2} 
\sum_{y^* \in \Grid} \wh Z(y^*)
    \in [e^{-\ep}, e^{\eph}]\cdot  \Zproj .
\]
\end{proof}

\subsection{Estimation with positive and negative spikes}

We now analyze Algorithm~\ref{a:Z-ising} when there are negative spikes to prove Theorem~\ref{t:main}(1).


For $y^*\in \Grid$, Algorithm~\ref{a:Z-ising} uses Corollary~\ref{thm:negdef-ising} to find 
$\mu(y^*)$ such that 
\begin{align}\label{e:dd-ub}
\log \pa{\dd{P_{J_+,X^\top Qy^*+h}}{P_{J,X^\top Qy^*+h+\mu(y^*)}}} \le c\Tr(J_-) +1.
\end{align}
Let $J^\perp_{\textup{all}} = J^\perp - J_- = J-J^\prl$.
We first calculate
\begin{align}
\nonumber
    \E_{P_{M}}g_{M} &= 
    \rc{Z_{J^\perp, h(y^*)}} \sBC \ising{J^\perp}{h(y^*)}
    \fc{\exp\pa{-\rc 2\an{\si, J_-\si}}}{\exp\pa{\an{\mu(y^*),\si}}}\\
    \nonumber
    &\quad \cdot \int_{B(y^*)} \exp\pa{\an{X^\top Q (y-y^*), \si} 
        -\fc n2 \ve{y}^2} \,dy\\
    &=\rc{Z_{J^\perp, h(y^*)}} \sBC \int_{B(y^*)} \exp\pa{\rc 2 \an{\si, (J^\perp_{\textup{all}} - J_-)\si} + \an{X^\top Q y+h, \si} 
        -\fc n2 \ve{y}^2} \,dy\label{e:pgn1}\\
    \nonumber
    &=  \fc{\int_{B(y^*)} \Zproj(Qy)\,dy}{Z_{J^\perp, h(y^*)}}
\end{align}
as before.


We now bound $\fc{\E g_{M}^2}{(\E g_{M})^2}$. 
First we bound 
\begin{align*}
g_{M}(\si)&= 
\fc{\exp\pa{-\rc 2 \an{\si, J_-\si}} 
        }{\exp\pa{\an{\mu(y^*),\si}}}\int_{B(y^*)} \exp\pa{\an{X^\top Q (y-y^*), \si} 
        -\fc n2 \ve{y}^2} \,dy\\
&=\fc{\exp\pa{\rc 2\an{\si,J^\perp_{\textup{all}} \si} + \an{X^\top Q y^*+h, \si}}}{\exp\pa{\rc 2 \an{\si, J^\perp\si} + \an{h(y^*),\si}}} \cdot 
\int_{B(y^*)} \exp\pa{\an{X^\top Q (y-y^*), \si} 
        -\fc n2 \ve{y}^2} \,dy\\
&\le \exp\pa{c\Tr(J_-)+1} \fc{Z_{J^\perp_{\textup{all}},X^\top Q y^*+h}}{Z_{J^\perp,h(y^*)}} \exp\pa{-\fc n2 \ve{y^*}^2}\eta^d e^{1/2}
\end{align*}
using~\eqref{e:dd-ub} and~\eqref{e:vary-on-box}. 
Next, again using~\eqref{e:vary-on-box}, we bound
\begin{align*}
\E_{P_{M}} g_{M}(\si) &= \sBC
\fc{\exp\pa{\rc 2\an{\si,J^\perp_{\textup{all}} \si} + \an{X^\top Q y^*+h, \si}}}{Z_{J^\perp,h(y^*)}}\\
&\quad \cdot 
\int_{B(y^*)} \exp\pa{\an{X^\top Q (y-y^*), \si} 
        -\fc n2 \ve{y}^2} \,dy\\
&\ge \fc{Z_{J^\perp_{\textup{all}},X^\top Q y^*+h}}{Z_{J^\perp,h(y^*)}} \exp\pa{-\fc n2 \ve{y^*}^2}\eta^d e^{-1/2}
\end{align*}
Hence 
\begin{align*}
    \fc{\E_{P_{M}} g_{M}^2}{(\E_{P_{M}} g_{M})^2}
    &\le \exp(2c\Tr(J_-)+4),
\end{align*}
and this is the extra multiplicative error we incur in estimation. The rest of the estimates in the proof are the same as before.


The above concludes the proof of our main result for computing the partition function. We now briefly discuss the performance of this algorithm under the ``naive mean field'' assumption $\|J\|_F^2 = o(n)$ referenced in the introduction and introduced in \citep{basak2017universality}.
\begin{rem}\label{rem:speed-mf}
Suppose we want to bound the performance of the algorithm from Theorem~\ref{t:main} in terms of Frobenius norms. This will be very wasteful compared to the original statement, but is useful for comparison. 

For simplicity, we can make the common assumption that the diagonal of $J$ is zero, which means that the sum of the eigenvalues of $J$ is zero. Then we can choose the interval $[-1/3,1/3]$ as the interval of length at most one in the application of the Theorem. 
The runtime for estimating $\log Z$ to additive $\ep$ error will be at most
\[ O\pf{(\ve{J}_\op n)^{O(d_++1)} e^{O(\la_1+\cdots + \la_{d_-} - \ d_-/3)}}{\ep^2} \]
where $-\lambda_1,\ldots, -\la_{d_-}$ are the eigenvalues of $J$ below $-1/3$. Now clearly we have $\sum_{i = 1}^{d_-} \lambda_i \le \sum_{i = 1}^{d_-} 3\lambda_i^2 \le 3\|J\|_F^2$ and $d_+ \le 3\|J\|_F^2$. So we have a crude bound on the runtime as
\[ O\left( \frac{(n\|J\|_{\op})^{O(\|J\|_F^2})}{\ep^2} \right). \]
In particular, provided $\|J\|_F^2 = o(n/\log(n))$ we have that this is subexponential time. So the result works up to almost the same subexponential time regime as the algorithm in the work \citep{jain2019mean} when specialized to the setting of Ising models. Depending on the precise properties of $J$, the precise runtime of the new algorithm could be faster or slower than the algorithm of \citep{jain2019mean}, but the approximation error for this one is much stronger (additive error $\ep$ to $\log Z$).
\end{rem}
\section{Sampling}\label{s:sampling}
We now turn to the problem of generating samples from the model; for the reader, note that this section builds on results and uses notation from the previous section on partition function estimation.

By choosing $y^*\in \Grid = \bc{-L + \rc 2 \eta, -L + \fc 32 \eta, \ldots, L-\rc 2\eta}^d$ with probability proportional to $\wh Z(y^*)$ estimated by Algorithm~\ref{a:Z-ising} and then sampling from $\Pproj^{\si, y}$ restricted to $\BC\times B(y^*)$, we can obtain an algorithm for sampling of the same order of complexity as in Theorem~\ref{t:Z}. 
In this section, we give an algorithm that only has logarithmic dependence on $\ep$ and prove Theorem~\ref{t:main}(2).  


Let $Z_\ell(y^*):= Z_{\beta_\ell J^{\perp}, h(y^*)}$ for $1\le \ell\le M$ and $Z_{M+1}(y^*):= \int_{B(y^*)} \Zproj(Qy)\,dy$. Denote the approximately normalized probabilities
\begin{align*}
    q_{\ell, y^*}
    & = 
    \fc{Z_{\ell}(y^*)}{\wh Z_{\ell}(y^*)}  
    p_{\be_\ell J^{\perp},h(y^*)}
\end{align*}
where $\be_{\ell} = \fc{\ell-1}n$. Overloading notation, we will also write $p_{\ell,y^*}$ for $p_{\be_\ell J^{\perp},h(y^*)}$. Note that we can compute the ratios of different $q_{\ell,y^*}$'s, as we have $q_{\ell,y^*}(\si)\propto \rc{\wh Z_{\ell}(y^*)} \ising{\be_\ell J^\perp}{h(y^*)}$. 

We define a Markov chain on an expanded state space $\{1,\ldots, M\} \times \Grid \times \{\pm 1\}^n$, where the first index denotes the ``temperature" of the distribution. This is similar to a simulated tempering chain~\citep{marinari1992simulated}, with two types of moves: between temperatures and within temperatures.
However, there are two differences with a standard simulated tempering chain:
\begin{enumerate}
    \item We use a different normalizing constant $\wh Z_{\ell}(y^*)$ for each value of $y^*$, in order to make sure the stationary distribution is roughly uniformly distributed over the $y^*\in \Grid$.
    \item Within any temperature other than the highest one, we do not allow moves that change $y^*$.
\end{enumerate}
Finally, we do simulated tempering on the space $\Grid\times \{\pm 1\}^n$ rather than $[-L,L]^d\times \{\pm 1\}^n$ for convenience; this adds an extra rejection sampling step at the end where we compare the distributions on $\{y^*\}\times \{\pm 1\}^n$ and on $B(y^*)\times \{\pm 1\}^n$, similar to the final ratio $g_{M}$ in partition function estimation. 

We need the modifications for technical reasons to make our proof work; it is an interesting question whether a more standard simulated tempering chain would work. Our proof strategy is based on a Markov chain decomposition theorem similar to~\cite{ge2018simulated}, which we will now introduce.


 \begin{algorithm2e}[h!]
 \DontPrintSemicolon
 \caption{Simulated tempering on $\{1,\ldots, M+1\} \times \Grid \times \{\pm 1\}^n$} 
 \KwIn{Ising model $(J,h)$, steps to run Markov chain $T$ (suggested $\Te(n^4d\log(n\ve{J}_\op/\ep))$.}
 Run Algorithm~\ref{a:Z-ising}  
 to obtain partition function estimates $\wh Z_\ell (y^*)$ for $1<\ell \le M+1=n+2$.\;
  Let $\ell=1$. Draw $y^*\in \Grid= \bc{-L + \rc 2 \eta, -L + \fc 32 \eta, \ldots, L-\rc 2\eta}^d$, and then draw $\si\sim P^{\si|y}_{O,h(y^*)}(\cdot|y^*)$. \;
 \For{$1\le t\le T$} {
     With probability $\rc 4$, if $\ell \ne M$, set $\ell\leftarrow \ell+1$ with probability $\min\bc{\fc{q_{\ell+1,y^*}(\si)}{q_{\ell,y^*}(\si)},1}$.\;
     With probability $\rc 4$, if $\ell \ne 1$, set $\ell\leftarrow \ell-1$  with probability $\min\bc{\fc{q_{\ell-1,y^*}(\si)}{q_{\ell,y^*}(\si)},1}$.\;
    With probability $\rc 2$,
    \Begin{
        \If{$\ell=1$}{
             With probability $\rc 2$, reselect a random $y^*\in \Grid$, and then draw $\si\sim P^{\si|y}_{O,h(y^*)}(\cdot|y^*)$. 
        }
        Choose a random coordinate $i$, and set $\si \leftarrow \si^{(i)}$ with probability $\fc{q_{\ell, y^*}(\si^{(i)})}{q_{\ell,y^*}(\si^{(i)})+q_{\ell,y^*}(\si)}$.\;
    }
 }
 \If{$\ell=M$}{
     Draw $U\sim \mathsf{Uniform}([0,1])$.\;
    \If{{$U\le 
    (4e \max\wh Z_{M+1}(y^*)\exp(c\Tr(J_-)+1))^{-1} \wh Z_M(y^*) g_{n+1,y^*}(\si)$
    }}{
         Return $\si$.
    }
 }
 If failed to return sample, re-run the procedure.
  \label{a:st}
 \end{algorithm2e}

Given a Markov chain on $\Om$, we define two Markov chains associated with a partition of $\Om$.
\begin{df}[\cite{madras2002markov}]\label{df:assoc-mc}
For a Markov chain $\cal M=(\Om, T)$, and a set $A\subeq \Om$, define the \vocab{restriction of $\cal M$ to $A$} to be the Markov chain $\cal M|_A = (A, T|_{A})$, where
$$
T|_A(x,B) = T(x,B) + \one_B(x) T(x,A^c).
$$
(In words, $T(x,y)$ proposes a transition, and the transition is rejected if it would leave $A$.)

Suppose the unique stationary measure of $\cal M$ is $P$. 
Given a partition $\cal P = \set{A_j}{j\in J}$, define the \vocab{projected Markov chain with respect to $\cal P$} to be $\ol{\cal M}^{\cal P} = (J, \ol T^{\cal P})$, where
$$
\ol T^{\cal P} (i,j) = 
\rc{P(A_i)} \int_{A_i}\int_{A_j} T(x,dy)\,P(dx).
$$
(In words, $\ol T(i,j)$ is the ``total probability flow'' from $A_i$ to $A_j$.)
We omit the superscript $\cal P$ when it is clear.
\end{df}

The following theorem lower-bounds the gap of the original chain in terms of the gap of the projected chain and the minimum gap of the restricted chains.
\begin{thm}[\cite{madras2002markov}]
\label{t:decomp}
Let $\cal M=(\Om, T)$ be a Markov chain with stationary measure $P$. Let $\cal P=\set{A_j}{j\in J}$ be a partition of $\Om$ such that $P(A_j)>0$ for all $j\in J$. Then 
\begin{align*}
    \rc 2 \Gap( \ol{\cal M}^{\cal P}) \min_{j\in J} \Gap (\cal M|_{A_j}) &\le \Gap(\cal M) \le \Gap(\ol{\cal M}^{\cal P}).
\end{align*}
\end{thm}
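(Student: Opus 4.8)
The plan is to work entirely through the variational characterization of the spectral gap: for a chain $\cal M = (\Om, T)$ reversible with stationary measure $P$, one has $\Gap(\cal M) = 1/\CP(P) = \inf\set{\cE_P(f,f)/\Var_P(f)}{\Var_P(f)\ne 0}$, where $\cE_P(f,f) = \E_P[f\cdot\cL_P f]$ as in Appendix~\ref{a:def}. All chains in sight are reversible: the base chain, each restriction $\cal M|_{A_j}$ (reversible w.r.t.\ $P_j := P(\cdot\mid A_j)$ since $T|_{A_j}$ agrees with $T$ off-diagonal inside $A_j$), and the projection $\ol{\cal M}^{\cal P}$ (reversible w.r.t.\ $\bar P(j) := P(A_j)$, because $\bar P(i)\,\ol T^{\cal P}(i,j) = \int_{A_i}\int_{A_j}T(x,dy)\,P(dx)$ is symmetric in $i,j$ by reversibility of $T$). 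For a test function $f:\Om\to\R$, write $\bar f(j) := \E_{P_j}[f]$ and let $h$ denote the block-constant function $h(x):=\bar f(j)$ for $x\in A_j$.

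\textbf{Upper bound $\Gap(\cal M)\le\Gap(\ol{\cal M}^{\cal P})$.} Given any $\bar g:J\to\R$, lift it to $g(x):=\bar g(j(x))$. Since $g(x)-g(y)$ depends only on which blocks $x,y$ lie in, a direct computation using the formula for $\ol T^{\cal P}$ gives $\Var_P(g)=\Var_{\bar P}(\bar g)$ and $\cE_P(g,g)=\cE_{\bar P}(\bar g,\bar g)$ (the Dirichlet form of $\ol{\cal M}^{\cal P}$). Hence $\Gap(\cal M)\le \cE_P(g,g)/\Var_P(g) = \cE_{\bar P}(\bar g,\bar g)/\Var_{\bar P}(\bar g)$; taking the infimum over $\bar g$ gives the bound.

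\textbf{Lower bound.} Fix $f$ with $\Var_P(f)\ne 0$ and set $\bar\lambda := \Gap(\ol{\cal M}^{\cal P})$, $\lambda_* := \min_j\Gap(\cal M|_{A_j})$. First decompose by the law of total variance:
\[ \Var_P(f) = \sum_j \bar P(j)\,\Var_{P_j}(f) + \Var_{\bar P}(\bar f). \]
For the within-block term, $\bar P(j)\,\cE_{P_j}(f,f)$ (the Dirichlet form of $\cal M|_{A_j}$) equals the part of $\cE_P(f,f)$ coming from pairs inside $A_j$, so summing over $j$ captures only within-block pairs and $\sum_j \bar P(j)\,\cE_{P_j}(f,f)\le \cE_P(f,f)$; combined with the Poincar\'e inequality for each $\cal M|_{A_j}$ this yields $\sum_j \bar P(j)\,\Var_{P_j}(f)\le \lambda_*^{-1}\,\cE_P(f,f)$. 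For the between-block term, the Poincar\'e inequality for $\ol{\cal M}^{\cal P}$ gives $\Var_{\bar P}(\bar f)\le \bar\lambda^{-1}\,\cE_{\bar P}(\bar f,\bar f)$. The crux is then a comparison lemma bounding the aggregated Dirichlet form by the original one after absorbing the within-block variation, of the shape $\cE_{\bar P}(\bar f,\bar f)\le \cE_P(f,f) + (\text{const})\sum_j \bar P(j)\,\Var_{P_j}(f)$. Granting such a lemma with the sharp constants of \cite{madras2002markov}, one gets $\Var_{\bar P}(\bar f)\le (\bar\lambda\lambda_*)^{-1}\cE_P(f,f)$ and hence $\Var_P(f)\le \big(\lambda_*^{-1} + (\bar\lambda\lambda_*)^{-1}\big)\cE_P(f,f)\le 2(\bar\lambda\lambda_*)^{-1}\cE_P(f,f)$, using $\bar\lambda\le 1$; rearranging gives $\Gap(\cal M)\ge \tfrac12\bar\lambda\lambda_*$.

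\textbf{Main obstacle.} Everything except the comparison lemma is routine bookkeeping. The difficulty is that the naive estimate $(\bar f(i)-\bar f(j))^2 = \big(\E_{x\sim P_i,\,y\sim P_j}[f(x)-f(y)]\big)^2 \le \E_{x\sim P_i,\,y\sim P_j}[(f(x)-f(y))^2]$ averages over \emph{independent} $x,y$, whereas the Dirichlet form $\cE_P(f,f)$ weights pairs along the actual transition kernel $T(x,dy)P(dx)$; this mismatch makes the naive bound circular. The correct argument inserts $f(x)$ and $f(y)$ into $\bar f(i)-\bar f(j)$ \emph{along genuine transitions}, pays only a bounded multiplicative constant, and reroutes the leftover deviations $\E_{P_i}[(f-\bar f(i))^2]$ back into $\sum_j\bar P(j)\Var_{P_j}(f)$ (already controlled by $\lambda_*$). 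Tracking these constants carefully is exactly what produces the stated factor $\tfrac12$; since the statement is quoted verbatim from \cite{madras2002markov}, it would suffice in the writeup to cite that reference for this lemma.
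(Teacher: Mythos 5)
You should first note that the paper contains no proof of this statement at all: Theorem~\ref{t:decomp} is imported verbatim from \cite{madras2002markov} and used as a black box, so there is nothing internal to compare against, and your closing suggestion to ``cite that reference'' is in fact exactly the paper's treatment. The routine parts of your sketch are fine: the lift argument for $\Gap(\cal M)\le\Gap(\ol{\cal M}^{\cal P})$, the law of total variance, and the observation that $\sum_j \bar P(j)\,\cE_{P_j}(f,f)$ is precisely the within-block part of $\cE_P(f,f)$ (holding probabilities do not contribute), hence at most $\cE_P(f,f)$.

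However, as a standalone proof there is a genuine gap, and it sits exactly where the theorem's content lies. You leave the comparison lemma unproven, and, worse, the bookkeeping you wrap around it would not deliver the stated constant $\rc 2$ even if a lemma of the generic shape you wrote were granted. Concretely, the natural way to get such a lemma is the three-term split $\bar f(i)-\bar f(j)=(\bar f(i)-f(x))+(f(x)-f(y))+(f(y)-\bar f(j))$ along actual transitions $x\in A_i$, $y\in A_j$, followed by a weighted Cauchy--Schwarz; integrating against the flow $T(x,dy)P(dx)$ and using $T(x,A_i^c)\le 1$ gives $\cE_{\bar P}(\bar f,\bar f)\le(1+\epsilon)\,\cE_P^{\mathrm{cross}}(f,f)+C_\epsilon\sum_j\bar P(j)\Var_{P_j}(f)$ with $C_\epsilon=2(1+1/\epsilon)>2$. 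Feeding any such bound into your decomposition yields $\Var_P(f)\le\big[\lambda_*^{-1}(1+C_\epsilon\bar\lambda^{-1})\big]\cE_P^{\mathrm{in}}(f,f)+(1+\epsilon)\bar\lambda^{-1}\,\cE_P^{\mathrm{cross}}(f,f)$, and matching this against the target $2(\bar\lambda\lambda_*)^{-1}\cE_P(f,f)$ forces $C_\epsilon\le 2-\bar\lambda$, which is impossible; one only obtains a weaker prefactor (constants like $1/3$ or $1/5$ are what such ``elementary'' decompositions give). Likewise your intermediate assertion $\Var_{\bar P}(\bar f)\le(\bar\lambda\lambda_*)^{-1}\cE_P(f,f)$ does not follow from the lemma as you state it. Obtaining the clean factor $\rc 2$ is precisely the nontrivial content of Madras--Randall's argument, so you should either reproduce their proof in full or cite the reference for the entire theorem (as the paper does), rather than for an auxiliary lemma embedded in a derivation whose constants do not close.
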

We can now prove our main theorem for sampling.

\begin{proof}[Proof of Theorem~\ref{t:main}(2)] 
Let $\cal M$ be the simulated chain in Algorithm~\ref{a:st}. 
Below, we condition on the event that all the $\wh Z_{\ell}(y^*)$ are $2$-multiplicative approximations of $Z_\ell(y^*)$, that is, $\wh Z_{\ell}(y^*)\in 
[\rc 2, 2]\cdot Z_{\ell}(y^*)$. As in the proof of Theorem~\ref{t:Z}, if we choose the failure probability to be $O\pa{\fc \ep M \pf{\eta}{2L}^d}$, by Lemma~\ref{l:sa} and a union bound---this time applied to the estimates at all levels $\wh Z_\ell(y^*)$---this event happens with probability $1-O(\ep)$. 

We let $\Pst$ denote the stationary measure for the simulated tempering chain, and $\Pst_\ell$ denote the measure restricted to $\{\ell\}\times \Grid \times \BC$.


We use Theorem~\ref{t:decomp} with the partition given by $A_{\ell,y^*}= \{\ell\}\times \{y^*\} \times \{\pm 1\}^n$. 
The restriction $\cal M|_{A_{\ell,y^*}}$ is a 
lazy version of the Glauber dynamics chain for 
$P_{\ell,y^*}$ (that is, with all transition probabilities halved, or multiplied by $\rc 4$ in the case $\ell=1$), which has Poincar\'e constant bounded by $O(n^2)$ by Lemma~\ref{l:ls-ising}. 

First, note that by construction with the Metropolis-Hastings acceptance ratio, the stationary distribution satisfies
\begin{align}\label{e:st-stationary} 
\ol p((\ell,y^*)) \propto R_\ell(y^*):= \fc{Z_\ell(y^*)}{\wh Z_\ell (y^*)}.
\end{align}
For the projected chain, we use Lemma~\ref{l:proj}. We check each of the conditions.
\begin{enumerate}
    \item  To bound the ``bottleneck ratio", note that for
    $k<\ell$, letting 
    $\ol p_j(y^*) = \ol p(y^*|j) = \fc{\ol p((j,y^*))}{\sum_{y\in \Grid} \ol p((j,y))}$
    \begin{align*}
        \fc{\ol p_k(y^*)}{\ol p_\ell(y^*)} &= \fc{R_k(y^*)/\sum_{y\in \Grid}R_k(y)}{R_\ell(y^*)/\sum_{y\in \Grid}R_\ell(y)} \ge \rc 4
    \end{align*}
    using the fact that the $\wh Z_j(y)$ are 2-multiplicative approximations, so that $R_j(y^*)\in [\rc 2, 2]$ for each $j,y^*$.
    \item 
    From~\eqref{e:st-stationary}, 
    we have $\fc{\ol p((\ell,y^*))}{\ol p((\ell,y^*))} \in [\fc 14, 4]$.
    Note that for $\ell, \ell\pm 1\in [M]$, 
    \begin{align*}
        \fc{p_{\ell\pm 1,y^*}(\si)}{p_{\ell,y^*}(\si)} = \exp(\an{\si,(\be_{\ell\pm 1}-\be_\ell)J\si}) \fc{Z_{\ell}(y^*)}{Z_{\ell\pm 1}(y^*)} = \Te(1)
    \end{align*}
    because the ratio of individual terms in $Z_{\ell,y^*}$ and $Z_{\ell\pm 1,y^*}$ is $\Te(1)$. 
    Hence
    \begin{align*}
        \ol T((\ell,y^*), (\ell\pm 1,y^*))
        & = \sum_{\si\in \{\pm 1\}^n} 
        \min\bc{\fc{\wh Z_\ell(y^*)/Z_\ell(y^*)}{\wh Z_{\ell\pm 1}(y^*)/Z_{\ell\pm 1}(y^*)} \cdot \fc{p_{\ell\pm 1,y^*}(\si)}{p_{\ell,y^*}(\si)},1} p_{\ell,y^*}(\si)\\
        &\ge \rc 4 \sum_{\si\in \{\pm 1\}^n}
        \min\bc{\fc{p_{\ell\pm 1,y^*}(\si)}{p_{\ell,y^*}(\si)},1}p_{\ell,y^*}(\si)
        \\
        &=\Om(1) = \Om\pf{\ol p((\ell\pm 1,y^*))}{\ol p((\ell,y^*))}
    \end{align*}
    where we used the fact that $\wh Z_\ell(y^*)$ are 2-multiplicative approximations. We also note
    \begin{align*}
        \ol T((1,y^*), (1,z^*)) &\ge \rc{4}\pf{\eta}{2L}^d.
    \end{align*}
    Hence, condition 1 of Lemma~\ref{l:proj} holds with constant $D_{\text{high}}$ and $D_{\text{adj}}$.
    \item Finally, for any $1\le \ell\le M$, 
    \[
    P\pa{\{\ell\} \times\Grid \times \{\pm 1\}^n} =
    \fc{\sum_{y\in \Grid} R_\ell(y)}{\sumo \ell{M}\sum_{y\in \Grid} R_\ell(y)}
    \ge \rc{4M}.
    \]
\end{enumerate}
Hence by Lemma~\ref{l:proj}, the Poincar\'e constant of $\cal M$ is $O(M^2)=O(n^2)$. Since $\cal M|_{A_{\ell,y^*}}$ have Poincar\'e constant bounded by $O(n^2)$ for each $y^*$,  noting the spectral gap is the inverse of the Poincar\'e constant and using Lemma~\ref{t:decomp}, we get that the Poincar\'e constant of $\cal M$ is $\CP= O\pa{n^2\cdot  n^2} = O\pa{n^4}$. 
For the mixing time, note that the starting distribution is the restriction of the stationary distribution to $\{1\}\times \Grid \times \{\pm 1\}^n$, which has at least $\rc{4M}$ of the mass. Hence the time until the distribution is $\ep$-close to the stationary distribution (and all restrictions to $\{\ell\}\times \Grid \times \{\pm 1\}^n$, $1\le\ell\le M$ are $\ep$-close) is $O\pa{\CP\log\pf{M}{\ep}}$.

Let $P_{M+1}$ be the probability measure on $\{\pm 1\}^n\times \Grid$ with probability mass function given by 
\[
p_{M+1}(\si,y^*) = \fc{\int_{B(y^*)} \exp\pa{\rc 2 \an{\si, \Jpall \si} + \an{X^\top Qy + h, \si} - \fc n 2\ve{y}^2}\,dy}{\int_{[-L,L]^d}\sum_{\si\in \{\pm 1\}^d} \exp\pa{\rc 2 \an{\si, \Jpall \si} + \an{X^\top Qy + h, \si} - \fc n2\ve{y}^2}\,dy}
;
\]
that is, it is obtained from restricting $p_{\Jpall, X^\top Qy + h}^{\si,y}(\si,y)$ to $\{\pm 1\}^n \times [-L,L]^d$ and then rounding $y$ to the nearest grid point. Except for the fact that this measure is restricted to $[-L,L]^d$, this is the distribution we wish to sample from. 
We also know that 
\[
\pst_M(\si,y) = \pa{\wh Z_M(y^*) \sum_{y\in \Grid} R_M(y)}^{-1} \ising{J^\perp}{h(y^*)}.
\]
In terms of $\fc{p_{M+1}(\si,y)}{\pst_M(\si,y)}$, 
the acceptance ratio in Algorithm~\ref{a:st} is given by
\begin{multline}
(4e \max_{y^*\in \Grid}\wh Z_{M+1}(y^*)\exp(c\Tr(J_-)+1))^{-1} \wh Z_M (y^*) g_{n+1,y^*}(\si)\\
= (4e \max_{y^*\in \Grid}\wh Z_{M+1}(y^*)\exp(c\Tr(J_-)+1))^{-1}  \cdot \fc{\int_{[-L,L]^d} Z_{J^\prl, \Jpall, h}(Qy)\,dy}{\sum_{y\in \Grid} R_M(y)} \cdot \fc{p_{M+1}(\si,y^*)}{\pst_{M}(\si,y^*)}
\label{e:ar}
\end{multline}
This is a constant times $\fc{p_{M+1}(\si,y^*)}{\pst_{M}(\si,y^*)}$, so it is the correct rejection sampling ratio. 
We need to show that this is always at most 1, and give a lower bound for the coefficient of $\fc{p_{M+1}(\si,y)}{\pst_M(\si,y)}$.
\begin{enumerate}
    \item Ratio is at most 1:
    We first consider
    \begin{align*}
        \fc{p_{M+1}(\si,y^*)}{\pst_M(\si,y^*)} 
        &= \fc{p_{\Jpall, X^\top Qy^*+h}(\si)}{p_{J^\perp, h(y^*)}(\si)  \pst_M(y^*)}
        \cdot \fc{p_{M+1}(\si|y^*)p_{M+1}(y^*)}{p_{\Jpall, X^\top Qy^*+h}(\si)}\\
        & = \fc{p_{\Jpall, X^\top Qy^*+h}(\si)}{p_{J^\perp, h(y^*)}(\si) \fc{R_M(y^*)}{\sum_{y\in \Grid}R_M(y)}} \cdot 
         \fc{p_{M+1}(\si|y^*)}{p_{\Jpall, X^\top Qy^*+h}(\si)} \fc{\int_{B(y^*)} Z_{J^\prl , J^\perp , h}(Qy)\,dy}{\int_{[-L,L]^d} Z_{J^\prl , J^\perp , h}(Qy)\,dy}\\
        &\le 2\cdot \pa{\sum_{y\in \Grid}R_M(y)}
        \exp(c\Tr(J_-)+1) \cdot 
         \fc{p_{M+1}(\si|y^*)}{p_{\Jpall, X^\top Qy^*}(\si)} \fc{\int_{B(y^*)} Z_{J^\prl , J^\perp , h}(Qy)\,dy}{\int_{[-L,L]^d} Z_{J^\prl , J^\perp , h}(Qy)\,dy}
    \end{align*}
    where we used the guarantee obtained from Corollary~\ref{thm:negdef-ising}. 
    We also note 
    \begin{align*}
        \fc{p_{M+1}(\si|y^*)}{p_{\Jpall, X^\top Qy^*+h}(\si)} &\propto \int_{B(y^*)} \exp\pa{\an{X^\top Q(y-y^*)}-\fc n2 \ve{y}^2}\,dy \\
        &\in \exp\pa{-\fc n2 \ve{y^*}^2} \cdot [e^{-1/2},e^{1/2}]
    \end{align*}
    by~\eqref{e:vary-on-box}; hence, because probabilities integrate to 1, the ratio is bounded by $e$.
    Combining with~\eqref{e:ar}, we obtain that the acceptance ratio is bounded by 
    \begin{align*}
        \rc{2} \cdot 
        \fc{\int_{B(y^*)} \Zproj(Qy)\,dy}{\max_{y^*\in \Grid} \wh Z_{M+1}(y^*)}
        &\le \rc{2} \cdot \fc{\int_{B(y^*)} \Zproj(Qy)\,dy}{\rc 2 \max_{y^*\in \Grid}  Z_{M+1}(y^*)}\le 1.
    \end{align*}

    \item Lower bound for coefficient: The reciprocal of the coefficient is
    \begin{align*}
    &4e \fc{\max\wh Z_{M+1}(y^*)}{\int_{[-L,L]^d} Z_{J^\prl, \Jpall, h}(Qy)\,dy} \exp(c\Tr(J_-)+1)
\cdot {\sum_{y\in \Grid} R_M(y)}\\
    &\le 4e \cdot  \exp(c\Tr(J_-)+1)\cdot 2 \max_{y^*\in \Grid} \fc{\int_{B(y^*)}Z_{J^\prl, \Jpall, h}(Qy)\,dy}{\int_{[-L,L]^d} Z_{J^\prl, \Jpall, h}(Qy)\,dy} \cdot 2\pf{2L}{\eta}^d\\
    &\le 16  \exp(c\Tr(J_-)+2) \pf{2L}{\eta}^d.
    \end{align*}
\end{enumerate}
Thus we can apply Lemma~\ref{l:is} with $C=16\exp(c\Tr(J_-)+2)\pf{2L}{\eta}^d$. Replacing $\ep$ with $\fc{\ep}{C}$, we get that the distribution restricted to $\{M\}\times \Grid \times \{\pm 1\}^n$ after running for $\Om\pa{\CP \log\pf{MC}{\ep}}$ steps is $\fc{\ep}{4C}$ close to $\Pst_M$ in TV-distance. By  Lemma~\ref{l:is}, an accepted sample will be $\fc{\ep}2$ close to $P_{M+1}$. Finally, because $L$ was chosen large enough so that $P(y\nin [-L,L]^d)\le \fc \ep 4$ as in~\eqref{e:cutoff}, we conclude that the marginal distribution of $\si$ is $\ep$-close to $P_{J,h}$. The expected number of trials until acceptance will be $O(CM) = O(n\exp(c\Tr(J_-))(2L/\eta)^d)$.
\end{proof}

\section{Interpreting the Hubbard-Stratonovich transform as as Gaussian mixture posterior} 
\label{app:equiv}
In this Appendix, we discuss at length the properties of the Hubbard-Stratonovich transform and its possible interpretation as a Gaussian mixture model posterior. For the most part (and unlike all of the other appendices in this paper) this discussion is pedagogical, though some simple formulas stated here are used elsewhere in the paper.

Throughout this section, we consider the case when $J$ is positive semi-definite (PSD). 
In this case, we can write $J=\rc nX^\top X$ for $X\in \R^{d\times n}$, for $d=\rank(J)\le n$. Let $x_1,\ldots, x_n$ be the columns of $X$; we will re-interpret the Hubbard-Stratonovich transform 
as giving the posterior of a Gaussian mixture model after seeing samples $x_1,\ldots, x_n$. (The precise model is a very slight variant of the Gaussian mixture model described in the main text and applications sections.)
We consider the following augmented model, which is a density on $\{\pm 1\}^n\times \R^d$:
\begin{align}\label{e:augment}
p_{X,h}(\si,\mu) & = \rc{\Zjoint} \prod_{i=1}^n \exp\pa{-\rc 2 \ve{\si_i x_i-\mu}^2 + h_i\si_i}
\\ 
\text{where }
\Zjoint  & = \int_{\R^d} \sum_{\si\in \{\pm 1\}^n} \exp\pa{-\rc 2 \ve{\si_i x_i-\mu}^2 + h_i\si_i}\,d\mu.
\end{align}
(As we will see below, $\Zjoint$ does not depend on the choice of $X$.)
Note this can be interpreted as the posterior distribution for a Gaussian mixture model (with two components, symmetric around 0 with identity covariance) $p(x|\mu) \propto \exp\pa{-\rc 2\ve{x-\mu}^2} + \exp\pa{-\rc 2\ve{x+\mu}^2}$ with uniform prior on $\mu$
and prior on $\si$ given by $p_{\textup{prior}}(\si) \propto e^{\an{h,\si}}$, where $\si$ represents the class assignments (to the Gaussian with mean $\mu$ or mean $-\mu$).

We summarize the connection in this lemma. We will drop the subscripts $J,h$ when they are clear. 
\begin{lem}\label{l:joint}
Consider the distribution $p_{X,h}(\si,\mu)$ in \eqref{e:augment} and let $J=\rc n X^\top X$.
The following hold:
\begin{enumerate}
\item
The marginal distribution of $\si$ is $p_{J,h}(\si)$ (in \eqref{e:ising}). 
\item
The marginal distribution on $\mu$ is
\begin{align*}
p(\mu) & \propto 
 e^{-\fc n2 \ve{\mu}^2} \prodo in \cosh(\an{x_i,\mu}+h_i).
\end{align*}
\item
The conditional distribution of $\si$ given $\mu$ is a product distribution,
\begin{align*}
p(\si|\mu) &\propto \prod_{i=1}^n \exp\pa{\si_i (\an{x_i,\mu}+h_i)}.
\end{align*}
\item
The conditional distribution of $\mu$ given $\si$ is a Gaussian distribution,
\begin{align*}
p(\mu|\si) &= \pf{n}{2\pi}^{\fc n2} \exp\pa{-\fc n2 \ve{\mu - \fc{\sumo in \si_i x_i}n}^2}.
\end{align*}
\item
The partition functions are related via
\begin{align*}
\Zjoint  &= \pf{2\pi}{n}^{n/2} \exp\pa{-\fc n2 \Tr(J)}\ZJh .
\end{align*}
\end{enumerate}
\end{lem}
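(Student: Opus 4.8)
The plan is to obtain all five parts from a single computation: expand the exponent in \eqref{e:augment} and complete the square in $\mu$. Since $\si_i\in\{\pm1\}$ we have $\ve{\si_i x_i-\mu}^2 = \ve{x_i}^2 - 2\si_i\an{x_i,\mu} + \ve{\mu}^2$, so summing over $i$ and using $\sum_{i=1}^n\ve{x_i}^2 = \Tr(X^\top X) = n\Tr(J)$ gives
\begin{align*}
\sum_{i=1}^n\pa{-\rc2\ve{\si_i x_i-\mu}^2 + h_i\si_i}
= -\fc n2\Tr(J) - \fc n2\ve{\mu}^2 + \an{\si, X^\top\mu} + \an{h,\si},
\end{align*}
where I write $\sum_{i=1}^n\si_i\an{x_i,\mu} = \an{\si, X^\top\mu}$. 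Part 3 is immediate from this, since for fixed $\mu$ the $\si$-dependence factors as $\prod_{i=1}^n\exp(\si_i(\an{x_i,\mu}+h_i))$. Completing the square in $\mu$ and using $\rc{2n}\ve{X\si}^2 = \rc{2n}\an{\si,X^\top X\si} = \rc2\an{\si,J\si}$, the exponent becomes
\begin{align*}
-\fc n2\Tr(J) - \fc n2\ve{\mu - \rc n X\si}^2 + \rc2\an{\si,J\si} + \an{h,\si};
\end{align*}
recognizing the $\mu$-dependence as a Gaussian with mean $\rc n X\si = \rc n\sum_i\si_ix_i$ and covariance $\rc n I$, with normalizing constant $(n/2\pi)^{n/2}$, yields part 4.

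For the remaining parts I would sum or integrate the unnormalized joint. Summing the product form from part 3 over $\si\in\BC$ gives $e^{-\frac n2\Tr J}e^{-\frac n2\ve{\mu}^2}\prod_{i=1}^n 2\cosh(\an{x_i,\mu}+h_i)$, and normalizing gives part 2. Integrating the completed-square form over $\mu$ contributes a factor $(2\pi/n)^{n/2}$ and leaves $e^{-\frac n2\Tr J}(2\pi/n)^{n/2}\exp(\rc2\an{\si,J\si}+\an{h,\si})$; normalizing in $\si$ recovers $p_{J,h}$ of \eqref{e:ising} (part 1), and summing over $\si$ identifies $\Zjoint = (2\pi/n)^{n/2}e^{-\frac n2\Tr J}\ZJh$ (part 5). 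This last formula also shows that $\Zjoint$ does not depend on the choice of factorization $X$, as asserted just before the lemma.

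This argument is essentially bookkeeping, and I do not anticipate a genuine obstacle. The points that require care are: the identity $\ve{\si_i x_i}^2 = \ve{x_i}^2$, which is precisely what collapses the $x$-quadratic terms into the constant $-\frac n2\Tr J$ and explains why this transform is adapted to the hypercube; keeping the Gaussian normalizing constant $(n/2\pi)^{n/2}$ and its reciprocal straight between parts 4 and 5; and the identity $\rc{2n}\an{\si,X^\top X\si} = \rc2\an{\si,J\si}$ that links the $\mu$-marginalized exponent back to $p_{J,h}$.
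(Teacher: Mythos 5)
Your proposal is correct and follows essentially the same route as the paper's proof: expand the exponent, complete the square in $\mu$ to read off parts 1, 4, and 5, factor over coordinates and sum over $\si_i=\pm1$ to get the $\cosh$ marginal in part 2, and obtain the conditionals directly from the joint. The single-expansion organization is just a streamlined presentation of the same bookkeeping, including the key identities $\sum_i\ve{x_i}^2=n\Tr(J)$ and $\rc{2n}\an{\si,X^\top X\si}=\rc2\an{\si,J\si}$.
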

As a consequence, to sample from $p(\si)$, it suffices to sample $\mu$ from the above distribution, and then sample $\mu$ conditional on $\mu$ (which is immediate).

We calculate the Hessian of $-\ln p(\mu)$:
\begin{align*}
-\nb^2 \ln p(\mu) &= nI - \sumo in x_ix_i^\top + \sumo in (1-\sech^2(\an{x_i,\mu} + h_i)) x_ix_i^\top.
\end{align*}
Note that this is convex (and hence $p(\mu)$ is log-concave) when $J =\rc n \sumo in x_ix_i^\top\preceq I$. This observation can be used to infer an efficient sampling algorithm for $p_{J,h}$ by first drawing a sample from $p(\mu)$ (using algorithms for log-concave sampling such as Langevin dynamics~\citep{durmus2019analysis}) and then drawing from $p(\si|\mu)$, as observed in \cite{bauerschmidt2019very}. This gives an alternative algorithm to the Glauber dynamics (which mix rapidly under the same assumption \citep{anari2021aentropic}), albeit one which is not as fast.

We note that our decomposition is similar, but slightly different from the decomposition in~\cite{bauerschmidt2019very}. 
Both approaches decompose $p_{J,h}$ as a log-concave mixture of product distributions when $J\preceq I$. 
Our approach has the advantage that when $J$ has a few large eigenvalues (eigenvalues greater than 1), the distribution on $\mu$ is still log-concave in the other directions. We note the log-concave decomposition technique was used extensively in analysis of the Hopfield model \citep{bovier2012mathematical,talagrand2010mean}.

\begin{proof}
\begin{enumerate}
\item
The marginal distribution of $\si$ is 
\begin{align}
\nonumber
&\rc{\Zjoint }\int_{\R^d} \prodo in \exp\pa{ -\rc 2\ve{\si_ix_i-\mu}^2+h_i\si_i}\,d\mu
=\rc{\Zjoint } \int_{\R^d} \exp\pa{-\rc 2 \sumo in \ve{\si_i x_i-\mu}^2+h_i\si_i}\,d\mu\\
\nonumber
&= \rc{\Zjoint } \int_{\R^d} \exp\pa{-\fc n2 \ve{\mu - \fc{\sumo in \si_i x_i}{n}}^2 + \rc{2n} \sumo{i,j}n \si_i x_i x_j^\top \si_j - \rc 2 \sumo in \ve{x_i}^2 + \an{h,\si}}\,d\mu\\
& = \rc{\Zjoint }\pf{2\pi}n^{n/2} \exp\pa{-\rc 2 \ve{X}_F^2} \exp\pa{\rc 2 \si^\top \pf{XX^\top}n\si+ \an{h,\si}},
\label{e:comp-nc}
\end{align}
where the last line uses the fact that the integral of $\exp\pa{-\fc n2 \ve{\mu-\mu_0}^2}$ is a fixed normalizing constant, for any $\mu_0$. Finally, we use $J=\rc n XX^\top$.
\item This follows from factoring the product,
\begin{align*}
p(\mu) &\propto \sum_{\si\in \{\pm 1\}^n} \prodo in \exp\pa{-\rc 2 \ve{\si_i x_i-\mu}^2+h_i\si_i} \propto \prodo in \sum_{\si_i=\pm 1}\exp\pa{-\rc 2 \ve{\si_ix_i-\mu}^2+h_i\si_i}\\
&\propto
e^{-\rc 2 \ve{\mu}^2} \prodo in \sum_{\si_i=\pm 1} e^{\si_i (\an{x_i,\mu}+h_i)} \propto  e^{-\fc n2 \ve{\mu}^2} \prodo in \cosh(\an{x_i,\mu}+h_i).
\end{align*}
\item[3--4.] These follow directly by noting $p(\si|\mu)\propto p(\si,\mu)$ for fixed $\mu$, and $p(\mu|\si) \propto p(\si,\mu)$ for fixed $\si$.
\item[5.] This follows from comparing normalizing constants in~\eqref{e:comp-nc}.
\end{enumerate}
\end{proof}

Lemma~\ref{l:joint} gives a decomposition of $p_{J,h}$ into a mixture of product distributions $p_{J,h}(\si) = \int_{\R^d} p(\si|\mu) p(\mu)\,d\mu$. We can instead only condition on the projection of $\mu$ to a rank-$d$ subspace $V$ and obtain a decomposition in terms of rank-$(n-d)$ Ising models. We will choose the rank-$d$ subspace to contain the eigenvectors of $J$ with large eigenvalue.

We define the distribution $p_{X,h,V}(\si,\mu^\prl, \mu^\perp)$ on $\{\pm 1\}^n\times V\times V^\perp$ by $p_{X,h,V}(\si,\mu^\prl, \mu^\perp) = p_{X,h}(\si,\mu^\prl+\mu^\perp)$.

\begin{lem}\label{l:joint-prl}
Consider the distribution $p(\si,\mu^\prl, \mu^\perp)$.
Let $P^\prl$ and $P^\perp$ be the projections onto $V$ and $V^\perp$, respectively and let $J^\prl = \rc n X^\top P^\perp X$, $J^\perp = J-J^\prl$. 
\begin{enumerate}
\item
The joint distribution of $(\si,\mu^\prl)$ is given by
\begin{multline*}
    p(\si,\mu^\prl) 
     = \rc{\Zjoint }
\pf{2\pi}{n}^{(n-d)/2}\exp\pa{-\fc n2 \Tr(J)}\\
    \cdot \exp\pa{\rc{2} \an{\si, J^\perp \si}
    + \an{X^\top P^\prl \mu+h ,\si} - \fc n2 \ve{\mu^\prl}^2}.
\end{multline*}
    \item The distribution of $\si$ given $\mu^\prl$ is
\begin{align*}
p(\si|\mu^\prl)&
= 
p_{J^\perp , h + X^\top \mu^\prl}(\si) 
\propto 
\exp\pa{\isingexp{J^\perp}{h+X^\top \mu^\prl}}.
\end{align*}
\item 
The distribution of $\mu^\prl$ given $\si$ is Gaussian,
\begin{align*}
p(\mu^\prl|\si) &=\pf{n}{2\pi}^{\fc d2} \exp\pa{-\fc n2 \ve{\mu^\prl - \fc{\sumo in \si_i P^\prl x_i}n}^2}.
\end{align*}
\item
Let $\Zproj := \int_{\mu^\prl \in V}
\Zproj(\mu^\prl) \,d\mu^\prl$ where 
\begin{align}
\label{e:ZXhVmu}
\Zproj (\mu^\prl) :&= 
Z_{J^\perp, X^\top P^\prl \mu + h}\exp\pa{-\fc n2 \ve{\mu^\prl}^2\,d\mu^\prl} \\
&= 
\sum_{\si\in \{\pm 1\}^n}\exp\pa{\rc 2 \an{\si, J^\perp \si}+\an{X^\top P^\prl \mu + h,\si} -\fc n2 \ve{\mu^\prl}^2}.
\end{align}
Then we have
\begin{align}
\Zjoint  &=
\pf{2\pi}{n}^{d/2}\exp\pa{-\fc n2 \Tr(J^\perp)} \Zproj .
\end{align}
\end{enumerate}
\end{lem}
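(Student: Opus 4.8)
The plan is to mirror the proof of Lemma~\ref{l:joint}, the one new ingredient being an orthogonal split of the Gaussian variable along $V$ and $V^\perp$. Writing $x_i^\prl = P^\prl x_i$ and $x_i^\perp = P^\perp x_i$, for every $i$ we have, since $\si_i^2=1$ and $V\perp V^\perp$,
\[ \ve{\si_i x_i - \mu^\prl - \mu^\perp}^2 = \ve{\si_i x_i^\prl - \mu^\prl}^2 + \ve{\si_i x_i^\perp - \mu^\perp}^2 . \]
Plugging this into the definition \eqref{e:augment} of $p_{X,h}$, the joint density $p(\si,\mu^\prl,\mu^\perp)$ factors as a function of $(\si,\mu^\prl)$ times a function of $(\si,\mu^\perp)$ (attach the external-field factor $e^{\an{h,\si}}$ to the latter).

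For part (1), I would compute $p(\si,\mu^\prl) = \int_{V^\perp} p(\si,\mu^\prl,\mu^\perp)\,d\mu^\perp$. Expanding $\sum_i\ve{\si_i x_i^\perp - \mu^\perp}^2 = \sum_i\ve{x_i^\perp}^2 - 2\an{P^\perp X\si,\mu^\perp} + n\ve{\mu^\perp}^2$, completing the square in $\mu^\perp$, and integrating over $V^\perp$ produces the constant $\pf{2\pi}{n}^{(n-d)/2}$ and leaves $\exp\pa{\rc{2n}\ve{P^\perp X\si}^2 - \rc 2\sum_i\ve{x_i^\perp}^2}$; here $\rc n\ve{P^\perp X\si}^2 = \an{\si,J^\perp\si}$ and $\sum_i\ve{x_i^\perp}^2 = \Tr(X^\top P^\perp X) = n\Tr(J^\perp)$, which produces the $J^\perp$ quadratic form. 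The remaining $\mu^\prl$-factor is $\exp\pa{-\rc 2\sum_i\ve{\si_i x_i^\prl - \mu^\prl}^2} = \exp\pa{\an{X^\top\mu^\prl,\si} - \fc n2\ve{\mu^\prl}^2 - \fc n2\Tr(J^\prl)}$, using $\an{P^\prl X\si,\mu^\prl} = \an{X^\top\mu^\prl,\si}$ (valid since $\mu^\prl\in V$) and $\sum_i\ve{x_i^\prl}^2 = n\Tr(J^\prl)$. Collecting the pieces and using $\Tr(J^\prl)+\Tr(J^\perp)=\Tr(J)$ gives the stated formula for $p(\si,\mu^\prl)$.

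Parts (2) and (3) then drop out of part (1). Fixing $\mu^\prl$ and discarding the $\mu^\prl$-only factors, $p(\si\mid\mu^\prl)\propto \exp\pa{\rc2\an{\si,J^\perp\si} + \an{h + X^\top\mu^\prl,\si}}$, which is exactly $p_{J^\perp,\,h+X^\top\mu^\prl}$. Fixing $\si$ and discarding the $\si$-only factors, $p(\mu^\prl\mid\si)\propto \exp\pa{\an{X^\top\mu^\prl,\si} - \fc n2\ve{\mu^\prl}^2}$; completing the square (and noting $\an{X^\top\mu^\prl,\si} = \an{\mu^\prl,\sum_i\si_i P^\prl x_i}$ for $\mu^\prl\in V$) yields the claimed Gaussian on the $d$-dimensional space $V$ with normalizing constant $\pf{n}{2\pi}^{d/2}$. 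For part (4), integrate the formula of part (1) over all $\si\in\BC$ and all $\mu^\prl\in V$: the left side is $1$, and the integral of the bracketed factor is precisely $\Zproj$ by the definition in \eqref{e:ZXhVmu}, so solving for $\Zjoint$ gives the partition-function identity; equivalently this can be obtained by combining Lemma~\ref{l:joint}(5) with Theorem~\ref{t:hs}.

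None of this presents a real obstacle --- it is the same Gaussian-integration computation as in Lemma~\ref{l:joint}. The only thing to be careful about is the bookkeeping of the trace terms $\Tr(J^\prl),\Tr(J^\perp)$ and of the dimensional prefactors $\pf{2\pi}{n}^{(n-d)/2}$ versus $\pf{n}{2\pi}^{d/2}$, so that parts (1)--(4) come out mutually consistent.
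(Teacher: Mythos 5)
Your proof is correct and essentially identical to the paper's: the same orthogonal split of $\ve{\si_i x_i-\mu}^2$ along $V$ and $V^\perp$, Gaussian integration over $V^\perp$ (completing the square) for part (1), reading off the conditionals in parts (2)--(3), and normalizing the joint density to $1$ for part (4). One remark: the constant your route (and the paper's own part (1)) yields is $\Zjoint = \pf{2\pi}{n}^{(n-d)/2}\exp\pa{-\fc n2\Tr(J)}\Zproj$, which also agrees with combining Lemma~\ref{l:joint}(5) and Theorem~\ref{t:hs}; the prefactor printed in part (4) of the lemma statement appears to be a typo rather than a gap in your argument.
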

\begin{proof}
\begin{enumerate}
    \item We integrate $p(\si,\mu^\prl, \mu^\perp)$ along $V^\perp$ and complete the square in $\mu^\perp$; integrating gives a $\pf{2\pi}n^{(n-d)/2}$ normalizing constant:
\begin{align*}
p(\si,\mu^\prl)
& = \int_{\mu^\perp \in V^\perp}p(\si,\mu^\prl,\mu^\perp) \,d\mu^\perp \\
&= \rc{\Zjoint } \int_{\mu^\perp \in V^\perp} \exp\pa{-\rc 2\sumo in \pa{ \ve{\si_iP^{\perp} x_i-\mu^\perp }^2 + \ve{\si_iP^{\prl} x_i-\mu^\prl }^2} + \an{h,\si}}\,d\mu_\perp \\
 & =\!\begin{multlined}[t] \rc{\Zjoint } \int_{\mu^\perp \in V^\perp}  \exp\Bigg(-\rc 2\Bigg( n\ve{\mu^\perp}^2 - \sumo in \si_i \an{\mu^\perp, P^\perp x_i} + \sumo in \ve{P^\perp x_i}^2 \\
 + \sumo in \ve{\si_iP^{\prl} x_i-\mu^\prl }^2\Bigg) + \an{h,\si}\Bigg)\,d\mu_\perp \end{multlined}\\
 & = \!\begin{multlined}[t] \rc{\Zjoint } \int_{\mu^\perp \in V^\perp}
 \exp\Bigg( - \fc n2 \ve{\mu^\perp - \rc n \sumo in \si_i P^\perp x_i}^2 + \fc{1}{2n} \ve{\sumo in \si_i P^\perp x_i}\\  - \rc 2 \sumo in \ve{P^\perp x_i}^2 -\rc 2\sumo in  \pa{\ve{P^\prl x_i}^2 - \an{X^\top P^\prl\mu, \si} + \ve{\mu^\prl}^2} + \an{h,\si}\Bigg)\,d\mu_\perp \end{multlined}\\
 &= \rc{\Zjoint }\pf{2\pi}{n}^{(n-d)/2} \exp\pa{ \rc 2 \an{\si, \fc{X^\top P^\perp X}n\si} - \rc 2 \ve{X}_F^2+ \an{X^\top P^\prl \mu+h ,\si} - \fc n2 \ve{\mu^\prl}^2}
\end{align*}
Finally, we rewrite in terms of $J^\perp$ by using $J^\perp = \rc n X^\top P^\perp X$.
\item 
This follows from fixing $\mu^\prl$ in the joint probability density and expanding.
\item
This follows from fixing $\si$ in the joint density, expanding, and completing the square in $\mu^\prl$. 
\item This follows from setting the integral of the joint density equal to 1.
\end{enumerate}
\end{proof}

Finally, we note that although the interpretation as a Gaussian mixture posterior only makes sense when $J$ is positive semi-definite, the decomposition still works for general symmetric $J$, as we can multiply the distribution by $\exp\pa{-\rc 2\an{\si, J_-\si}}$.
We note that combining Lemma~\ref{l:joint}, part 5, with Lemma~\ref{l:joint-prl}, part 4, gives us Theorem~\ref{t:hs} in the PSD case.

\section{Technical lemmas for partition function estimation and sampling}\label{app:technical}

In this section, we collect some technical lemmas we will need for analyzing our algorithms for partition function estimation and sampling.

\subsection{Simulated annealing}

For partition function estimation, we use the following lemma, which roughly says that when the variance of some random variables are close to 1, then the variance is additive under multiplication.
\begin{lem}[{\cite[Lemma B.2]{ge2020estimating}, cf. \cite{dyer1991computing}}]\label{lem:prodvar}\label{l:prod}
    Let $Y_\ell$, $\ell = 1, \ldots, M$ be independent variables and let $\ol{Y}_\ell = \E Y_\ell$. Assume there exists $\eta > 0$  such that $\eta M \leq \frac{1}{5}$ and
    \begin{equation*}
        \E Y_\ell^2 \leq (1 + \eta) \ol{Y}_\ell^2,
    \end{equation*}
    then for any $\ep > 0$
    \begin{equation*}
        \mathbb{P}\Biggl( \frac{ \ab{ Y_1 \cdots Y_M - \ol{Y}_1 \cdots \ol{Y}_M}}{\ol{Y}_1 \cdots \ol{Y}_M} \geq \frac{\ep}{2} \Biggr) \leq \frac{5 \eta M}{\ep^2}.
    \end{equation*}
\end{lem}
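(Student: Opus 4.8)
The plan is to reduce to the mean-one case and then apply Chebyshev's inequality to the product. First I would normalize: set $\tilde Y_\ell := Y_\ell/\ol Y_\ell$, which is well-defined because the hypothesis $\E Y_\ell^2 \le (1+\eta)\ol Y_\ell^2$ together with $\eta > 0$ forces $\ol Y_\ell \ne 0$ (indeed $\ol Y_\ell^2 \le \E Y_\ell^2$ always, and if $\ol Y_\ell = 0$ the hypothesis would give $\E Y_\ell^2 = 0$, making the whole product identically zero and the statement trivial). Then $\E \tilde Y_\ell = 1$ and $\E \tilde Y_\ell^2 = \E Y_\ell^2/\ol Y_\ell^2 \le 1+\eta$, so $\Var(\tilde Y_\ell) \le \eta$, and this works regardless of the sign of $\ol Y_\ell$. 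The random variable to control is
\[ W := \prod_{\ell=1}^M \tilde Y_\ell = \frac{Y_1\cdots Y_M}{\ol Y_1 \cdots \ol Y_M}, \]
and the assertion becomes $\Pr(|W-1| \ge \ep/2) \le 5\eta M/\ep^2$.

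The key step is the second moment computation using independence: $\E W = \prod_\ell \E \tilde Y_\ell = 1$ and $\E W^2 = \prod_\ell \E \tilde Y_\ell^2 \le (1+\eta)^M$, so
\[ \Var(W) = \E W^2 - (\E W)^2 \le (1+\eta)^M - 1 \le e^{\eta M} - 1. \]
Now I would invoke the elementary fact that $x \mapsto (e^x-1)/x$ is increasing on $(0,\infty)$, so under the hypothesis $\eta M \le 1/5$ we get $e^{\eta M} - 1 \le \frac{e^{1/5}-1}{1/5}\,\eta M \le \frac54 \eta M$. Combining, $\Var(W) \le \frac54\eta M$. Chebyshev's inequality then finishes the proof:
\[ \Pr\!\left(|W-1| \ge \tfrac\ep2\right) \le \frac{\Var(W)}{(\ep/2)^2} = \frac{4\Var(W)}{\ep^2} \le \frac{5\eta M}{\ep^2}. \]

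There is essentially no serious obstacle here; the argument is a textbook second-moment/Chebyshev estimate. The only points requiring a little care are the normalization step (checking $\ol Y_\ell \ne 0$ and that the argument is sign-agnostic) and the elementary inequality $(1+\eta)^M - 1 \le \frac54 \eta M$ valid under the constraint $\eta M \le 1/5$; both are routine.
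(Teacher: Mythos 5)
Your proof is correct: the normalization to $\tilde Y_\ell = Y_\ell/\ol Y_\ell$, the independence-based computation $\E W^2 \le (1+\eta)^M \le e^{\eta M}$, the elementary bound $e^{\eta M}-1 \le \tfrac54 \eta M$ under $\eta M \le \tfrac15$, and Chebyshev give exactly the claimed $5\eta M/\ep^2$. The paper does not prove this lemma itself (it imports it from the cited reference), and your argument is essentially the same standard second-moment/Chebyshev proof used there, so there is nothing to add.
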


\begin{proof}[Proof of Lemma~\ref{l:sa}]
Let $Y_\ell = \E_{P_\ell} \fc{q_\ell}{p_\ell} = \fc{\int_{\Om} q_{\ell+1}}{\int_{\Om}q_\ell}$. 
By  Lemma~\ref{l:prod} with $\eta =\fc{\si^2}N$,
\begin{align}
\label{e:sched-var}
\Pj\pa{
\prodo \ell M Y_\ell 
\nin [e^{\ep/2},e^{\ep/2}]\cdot  \prodo \ell M \ol Y_\ell
} 
&\le 
\Pj\pa{\fc{\ab{\prodo \ell{M} Y_\ell - \prodo \ell{M}\ol Y_\ell}}{\prodo \ell{M}\ol Y_\ell} \ge \fc\ep 4}
\le \fc{80\eta M}{\ep^2} \le \rc 4.
\end{align}

Now we consider the bias. We have
\[
\E_{\wt P_\ell} g_\ell(x) \in 
\ba{ 1- \fc{\ep}{4M}, 1+ \fc{\ep}{4M}}\cdot \E_{P_\ell} g_\ell(x)
\subeq [e^{-\fc{\ep}{2M}}, e^{\fc{\ep}{2M}}] \cdot \E_{P_\ell} g_\ell(x).
\]
Taking a product, we obtain 
\begin{align}
\label{e:sched-bias}
    Z_1\prodo \ell{M} \ol Y_\ell &\in \ba{e^{-\eph}, e^{\eph}}\cdot Z.
\end{align}
Putting together~\eqref{e:sched-var} and~\eqref{e:sched-bias}, 
we obtain that for any $r$, 
\[
\Pj\pa{\wh Z^r \nin [e^{-\ep}, e^\ep] Z}\le \fc 14.
\]
The algorithm takes the median in order to boost this probability. As the median of $R$ independent runs, $\wh Z$ will fail to be contained in $[e^{-\ep}, e^\ep]\cdot  Z$ only if at least half of the $\wh Z^r$'s fail to be contained in $[e^{-\ep}, e^\ep]\cdot  Z$. By the Chernoff-Hoeffding bound, this happens with probability at most $\de$ when $R\ge 32 \log\prc \de$. 
\end{proof}




\subsection{Rejection sampling}

The following bounds the TV-error and expected running time for rejection sampling, given an inexact oracle for the proposal distribution.

\begin{lem}\label{l:is}
Suppose that $P$ and $Q$ are probability measures on $\Om$ such that $\dd PQ\le C$ everywhere. Suppose we have an oracle which gives samples from $\wt Q$, with $d_{\TV}(\wt Q, Q)\le \fc{\ep}{2C}$.
Consider the following rejection sampling algorithm: draw $x\sim \wt Q$, and accept with probability $\rc C\dd{P}{Q}(x)$; otherwise repeat the process. Let $\wt P$ be the resulting measure. Then $d_{\TV}(\wt P, P)\le\ep$, and the number of oracle calls is a geometric random variable with success probability at least $\rc{2C}$ (and hence expected value at most $2C$).
\end{lem}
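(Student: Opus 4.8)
The plan is to compute the density of the output law $\wt P$ explicitly and compare it with $p$, tracking the per-trial acceptance probability along the way. I would work with densities $p,q,\wt q$ of $P,Q,\wt Q$ with respect to a common dominating measure (counting measure suffices for the discrete setting used in the application), extending the Radon--Nikodym derivative $\dd PQ$ by $0$ on the set $\{q=0\}$; this is legitimate since the very existence of $\dd PQ$ forces $P\ll Q$. By the hypothesis $\dd PQ\le C$, the per-point acceptance probability $r(x):=\rc C\dd PQ(x)$ lies in $[0,1]$, so the procedure is well defined: in a single trial we draw $x\sim\wt Q$ and keep it with probability $r(x)$, so the one-trial acceptance probability is $a:=\int r(x)\wt q(x)\,dx$, the number of oracle calls is a geometric random variable with success probability $a$, and conditioned on acceptance the output has density $\wt p(x)=r(x)\wt q(x)/a$. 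I would also record that $r(x)q(x)=\rc C\,p(x)$, hence $\int rq=\rc C$ and $p(x)=C\,r(x)q(x)$.

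First I would lower bound $a$. Since $Ca-1=C\int r\wt q-C\int rq=\int\dd PQ\,(\wt q-q)\,dx$ and $0\le\dd PQ\le C$, the elementary inequality $\big|\int f\,(d\mu-d\nu)\big|\le(\sup f)\,d_{\TV}(\mu,\nu)$ for nonnegative $f$ (with the convention $d_{\TV}(\mu,\nu)=\rc2\int|d\mu-d\nu|$) gives $|Ca-1|\le C\,d_{\TV}(\wt Q,Q)\le\eph\le\rc2$. Hence $a\ge\rc{2C}$, which is precisely the assertion about the number of oracle calls: a geometric random variable with success probability at least $\rc{2C}$, so with expectation $1/a\le 2C$.

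Next I would bound the total variation distance. Writing
\[
\wt p(x)-p(x)=r(x)\Big(\fc{\wt q(x)}a-Cq(x)\Big)=\Big(\rc a-C\Big)r(x)\wt q(x)+C\,r(x)\big(\wt q(x)-q(x)\big)
\]
and integrating absolute values, the first summand contributes $\big|\rc a-C\big|\int r\wt q=|1-Ca|\le\eph$, while the second contributes at most $C\int|\wt q-q|\,dx=2C\,d_{\TV}(\wt Q,Q)\le\ep$, using $r\le1$. Therefore $\int|\wt p-p|\,dx\le\fc32\ep$, and $d_{\TV}(\wt P,P)=\rc2\int|\wt p-p|\,dx\le\fc34\ep\le\ep$, as claimed.

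There is no genuine obstacle here: this is a short, routine estimate. The only points that need attention are (i) confirming that the hypotheses make the algorithm well defined ($r\le1$ because $\dd PQ\le C$, and $P\ll Q$ so that every density written above exists), and (ii) organizing the decomposition in the last step so that the two conceptually distinct error sources are separated cleanly and each is charged only once against $d_{\TV}(\wt Q,Q)\le\ep/(2C)$: the mis-normalization $|1-Ca|$, arising because the true acceptance probability $\rc C$ is not realized exactly, and the proposal error $\int|\wt q-q|$.
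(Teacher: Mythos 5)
Your proof is correct and takes essentially the same route as the paper's: the same split of the error into a normalization term (your $|1-Ca|$ is exactly the paper's $\left|\int_\Om \dd PQ \,d\wt Q-1\right|$) and a proposal-error term controlled by $d_{\TV}(\wt Q,Q)$ via the boundedness of $\dd PQ$, together with the same lower bound on the per-trial acceptance probability; phrasing it through densities and $\rc2\int|\wt p-p|$ rather than $\sup_A|\wt P(A)-P(A)|$ is only cosmetic. (As in the paper's own argument, your step $\ep/2\le\rc2$ implicitly uses $\ep\le 1$, which is harmless in context.)
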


\begin{proof}
Let $A\subeq \Om$ be measurable.
First, we note that $\wt P(A) = \fc{\int_A \dd PQ d\wt Q}{\int_\Om \dd PQ d\wt Q}$. To calculate $d_{\TV}(\wt P,P)$, we break up the difference as
\begin{align*}
    \wt P(A) - P(A) 
    &= \pa{\fc{\int_A \dd PQ \,d\wt Q}{\int_{\Om} \dd PQ \,d\wt Q} - \int_A \dd PQ\, d\wt Q}
    + \pa{\int_A \dd PQ\, d\wt Q - \int_A \dd PQ\, dQ}\\
    &\le \pa{\fc{\int_A \dd PQ \,d\wt Q}{\int_{\Om} \dd PQ \,d\wt Q}\pa{1-\int_{\Om} \dd PQ \,d\wt Q}}  + \pa{\int_A \dd PQ\, d\wt Q - \int_A \dd PQ\, dQ}
\end{align*}
Next note that 
\begin{align*}
\ab{\int_{\Om} \dd PQ \,d\wt Q - 1} \le 
    \ab{\int_{\Om} \dd PQ \,d\wt Q - \int_\Om \dd PQ \,dQ}\le Cd_{\TV} (Q, \wt Q) \le \fc{\ep}2.
\end{align*}
Hence,
\begin{align*}
    |\wt P(A) - P(A) | &\le
    \ab{\int_{\Om} \dd PQ \,d\wt Q - 1} + d_{\TV}(\wt Q, Q) \ve{\dd PQ}_{\iy} 
    \le \fc{\ep}2 + \fc \ep{2C}C = \ep,
\end{align*}
so $d_{\TV}(\wt P , P)\le \ep$. 
Finally, we check that the acceptance probability is \[\int_\Om \rc C \dd PQ d\wt Q \ge \int_\Om \rc C \dd PQ dQ - \rc C \cdot C d_{\TV}(Q, \wt Q) \ge  \rc C - \fc{\ep}{2C}\ge \rc{2C}.\]
\end{proof}

\subsection{Spectral gap of a projected chain}

We use the following to bound the Poincar\'e constant of the projected Markov chain arising in the analysis of simulated tempering.  A similar analysis appears in the proof in~\cite{ge2018simulated}.

\begin{lem}\label{l:proj}
Let $S$ be a countable set. 
Consider a reversible Markov chain on $[L]\times S$ with stationary distribution $P$ and transition kernel $T$ satisfying the following conditions. Let $P_\ell(j) = P((\ell,j))/P(\{\ell\}\times S)$.
\begin{enumerate}
\item (Bounded bottleneck ratio) For $k<\ell$, $\fc{P_k(j)}{P_\ell(j)}\ge \ga$. 
\item (Transitions at highest temperature and between adjacent temperatures)
We have
\begin{align*}
T((\ell_1,i_1),(\ell_2,i_2))
&\ge \begin{cases}
\fc{P_1(i_2)}{D_{\textup{high}}}, & \ell_1=\ell_2=1, \quad i_1\ne i_2\\
\rc {2D_{\textup{adj}}}\min\bc{\fc{P((\ell \pm 1, i_1))}{P((\ell , i_1))}, 1} , & i_1=i_2,\quad  \ell_1\ne L, \ell_2=\ell_1\pm 1
\end{cases}
\end{align*}
\item (Lower bound of probability for each level) For each $\ell$, $P(\{\ell\}\times S) \ge \fc rL$. 
\end{enumerate}
Then the following hold.
\begin{enumerate}
\item
(Cheeger constant) The Cheeger constant 
satisfies $\Phi \ge \fc{\ga r}{2L\max\{D_{\textup{high}}, D_{\textup{adj}}\}}$.
\item 
(Poincar\'e constant) The associated Dirichlet form satisfies a Poincar\'e inequality with constant $\CP\le \fc{8L^2\max\{D_{\textup{high}}, D_{\textup{adj}}\}^2}{\ga^2 r^2}$.
\end{enumerate}
\end{lem}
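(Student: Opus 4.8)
The plan is to establish the Cheeger bound (Part 1) by hand and get the Poincar\'e bound (Part 2) essentially for free. Indeed, Part 2 is immediate from Part 1: the chain is reversible, so the Cheeger inequality gives $\Gap(\cal M)\ge \Phi^2/2$, hence $\CP=1/\Gap(\cal M)\le 2/\Phi^2$, and substituting $\Phi\ge \fc{\ga r}{2L\max\{D_{\textup{high}},D_{\textup{adj}}\}}$ yields exactly $\CP\le \fc{8L^2\max\{D_{\textup{high}},D_{\textup{adj}}\}^2}{\ga^2 r^2}$. So essentially all the work is in Part 1.

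For Part 1, write $D=\max\{D_{\textup{high}},D_{\textup{adj}}\}$, fix $A\subeq[L]\times S$ with $P(A)\le\rc2$, and set $A_\ell=\set{j}{(\ell,j)\in A}$, $a_\ell=P_\ell(A_\ell)$, $\pi_\ell=P(\{\ell\}\times S)\ge r/L$, so $P(A)=\sum_\ell a_\ell\pi_\ell$. I would lower bound $Q(A,A^c)$ using two disjoint families of transitions that condition 2 controls: \emph{(a) level-$1$ moves}, contributing at least $\sum_{i\in A_1,j\in A_1^c}P((1,i))\,T((1,i),(1,j))\ge\fc{\pi_1}{D_{\textup{high}}}a_1(1-a_1)$; and \emph{(b) ladder moves} inside a single temperature column, where for each $\ell<L$ and each $j$ lying in exactly one of $A_\ell,A_{\ell+1}$, reversibility turns condition 2 into $P((\ell,j))\,T((\ell,j),(\ell+1,j))\ge\fc{1}{2D_{\textup{adj}}}\min\{P((\ell,j)),P((\ell+1,j))\}$. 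Condition 1 is the glue, and the clean consequence I would extract is that, since $P_1(j)\ge\ga P_\ell(j)$ for every $\ell$ and $\sum_\ell\pi_\ell=1$ with $\pi_1\ge r/L$, one has $\sum_\ell P((\ell,j))\le\ga^{-1}P_1(j)\le\fc{L}{\ga r}P((1,j))$ for \emph{every} column $j$: the whole column's mass is within a $\poly(L/(\ga r))$ factor of its level-$1$ entry. The same inequality applied to a sub-range of levels bounds adjacent (and all higher) entries in a column by the same factor, so the minimum appearing in (b) is at least a $\poly(\ga r/L)$ fraction of either entry, and also $Q(A,A^c)=Q(A^c,A)$ by reversibility so I may route flow in whichever direction is convenient.

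With these tools I would split on whether $a_1=P_1(A_1)\le\rc2$. When $a_1\le\rc2$: by the column-mass bound the $A$-mass carried by columns $j$ with $(1,j)\in A$ is, up to a $\poly(L/(\ga r))$ factor, at most $P(A\cap(\{1\}\times S))$, which is paid for by the level-$1$ flow $\ge\fc{\pi_1}{2D_{\textup{high}}}a_1=\fc{1}{2D_{\textup{high}}}P(A\cap(\{1\}\times S))$ (using $1-a_1\ge\rc2$); and the $A$-mass in any column $j$ with $(1,j)\notin A$ is discharged by walking that column up from level $1$ to its highest occupied level, whose ladder edges carry, via (b) and condition 1, a $\poly(\ga r/L)$ fraction of that column's $A$-mass. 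Summing over columns and using $P(A)\le\rc2$ gives $Q(A,A^c)/P(A)\ge\fc{\ga r}{2LD}$, with the constants tracked as in~\cite{ge2018simulated}. When $a_1>\rc2$: then $P(A)\ge\pi_1 a_1>\pi_1/2$ is bounded below in terms of $\pi_1$, so it suffices to produce flow of order $\pi_1\cdot\poly(\ga r/L)$; since $Q(A,A^c)=Q(A^c,A)$, I would run the same dichotomy on $A^c$, which has level-$1$ conditional measure $1-a_1<\rc2$ and satisfies $P(A^c)\ge P(A)$, and conclude identically. The routine parts---pinning down constants, checking $\ga r/L\le1$ so the minima simplify, and making sure the two flow families are not double-counted---are straightforward; the genuine obstacle is the column-discharging step in the $a_1\le\rc2$ case, namely showing that $A$-mass distributed arbitrarily up a single temperature column can be routed down through ladder edges with only a $\poly(L/(\ga r))$ loss \emph{uniformly} over the arrangement of that mass. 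This is precisely where condition 1 (overlap of all temperature pairs) and condition 3 (no temperature too light) are both needed, and I would follow the template of~\cite{ge2018simulated}.
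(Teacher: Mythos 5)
Your proposal is correct and follows essentially the same route as the paper: a direct conductance bound in which the boundary flow out of $A$ is split into level-$1$ moves (handled via $D_{\textup{high}}$ and $1-a_1\ge\rc 2$) and a single adjacent-temperature edge per column, with the reduction to the case $P_1(A_1)\le\rc 2$ via reversibility and $P(A)\le P(A^c)$, followed by Cheeger's inequality to get the Poincar\'e constant. The one step you flag as the genuine obstacle is in fact dispatched by the inequality you already derived, $\sum_{\ell\ge k}P((\ell,j))\le \fc{L}{\ga r}P((k,j))$, applied at the \emph{lowest} $A$-occupied level $\ell_j>1$ of each column (not the highest): the single crossing edge there carries flow at least $\rc{2D_{\textup{adj}}}\min\{P((\ell_j-1,j)),P((\ell_j,j))\}\ge \fc{\ga r}{2LD_{\textup{adj}}}P([\ell_j,L]\times\{j\})$, which already dominates that column's entire $A$-mass, so no further routing through the column or appeal to the cited template is needed.
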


\begin{proof}
Let $Q(x,B)$ denote $P(x)T(x,B)$ and $Q(A,B)$ denote $\sum_{x\in A} P(x)T(x,B)$. Note $Q(A,B) = Q(B,A)$ by reversibility. 
Let $A_\ell$ denote the sets such that $A=\cupo \ell L \{\ell\}\times A_\ell$, i.e., $A_\ell$ is the $\ell$th layer of $A$.

To prove the bound on the Cheeger constant, for each $A$, it suffices to bound either $\fc{Q(A,A^c)}{P(A)}$ or $\fc{Q(A^c,A)}{P(A^c)}$. Without loss of generality, we suppose that $P_1(A_1)\le \rc 2$. 
For each $j$, let $\ell_j$ denote the smallest $\ell$ such that $(\ell,j)\in A$. To lower bound $Q(A,A^c)$, we consider the contributions from $n$ such that $\ell_j>1$ and $\ell_j=1$ separately. 
\begin{enumerate}
\item
$\ell_j>1$: We have 
\begin{align*}
Q((\ell_j,j), A^c)&\ge 
P((\ell_j,j))  T((\ell_j,j), (\ell_{j}-1,j))\\
&\ge  P((\ell_j,j)) \rc{2D_{\textup{adj}}} \min\bc{\fc{P((\ell_j - 1, j))}{P((\ell_j , j))}, 1}\\
&= \rc{2D_{\textup{adj}}} \min\{P((\ell_j - 1, j)),P((\ell_j , j))\}\\
&\ge \fc{\ga r}{2LD_{\textup{adj}}} P([\ell_j,L] \times \{j\}).
\end{align*}
\item 
$\ell_j=1$: Note $A_1=\set{j}{\ell_j=1}$. We will bound $Q(\{1\}\times A_1, A^c)$ by looking at transitions within $\{1\}\times S$. We have
\begin{align*}
Q(\{1\}\times A_1, A^c) &\ge \sum_{j\in A_1} P((1,j)) T((1,j), \{1\}\times A_1^c)\\
&\ge \sum_{j\in A_1} P((1,j)) \fc{P(\{1\}\times A_1^c)}{D_{\textup{high}}}\\
&\ge \rc{2D_{\textup{high}}} \sum_{j\in A_1} P((1,j)) = \rc{2D_{\textup{high}}} P(\{1\}\times A_1)\\
&\ge \fc{\ga r}{2LD_{\textup{high}}} P([L]\times A_1).
\end{align*}
\end{enumerate}
Adding the two parts,
\begin{align*}
Q(A,A^c) &\ge \fc{\ga r}{2L\max\{D_{\textup{adj}},D_{\textup{high}}\}} 
P\pa{\pa{\bigcup_{j:\ell_j>1}[\ell_j,L]\times \{j\}} \cup ([L]\times A_1)}\\
&\ge \fc{\ga r}{2L\max\{D_{\textup{adj}},D_{\textup{high}}\}}  P(A).
\end{align*}

The bound on the Poincar\'e constant follows immediately from Cheeger's inequality: the spectral gap of the chain is at least $\rc 2 \Phi^2$, and the Poincar\'e constant is the inverse of the spectral gap.
\end{proof}

\section{Additional material related to examples}\label{a:example}
We give here the derivation of the posterior for the contextual SBM. Because we chose consistent notations between problems, the derivation of the posterior for the Gaussian mixture model is simply the special case of this argument where $\lambda = 0$ (so there is no graph/spiked Wigner information).

\paragraph{Posterior derivation in contextual SBM.}
Under the Gaussian contextual stochastic block model, we have 
\begin{align*} 
p(A,B \mid u,v) 
&\propto \exp\left(-\frac{n}{4} \ve{\frac{\lambda}{n} vv^\top  - A}_F^2 - \frac{p}{2} \ve{\sqrt{\frac{\mu}{n}} vu^\top  - B}_F^2\right) \\
&\propto \exp\left(\frac{\lambda}{2} \langle vv^\top , A \rangle - \fc n4\|A\|_F^2 + p\sqrt{\mu/n} \langle vu^\top , B \rangle -  \frac{p}{2} \|B\|_F^2 - \frac{p}{2} \mu \|u\|^2 \right)
\end{align*}
(note we dropped the term $\|vv^\top \|_F^2$ since it is a constant)
and so
\begin{align*}
p(u,v \mid A,B) &= p(A,B \mid u,v) p(u,v)/p(A,B)\\ &\propto \exp\left(\frac{\lambda}{2} \langle vv^\top , A \rangle + p\sqrt{\mu/n} \langle B^\top  v, u \rangle - \frac{p}{2}(1 + \mu) \|u\|^2 \right).
\end{align*}
Integrating over $u$, we have that the posterior distribution is
\begin{align*}
p(v \mid A,B) 
&\propto \int \exp\left(\frac{\lambda}{2} \langle vv^\top , A \rangle + p\sqrt{\mu/n} \langle B^\top  v, u \rangle - \frac{p}{2} (1 + \mu)\|u\|^2 \right) du \\
&= \int \exp\left(\frac{\lambda}{2} \langle vv^\top , A \rangle - \frac{p}{2}(1 + \mu) \ve{u - \rc{1 + \mu}\sfc{\mu}{n}B^\top v}^2 + \frac{p\mu}{2n(1 + \mu)}\|B^\top  v\|^2 \right) du \\
&\propto  \exp\left(\frac{\lambda}{2} \langle vv^\top , A \rangle + \frac{p\mu}{2n(1 + \mu)} \| B^\top  v \|_2^2  \right) \\
&\propto  \exp\left(\frac{\lambda}{2} \langle vv^\top , A \rangle + \frac{p\mu}{2n(1 + \mu)}\langle vv^\top , BB^\top  \rangle \right)
\end{align*}
This is an Ising model without external field.

\section{Computational hardness of sampling from rank-one models with large spike}\label{s:hardness}
Using the subset sum/number partitioning problem, we will show that sampling and (even crudely) approximating $\log Z$ from negative-definite rank-one models is $\mathsf{NP}$-hard. 
The $\mathsf{NP}$-hard problem we start with is given integers $a_1,\ldots,a_n$, determining whether there exists a partitioning into two sets such that the sum is equal. Equivalently, we seek to determine if there exists a sign vector $\sigma \in \{\pm 1\}^n$ such that
\[ \sum_i a_i \sigma_i = 0. \]
This is not the first time this problem is connected to statistical physics---see e.g., discussion in \cite{borgs2001phase,gamarnik2021algorithmic}.

\begin{thm}\label{thm:hardness}
Let $\beta \ge 1$ be arbitrary and fixed.
For any $a = (a_1,\ldots,a_n) \in \mathbb{Z}^n$, define the Ising model with probability mass function $P_a : \{\pm 1\}^n \to [0,1]$ given by
\[ P_a(\sigma) = \frac{1}{Z} \exp\left(-\beta n \langle a, \sigma \rangle^2\right) \]
If there exists a polynomial time randomized algorithm to approximately sample within TV distance $1/2$ from Ising models of this form for any $a_1,\ldots,a_n$,
then $\mathsf{NP} = \mathsf{RP}$.
Furthermore, for $\be\ge 2\log(2)$, it is $\mathsf{NP}$-hard to 
approximate the log partition function/free energy $\log Z$ of such a
model within an additive error of $\frac{\beta n}{2}$, and under the Exponential Time Hypothesis (ETH), it is impossible to do so in subexponential time in the presence of an external field $b \in \mathbb{Z}^n$, i.e., for models of the form
\[ P_{a,h}(\sigma) = \frac{1}{Z} \exp\left(-\beta n \langle a, \sigma \rangle^2 + \langle b, \sigma \rangle \right). \]
\end{thm}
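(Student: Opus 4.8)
The plan is to reduce from the \textsf{NP}-hard number partitioning/subset sum problem: given $a=(a_1,\dots,a_n)\in\Z^n$, decide whether there is $\sigma\in\{\pm1\}^n$ with $\langle a,\sigma\rangle=0$. The key quantitative observation is that because $a$ has integer entries, $\langle a,\sigma\rangle^2$ is either $0$ or at least $1$ for every $\sigma$. Hence the partition function $Z=\sum_\sigma \exp(-\beta n\langle a,\sigma\rangle^2)$ splits into a ``zero part'' $N_0:=|\{\sigma:\langle a,\sigma\rangle=0\}|$ contributing exactly $N_0$, and a ``nonzero part'' contributing at most $2^n e^{-\beta n}\le 2^n e^{-n}< 1$ when $\beta\ge1$. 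So $Z\in[N_0, N_0+1)$, and $Z\ge1$ iff a balanced partition exists (note $N_0$ is always even, and $\sigma=\mathbf 1$ is never a solution unless $\sum a_i=0$, but regardless the point is $Z\ge 1 \iff N_0\ge 1 \iff$ YES instance).

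First I would handle the sampling-to-decision reduction. Suppose $\mathcal A$ samples within TV distance $1/2$ from $P_a$ in randomized polynomial time. On a YES instance, the set $S_0=\{\sigma:\langle a,\sigma\rangle=0\}$ has $P_a(S_0)\ge N_0/(N_0+1)\ge 1/2$, so $\mathcal A$ outputs an element of $S_0$ with probability $\ge 1/2-1/2$; that bound is too weak, so I would instead sharpen: $P_a(S_0)=N_0/Z\ge N_0/(N_0+1)\ge 2/3$ using $N_0\ge 2$, hence $\mathcal A$ outputs $\sigma\in S_0$ with probability $\ge 2/3-1/2=1/6>0$. On a NO instance $S_0=\emptyset$, so $\mathcal A$ never outputs such a $\sigma$. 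Since membership $\langle a,\sigma\rangle=0$ is checkable in polynomial time, repeating $\mathcal A$ $O(1)$ (indeed $O(\log)$, or just constant for an \textsf{RP} one-sided test with constant gap) times gives an \textsf{RP} algorithm for an \textsf{NP}-complete problem, forcing $\mathsf{NP}=\mathsf{RP}$. I would double check the TV bound manipulations and, if the constant $2/3$ is awkward, simply amplify by taking polynomially many independent draws and declaring YES iff any draw lands in $S_0$: on YES instances at least one succeeds whp, on NO instances none ever can.

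Next, the $\log Z$ approximation hardness. From $Z\in[N_0,N_0+1)$ with $N_0\in\{0\}\cup\{2,4,\dots\}$: on a NO instance $Z<1$ and in fact $Z\le 2^n e^{-\beta n}=e^{n(\log 2-\beta)}\le e^{-\beta n/2}$ once $\beta\ge 2\log 2$, so $\log Z\le -\beta n/2$; on a YES instance $Z\ge N_0\ge 2>1$ so $\log Z>0$. These two ranges are separated by a gap of more than $\beta n/2$, so any algorithm estimating $\log Z$ within additive $\beta n/2$ (more carefully, strictly less than the gap; one can take additive error $\beta n/2 - c$ for small constant, or just note the gap exceeds $\beta n/2$) distinguishes YES from NO, giving \textsf{NP}-hardness. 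For the ETH-conditional subexponential lower bound with external field $h=b$: here I would start from the fact that number partitioning (equivalently subset-sum with target, or exact cover / 3-SAT via standard reductions) has no $2^{o(n)}$ algorithm under ETH, and the field term $\langle b,\sigma\rangle$ lets us encode a target: $P_{a,h}(\sigma)\propto\exp(-\beta n\langle a,\sigma\rangle^2+\langle b,\sigma\rangle)$, where taking $b$ to have bounded integer entries keeps $e^{\langle b,\sigma\rangle}$ in a polynomial-in-$2^n$ range so the same ``$Z\ge$ something large iff a zero-$\langle a,\sigma\rangle$ solution with good field value exists'' dichotomy goes through after rescaling; the field is used so that the reduction target is a problem with a tight ETH lower bound (e.g. reducing from 3-SAT on $m$ clauses gives instances of size $O(m)$ with no $2^{o(m)}$ algorithm, and any $2^{o(n)}$ estimator of $\log Z$ would violate ETH).

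The main obstacle I anticipate is the ETH part: unlike the first two statements, which only need the coarse bound $\langle a,\sigma\rangle^2\in\{0\}\cup[1,\infty)$ and are essentially immediate, the subexponential lower bound requires a reduction from a problem \emph{preserving instance size up to constants} (so that $2^{o(n)}$ for our sampling problem contradicts $2^{o(n)}$ for the source), and one must verify that introducing the external field $b$ with polynomially-bounded entries does not blow up the instance size or the numbers $a_i$ beyond polynomial, while still creating a detectable gap in $\log Z$ between satisfiable and unsatisfiable instances. Getting the bookkeeping right — the precise size of $b$, the resulting separation in $\log Z$, and citing the correct size-efficient \textsf{NP}-hardness source (number partitioning / subset sum is weakly \textsf{NP}-hard, so one likely routes through 3-SAT or a linear-size variant) — is where the real care is needed; the rest is the two-line integrality argument above.
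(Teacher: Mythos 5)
Your treatment of the first two claims is correct and essentially identical to the paper's: everything rests on the integrality observation that $\langle a,\sigma\rangle^2\in\{0\}\cup[1,\infty)$, so on YES instances the mass outside the solution set $S_0$ is at most $2^n e^{-\beta n}$ (your variant via $Z\in[N_0,N_0+1)$ and $N_0\ge 2$ is fine and gives the same conclusion), a TV-$1/2$ sampler therefore lands in the polynomial-time-checkable set $S_0$ with constant one-sided probability, which after amplification yields $\mathsf{NP}=\mathsf{RP}$; and the separation $\log Z\ge 0$ (YES) versus $\log Z\le n(\log 2-\beta)\le -\beta n/2$ (NO, for $\beta\ge 2\log 2$) gives the additive-error hardness. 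The factor-of-two bookkeeping you flag for the additive error is glossed in the paper in exactly the same way, so it is not a point of divergence.

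The genuine gap is in the ETH part, which you leave as a sketch at precisely the step the paper makes concrete. A field $b$ with \emph{bounded} integer entries, as you propose, only perturbs the exponent by $O(\|b\|_1)$ and reweights the measure within $\{\sigma:\langle a,\sigma\rangle=0\}$; it does not encode a subset-sum target and creates no new dichotomy in $\log Z$, and your fallback of rerouting through 3-SAT is unnecessary. The paper's encoding is a one-line completion of the square: to decide whether some $\sigma$ satisfies $\langle a,\sigma\rangle=t$, note that $-\beta n(\langle a,\sigma\rangle-t)^2=-\beta n\langle a,\sigma\rangle^2+\langle 2\beta n t\,a,\sigma\rangle-\beta n t^2$, so one takes the external field $h=2\beta n t\,a$ (entries of polynomial bit-length, not bounded magnitude). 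Then $\log Z$ equals the known offset $\beta n t^2$ plus $\log\sum_\sigma e^{-\beta n(\langle a,\sigma\rangle-t)^2}$, and the same integrality argument separates the YES and NO cases by at least $\beta n/2$ around that offset. Combined with the cited fact that subset sum on $n$ integers (with arbitrary weights) admits no $2^{o(n)}$-time algorithm under ETH --- which is exactly what resolves your worry about weak NP-hardness, since the lower bound is in the number of items $n$ --- a subexponential-time additive $\beta n/2$ approximation of $\log Z$ in the presence of a field would refute ETH; the reduction preserves $n$ exactly, so the instance-size bookkeeping you anticipated is trivial. Your instinct that the field encodes a target is right; what is missing is this explicit encoding.
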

\begin{proof}
Let $a_1,\ldots,a_n$ be an instance of the number partitioning problem. 
Consider the Ising model with probability mass function $P_a : \{\pm 1\}^n \to [0,1]$ given by for $\beta \ge 1$
\[ P_a(\sigma) = \frac{1}{Z} \exp\left(-\beta n \langle a, \sigma \rangle^2\right), \]
where $Z$ is the normalizing constant (partition function) so that the distribution has normalizing constant $1$.
Note that this is an Ising model with interaction matrix $-2\beta n aa^T$, which is negative definite and rank one as promised. If there exists at least one solution $\sum_i a_i \sigma_i = 0$ then 
\[ \Pr_{\sigma \sim P}\left(\sum_i a_i \sigma_i \ne 0\right) = \frac{\sum_{\sigma : \sum_i a_i \sigma_i \ne 0} e^{-\beta n \langle a, \sigma \rangle^2}}{\sum_{\sigma \in \{\pm 1\}^n} e^{-\beta n \langle a, \sigma \rangle^2}} \le 2^ne^{-\beta n}  \]
where we used that because the $a_i$ are integers, if $\sum_i a_i \sigma_i \ne 0$ then $\langle a, \sigma \rangle^2 \ge 1$, and also that if there exists a solution $\sum_i a_i \sigma_i = 0$ then the denominator is at least $1$.
Thus, except with exponentially small probability in $n$, a sample from $P$ will be a solution to the subset sum problem. In particular, it follows that a polynomial time (approximate) sampling algorithm implies $\mathsf{NP} = \mathsf{RP}$. 

Similarly, observe that if there exists a solution to the subset sum instance then $\log Z \ge 0$ whereas if there does not exist a solution, then $\log Z \le n[\log(2) - \beta] < -\frac{\beta n}{2}$, which establishes the $\mathsf{NP}$-hardness of approximating $\log Z$. The last statement in the Theorem follows because solving subset sum in time $2^{o(n)}$ is known to be ETH-hard (see discussion in \cite{abboud2022seth}), and the general subset problem (deciding if there exists $\sigma$ so that $\sum_i a_i \sigma_i = b$) can be directly encoded as minimizing
\[ \left(\langle a, \sigma \rangle - b\right)^2 = \langle a, \sigma \rangle^2 - 2b\langle a, \sigma \rangle + b^2, \]
which by the same argument as above implies that approximating $\log Z$ for the distribution $P_{a,h}$ with $h = 2b\beta na$ is ETH-hard.
\end{proof}


\end{document}